\documentclass[12pt]{article}
\usepackage[english]{babel}
\usepackage{bm}
\usepackage{fullpage}
\usepackage{cite}

\usepackage{newpxtext,newpxmath}

\usepackage[utf8]{inputenc}
\usepackage{amsthm}
\usepackage{amssymb}
\usepackage{amsmath}
\usepackage{bbold}
\usepackage{bbm}
\usepackage[pdftex, backref=page]{hyperref}
\usepackage{braket}
\usepackage{dsfont}
\usepackage{mathdots}
\usepackage{mathtools}
\usepackage{enumerate}
\usepackage[shortlabels]{enumitem}
\usepackage{csquotes}
\usepackage{stmaryrd}
\usepackage[cal=boondox]{mathalfa}
\usepackage{graphicx}
\usepackage{stackengine}
\usepackage{scalerel}
\usepackage{tensor}       %\tensor[_n]{\braket{j|\psi}}{_{1\ldots n}}
\usepackage{array}
\usepackage{makecell}
\newcolumntype{x}[1]{>{\centering\arraybackslash}p{#1}}
\usepackage{tikz}
\usepackage{pgfplots}
\usetikzlibrary{shapes.geometric, shapes.misc, positioning, arrows, arrows.meta, decorations.pathreplacing, decorations.pathmorphing, patterns, angles, quotes, calc}
\usepackage{booktabs}
\usepackage{xfrac}
\usepackage{siunitx}
\usepackage{centernot}
\usepackage{comment}
\usepackage{chngcntr}
\usepackage{caption}
\usepackage{subcaption}

\newcommand{\Sym}{{\rm Sym}}
\newtheorem{thm}{Theorem}
\newtheorem*{thm*}{Theorem}

\newtheorem*{prop*}{Proposition}
\newtheorem{lemma}[thm]{Lemma}
\newtheorem*{lemma*}{Lemma}
\newtheorem{cor}[thm]{Corollary}
\newtheorem*{cor*}{Corollary}

\newtheorem*{cj*}{Conjecture}
\newtheorem{Def}[thm]{Definition}
\newtheorem*{Def*}{Definition}

\newtheorem*{question*}{Question}

\newtheorem*{problem*}{Problem}
\newtheorem*{example*}{Example}

\makeatletter
\def\thmhead@plain#1#2#3{%
  \thmname{#1}\thmnumber{\@ifnotempty{#1}{ }\@upn{#2}}%
  \thmnote{ {\the\thm@notefont#3}}}
\let\thmhead\thmhead@plain
\makeatother

\theoremstyle{definition}
\newtheorem{rem}[thm]{Remark}

\newcommand{\bb}{\begin{equation}\begin{aligned}\hspace{0pt}}
\newcommand{\bbb}{\begin{equation*}\begin{aligned}}
\newcommand{\ee}{\end{aligned}\end{equation}}
\newcommand{\eee}{\end{aligned}\end{equation*}}

\newcommand{\ketbra}[1]{\ket{#1}\!\!\bra{#1}}

\renewcommand{\epsilon}{\varepsilon}

\newcommand{\id}{\mathds{1}}

\DeclareMathOperator{\Tr}{Tr}

\DeclareMathAlphabet{\pazocal}{OMS}{zplm}{m}{n}

\DeclareMathOperator{\supp}{supp}

\DeclareMathOperator{\Id}{id}

\newcommand{\lsmatrix}{\left(\begin{smallmatrix}}
\newcommand{\rsmatrix}{\end{smallmatrix}\right)}

\stackMath

\stackMath

\makeatletter
\newcommand*\rel@kern[1]{\kern#1\dimexpr\macc@kerna}
\newcommand*\widebar[1]{%
  \begingroup
  \def\mathaccent##1##2{%
    \rel@kern{0.8}%
    \overline{\rel@kern{-0.8}\macc@nucleus\rel@kern{0.2}}%
    \rel@kern{-0.2}%
  }%
  \macc@depth\@ne
  \let\math@bgroup\@empty \let\math@egroup\macc@set@skewchar
  \mathsurround\z@ \frozen@everymath{\mathgroup\macc@group\relax}%
  \macc@set@skewchar\relax
  \let\mathaccentV\macc@nested@a
  \macc@nested@a\relax111{#1}%
  \endgroup
}

\counterwithin*{equation}{part}
\counterwithin*{thm}{part}
\counterwithin*{figure}{part}

\tikzset{meter/.append style={draw, inner sep=10, rectangle, font=\vphantom{A}, minimum width=30, line width=.8, path picture={\draw[black] ([shift={(.1,.3)}]path picture bounding box.south west) to[bend left=50] ([shift={(-.1,.3)}]path picture bounding box.south east);\draw[black,-latex] ([shift={(0,.1)}]path picture bounding box.south) -- ([shift={(.3,-.1)}]path picture bounding box.north);}}}
\tikzset{roundnode/.append style={circle, draw=black, fill=gray!20, thick, minimum size=10mm}}
\tikzset{squarenode/.style={rectangle, draw=black, fill=none, thick, minimum size=10mm}}

\definecolor{Blues5seq1}{RGB}{239,243,255}
\definecolor{Blues5seq2}{RGB}{189,215,231}
\definecolor{Blues5seq3}{RGB}{107,174,214}
\definecolor{Blues5seq4}{RGB}{49,130,189}
\definecolor{Blues5seq5}{RGB}{8,81,156}

\definecolor{Greens5seq1}{RGB}{237,248,233}
\definecolor{Greens5seq2}{RGB}{186,228,179}
\definecolor{Greens5seq3}{RGB}{116,196,118}
\definecolor{Greens5seq4}{RGB}{49,163,84}
\definecolor{Greens5seq5}{RGB}{0,109,44}

\definecolor{Reds5seq1}{RGB}{254,229,217}
\definecolor{Reds5seq2}{RGB}{252,174,145}
\definecolor{Reds5seq3}{RGB}{251,106,74}
\definecolor{Reds5seq4}{RGB}{222,45,38}
\definecolor{Reds5seq5}{RGB}{165,15,21}

\allowdisplaybreaks

\usepackage[most,breakable]{tcolorbox}
	{\expandafter\ifstrequal\expandafter{#1}{orange}{\begin{tcolorbox}[colback=red!15,colframe=orange!15,breakable,enhanced]}{\begin{tcolorbox}[colback=Blues5seq1,colframe=Blues5seq5,breakable,enhanced]}}%
	{\end{tcolorbox}}

\usepackage{authblk}
\newcommand{\Ran}{{\rm{Ran}}\,}

\allowdisplaybreaks[1] 
\usepackage{graphicx,import,xcolor,transparent}
\usepackage{centernot}

\begin{document}

\title{Robustness of Entanglement Manipulation for almost i.i.d.\ sources\ }
\author{Nilanjana Datta}

\affil{University of Cambridge\\
Department of Applied Mathematics and Theoretical Physics\\
University of Cambridge\\
Cambridge CB3 0WA, United Kingdom
}

\date{}

\maketitle
\begin{abstract}
We study the robustness of asymptotic entanglement manipulation beyond
the exact i.i.d.\ regime, focusing on Mazzola--Sutter--Renner (MSR)
almost i.i.d.\ sources, which allow a sublinear number of deviations from a tensor-power structure. For pure MSR sources along a bipartite reference
state \(\ket{\phi}_{AB}\), we prove that the entanglement concentration
rate is robust: every rate below the entropy of entanglement
\(S(\phi_A)\) remains achievable. Moreover, this can be done by a single
Schur--Weyl concentration protocol that is universal within the MSR
class, depending only on the reference state and not on the particular
source sequence. For mixed MSR sources along a reference state $\rho_{AB}$, we prove a source-dependent entanglement-distillation achievability
result: every rate below the coherent information \(I(A\rangle B)_\rho\) of the reference state is achievable, although the entanglement distillation protocol may depend on the particular MSR source sequence.
 For the reverse task of entanglement dilution, we prove a rate-robustness theorem: the asymptotic entanglement cost of
any MSR target sequence along \(\rho_{AB}\) is at most
\(E_F^\infty(\rho_{AB})\), the regularized entanglement of formation of
the reference state.  To establish these results, we prove structural and
entropic properties of MSR almost i.i.d\ sequences which may be useful in other
information-theoretic settings. Thus, for the achievability statements considered here, MSR almost i.i.d.\ perturbations exhibit the same asymptotic behaviour as their i.i.d.\ reference states, despite allowing sublinear deviations from a tensor-power structure.
\end{abstract}

\section{Introduction}
Entanglement is a fundamental non-local resource in quantum information
theory, playing a central role in quantum teleportation, superdense
coding, quantum cryptography and quantum error correction. Since transformations by local operations and classical communication (LOCC) do not
increase entanglement, as quantified by entanglement monotones, a central problem in
entanglement theory is to determine the optimal rates at which bipartite states can be
converted to and from maximally entangled states under LOCC.
In entanglement distillation, many copies of a shared bipartite state are
converted into maximally entangled states; in the special case of pure
states, this task is usually called entanglement concentration. The
reverse task, entanglement dilution, asks for the minimum rate at which
maximally entangled states must be consumed in order to prepare a target
state. The corresponding optimal asymptotic rates are the distillable
entanglement \(E_D\) and the entanglement cost \(E_C\), respectively.

For i.i.d.\ pure-state sources, entanglement concentration and dilution
are both governed by the entropy of entanglement \(S(\rho_A)\), where
\(\rho_A\) is either marginal of the bipartite pure state~\cite{BennettBernsteinPopescuSchumacher1996,LoPopescu1999}. For mixed
states, the entanglement cost is given by the regularized entanglement of
formation \(E_F^\infty\), whereas the distillable entanglement is
generally harder to characterize and may be strictly smaller than the
entanglement cost
\cite{BennettDiVincenzoSmolinWootters1996,HaydenHorodeckiTerhal2001}.
These tasks have also been studied beyond the first-order i.i.d.\
setting, including information-spectrum formulations for arbitrary
sequences of bipartite states
\cite{Hayashi2006, BowenDatta2006, bowen2011entanglement, bowen2008asymptotic},
as well as finite-blocklength and second-order refinements of
entanglement concentration, dilution and related pure-state conversion
problems
\cite{KumagaiHayashi2013,DattaLeditzky2014,FangWangTomamichelDuan2017}.

The i.i.d.\ assumption is one of the central idealizations in quantum
Shannon theory: asymptotic rates are usually derived under the premise
that the underlying source is exactly a tensor power. This assumption is
mathematically powerful, but it can be unstable under structural perturbations. In realistic settings,
correlations, memory effects, local imperfections or preparation errors
may lead to sources which approximate tensor-power behaviour only in a
weaker structural or statistical sense. This raises a basic robustness
question: which information-theoretic conclusions survive when the exact
i.i.d.\ hypothesis is replaced by an appropriate notion of almost i.i.d.\
behaviour? Recent work has shown that the answer depends crucially on the
chosen notion of approximation; being ``close to i.i.d.'' is not a single
mathematical condition, and different relaxations may have different
operational consequences
(see e.g.~\cite{girardi2026new, girardi2026quantum,
datta2026entropy}).

At present, three natural notions of almost i.i.d.\ quantum sources have
emerged. The weakest is the class of weakly almost i.i.d.\ sources, in
which fixed-size marginals converge, on average, to the corresponding
tensor powers of a reference state; this condition allows arbitrary
long-range correlations and multipartite entanglement, and is therefore
well suited to concentration statements depending only on local statistics
\cite{datta2026entropy}. An intermediate
notion is given by Wasserstein almost i.i.d.\ sources, where the deviation
from the tensor-power state is controlled in normalized quantum
Wasserstein distance, capturing the idea that only a small fraction of
local subsystems is significantly affected
\cite{girardi2026new}. The most restrictive of these
three notions is the almost i.i.d.\ class introduced by
Mazzola--Sutter--Renner, in which the state admits a permutation-invariant
extension supported on a subspace with only a prescribed number of
unrestricted tensor positions. These three classes form a strict
hierarchy: Mazzola--Sutter--Renner (MSR) almost i.i.d.\ states are Wasserstein
almost i.i.d., Wasserstein almost i.i.d.\ states are weakly almost i.i.d.,
and both inclusions are strict
\cite{girardi2026new}. In this paper, we focus on the
MSR class, whose additional structure makes it
possible to study the robustness of entanglement manipulation protocols
with quantitative control over sublinear deviations from tensor-power
structure.

\subsection{Main results}

The main goal of this paper is to understand the robustness of asymptotic entanglement manipulation protocols under almost i.i.d.\ perturbations of the underlying resource state. We focus on the class of MSR almost i.i.d.\ states introduced by Mazzola, Sutter and Renner, which provide a natural and mathematically tractable model of sources containing a sublinear number of defects.

Our first main result concerns entanglement concentration. For an i.i.d.\ source \(\ket{\phi}_{AB}^{\otimes n}\), the optimal concentration rate is given by the entropy of entanglement \(S(\rho_A)\), where \(\rho_A=\Tr_B\ket{\phi}\!\bra{\phi}_{AB}\). A natural question is whether this rate remains achievable when the source is replaced by an arbitrary pure MSR source along \(\ket{\phi}_{AB}\). More importantly, one may ask whether the concentration protocol itself can be chosen \emph{universally}, depending only on the reference state \(\ket{\phi}_{AB}\) and not on the particular MSR source.

Theorem~\ref{thm:universal-concentration-pi-pure-msr} answers this
question affirmatively. We prove that every rate below \(S(\rho_A)\) can
be achieved by a single concentration protocol that depends only on the
reference state \(\ket{\phi}_{AB}\), and whose error tends to zero
uniformly over the entire class of pure MSR sources along
\(\ket{\phi}_{AB}\). Thus, from the perspective of asymptotic
entanglement concentration, MSR perturbations do not alter either the
optimal rate or the protocol itself. The proof combines structural
properties of the MSR class with spectral-entropy rigidity,
Schur--Weyl duality and the Hayashi--Matsumoto concentration
protocol~\cite{hayashi2002universal}.

Our second main result concerns the reverse task of entanglement dilution. Here we consider arbitrary mixed MSR sources along a bipartite state \(\rho_{AB}\). Theorem~\ref{thm:MSR-dilution} shows that the asymptotic dilution cost remains bounded above by the regularized entanglement of formation \(E_F^\infty(\rho_{AB})\) of the reference state. In other words, the standard i.i.d.\ achievability bound for entanglement dilution continues to hold throughout the MSR class. This shows that the resources required to create the state are asymptotically unaffected by the presence of a sublinear number of defects.

{Several auxiliary results obtained in the proof may be useful beyond the
specific entanglement-manipulation tasks considered here. We establish
structural stability properties of the MSR class, including stability
under local tensor-power channels, taking marginals, and blocking operations. We
also prove entropy-rigidity estimates showing that the relevant spectral
entropy rates of MSR sources coincide with the corresponding entropy
quantities of the reference state. Finally, for the entanglement dilution theorem, we
develop a cq-lifting argument which relates an MSR target source to the
ensemble description entering the information-spectrum formula for
entanglement cost.

These results show that MSR almost i.i.d.\ sources preserve the asymptotic
entanglement content of the underlying i.i.d.\ reference state under
sublinear deviations from tensor-power structure. For pure-state
entanglement concentration, the robustness is stronger: not only does the
optimal rate remain achievable, but a single Schur--Weyl protocol (of Hayashi and Matsumoto~\cite{hayashi2002universal}) works universally over the whole MSR class along the fixed reference state. For
mixed-state dilution, the asymptotic cost remains bounded by the
regularized entanglement of formation of the reference state, while mixed
state distillation admits a source-dependent coherent-information
achievability result.
}
\subsection*{Layout of the paper:}
After introducing the necessary notation and mathematical preliminaries, we define MSR almost i.i.d.\ states and sources in Section~\ref{sec:def-msr}. We prove several important properties of MSR sources in Section~\ref{sec:props-msr}: their structural stability properties and the complexity estimates for their defect spaces are treated in Section~\ref{subsec:struct}, while the rigidity estimates for their entropic quantities are established in Section~\ref{subsec:entropy-rigid}. In Section~\ref{sec:ent-manipulate} we introduce the entanglement manipulation tasks of distillation, concentration and dilution for general sequences of states, and define the corresponding operational quantities, namely distillable entanglement and entanglement cost. Entanglement concentration for pure MSR almost i.i.d.\ sources is studied in Section~\ref{sec:ent-conc}; the main result there is a universal concentration theorem for the whole pure MSR class along a fixed reference state, stated as Theorem~\ref{thm:universal-concentration-pi-pure-msr}. This section also contains a source-dependent achievability result for entanglement distillation from mixed MSR sources (see Theorem~\ref{thm:robust-distillation-achievability}). Section~\ref{sec:msr-ent-dil} deals with entanglement dilution, where the target is a sequence of MSR almost i.i.d.\ states along a fixed mixed state; see Theorem~\ref{thm:MSR-dilution}. We end with a conclusion in Section~\ref{sec:conclude} and some open questions in Section~\ref{sec:open}. Proofs of many of the technical lemmas are deferred to the appendices in order to improve the readability of the paper.

\section{Notations and Definitions}\label{sec:not-def}

Let \(\mathcal H\) be a finite-dimensional Hilbert space. We write
\(\mathcal L(\mathcal H)\) for the algebra of linear operators on
\(\mathcal H\), and \(\mathcal D(\mathcal H)\) for the set of density
operators on \(\mathcal H\), i.e.\ positive semidefinite operators of unit
trace. A pure state is a rank-one projection \(\ket{\psi}\!\bra{\psi}\),
where \(\ket{\psi}\in\mathcal H\) is a unit vector. The identity operator
on \(\mathcal H\) is denoted by \(\id_{\mathcal H}\), or simply by
\(\id\) when the underlying Hilbert space is clear.  We label different quantum systems by capital Roman letters ($A,B,F, A'$, etc.) and often use these letters interchangeably with the corresponding Hilbert spaces.

An operator \(\Pi\in\mathcal L(\mathcal H)\) is an orthogonal projection
if \(\Pi=\Pi^\dagger=\Pi^2\). Its range is denoted by
\(\Ran(\Pi)\). For a positive semidefinite operator \(X\), we write
\(\supp X:=\Ran(\{X>0\})\), where \(\{X>0\}\) denotes the spectral
projection onto the strictly positive part of the spectrum of \(X\). More
generally, if \(Q=\sum_i\lambda_i\ket{\psi_i}\!\bra{\psi_i}\) is
self-adjoint and \(r\in\mathbb R\), we use the notation
\(\{Q>r\}:=\sum_{\lambda_i>r}\ket{\psi_i}\!\bra{\psi_i}\), and similarly
for \(\{Q\le r\}\), \(\{Q\ge r\}\), and \(\{Q<r\}\).

For \(p\in[1,\infty)\), the Schatten \(p\)-norm of
\(X\in\mathcal L(\mathcal H)\) is
\(\|X\|_p:=(\Tr |X|^p)^{1/p}\), where
\(|X|:=\sqrt{X^\dagger X}\). The Schatten \(\infty\)-norm is the operator
norm, denoted by \(\|X\|_\infty\). In particular,
\(\|X\|_1=\Tr |X|\) is the trace norm and
\(\|X\|_2=(\Tr X^\dagger X)^{1/2}\) is the Hilbert--Schmidt norm.

The von Neumann entropy of \(\rho\in\mathcal D(\mathcal H)\) is
\(S(\rho):=-\Tr\rho\log\rho\). Throughout the paper, all logarithms are
taken to base \(2\). For a bipartite state $\rho_{AB} \in {\mathcal D}({\mathcal H}_A \otimes {\mathcal H}_B)$, the conditional entropy is given by $S(A|B)_\rho = S(\rho_{AB}) - S(\rho_B)$, where $\rho_B:= \Tr_B \rho_{AB}$ is the reduced state of the system $B$.

A quantum channel from \(\mathcal H\) to \(\mathcal K\) is a completely
positive trace-preserving linear map
\(\pazocal E:\mathcal L(\mathcal H)\to\mathcal L(\mathcal K)\). We shall
also use the same symbol for its restriction to states. The identity channel
is denoted by \({\rm id}\).

The fidelity between two states \(\rho,\sigma\in\mathcal D(\mathcal H)\)
is defined by
\bb
F(\rho,\sigma)
:=
\|\sqrt{\rho}\sqrt{\sigma}\|_1 .
\ee
With this convention, \(0\le F(\rho,\sigma)\le1\), and
\(F(\rho,\sigma)=1\) if and only if \(\rho=\sigma\). The purified distance
is
\bb
P(\rho,\sigma)
:=
\sqrt{1-F(\rho,\sigma)^2}.
\ee
We shall use the Fuchs--van de Graaf inequalities in the form
\bb
1-F(\rho,\sigma)
\le
\frac12\|\rho-\sigma\|_1
\le
P(\rho,\sigma).
\ee

We next recall the min-entropy quantities~\cite{Renner2005, Tomamichel2016} used below. For a state
\(\rho_A\in\mathcal D(\mathcal H_A)\), the min-entropy is
\bb
H_{\min}(A)_\rho
:=
-\log\|\rho_A\|_\infty .
\ee
Equivalently, \(H_{\min}(A)_\rho\) is the largest real number \(\lambda\)
such that \(\rho_A\le 2^{-\lambda}\id_A\). 
For a state \(\rho\), we also use the notation \(H_{\min}(\rho):=-\log\|\rho\|_\infty\).

For a bipartite state
\(\rho_{AB}\), the conditional min-entropy is
\bb
H_{\min}(A|B)_\rho
:=
-\inf_{\sigma_B\in\mathcal D(\mathcal H_B)}
D_{\max}\!\left(
\rho_{AB}\middle\|\id_A\otimes\sigma_B
\right),
\ee
where
\bb
D_{\max}(\omega\|\tau)
:=
\inf\{\lambda\in\mathbb R:\omega\le 2^\lambda\tau\}
\ee
whenever \(\supp\omega\subseteq\supp\tau\), and is \(+\infty\) otherwise~\cite{Datta08, Renner2005}.
Equivalently, \(H_{\min}(A|B)_\rho\) is the largest real number
\(\lambda\) for which there exists a state \(\sigma_B\) such that
\(\rho_{AB}\le 2^{-\lambda}\id_A\otimes\sigma_B\).

For \(\rho\in\mathcal D(\mathcal H)\) and \(\varepsilon,\delta\ge0\), we
write
\bb
B_{\mathrm P}^{\varepsilon}(\rho)
:=
\{\sigma\in\mathcal D(\mathcal H):P(\sigma,\rho)\le\varepsilon\},
\qquad
B_1^\delta(\rho)
:=
\{\sigma\in\mathcal D(\mathcal H):\|\sigma-\rho\|_1\le\delta\}.
\ee
The \(\varepsilon\)-smooth min-entropy with respect to purified distance is
\bb
H_{\min}^{\varepsilon,\mathrm P}(A)_\rho
:=
\sup_{\sigma_A\in B_{\mathrm P}^{\varepsilon}(\rho_A)}
H_{\min}(A)_\sigma,
\ee
and the trace-norm smoothed version is defined analogously by optimizing
over \(B_1^\delta(\rho_A)\)\footnote{That is, \(H_{\min}^{\delta,\|\cdot\|_1}(A)_\rho
:=
\sup_{\sigma_A\in B_1^\delta(\rho_A)}
H_{\min}(A)_\sigma\).}. Similarly, for a bipartite state
\(\rho_{AB}\), the purified-distance smooth conditional min-entropy is
\bb
H_{\min}^{\varepsilon,\mathrm P}(A|B)_\rho
:=
\sup_{\sigma_{AB}\in B_{\mathrm P}^{\varepsilon}(\rho_{AB})}
H_{\min}(A|B)_\sigma,
\ee
with the trace-norm version obtained by replacing
\(B_{\mathrm P}^{\varepsilon}(\rho_{AB})\) by \(B_1^\delta(\rho_{AB})\).

By the Fuchs--van de Graaf inequalities~\cite{FuchsVanDeGraaf1999},
\(\|\rho-\sigma\|_1\le2P(\rho,\sigma)\). Hence
\bb\label{eq:ball-inclusion}
B_{\mathrm P}^{\varepsilon}(\rho)
\subseteq
B_1^{2\varepsilon}(\rho).
\ee
Consequently,
\bb\label{eq:h-min}
H_{\min}^{2\varepsilon,\|\cdot\|_1}(A)_\rho
\ge
H_{\min}^{\varepsilon,\mathrm P}(A)_\rho,
\qquad
H_{\min}^{2\varepsilon,\|\cdot\|_1}(A|B)_\rho
\ge
H_{\min}^{\varepsilon,\mathrm P}(A|B)_\rho.
\ee
Thus any lower bound on the min-entropy smoothed with respect to the
purified distance immediately yields the corresponding lower bound for
trace-norm smoothing, with smoothing parameter enlarged from
\(\varepsilon\) to \(2\varepsilon\).

We finally recall the key entropic quantities of the quantum information-spectrum framework~\cite{HayashiNagaoka2003, BowenDattaISIT2006} (see also, e.g.~\cite{Hayashi2017} and references therein). Let
\(\bm\rho=(\rho_n)_{n\ge1}\) and
\(\bm\sigma=(\sigma_n)_{n\ge1}\) be sequences of positive semidefinite
operators acting on finite-dimensional Hilbert spaces
\((\mathcal H_n)_{n\ge1}\), with each \(\rho_n\) a state. Following the
formulation of~\cite{bowen2006beyond}, the spectral sup-divergence rate and spectral
inf-divergence rate are defined by
\bb
\overline D(\bm\rho\|\bm\sigma)
:=
\inf\left\{
\gamma\in\mathbb R:
\lim_{n\to\infty}
\Tr\!\left[
\{\rho_n-2^{n\gamma}\sigma_n>0\}\rho_n
\right]
=0
\right\},
\ee
and
\bb
\underline D(\bm\rho\|\bm\sigma)
:=
\sup\left\{
\gamma\in\mathbb R:
\lim_{n\to\infty}
\Tr\!\left[
\{\rho_n-2^{n\gamma}\sigma_n>0\}\rho_n
\right]
=1
\right\}.
\ee

The corresponding spectral entropy rates are obtained by comparing the
state sequence with the identity sequence, i.e.\ if
\(\bm\rho=(\rho_n)_n\) is a sequence of states on
\((\mathcal H_n)_n\), and \(\bm\id=(\id_n)_n\), where
\(\id_n\) is the identity operator on \(\mathcal H_n\), then
\bb\label{eq:sup-spectral-entropy-rate}
\overline S(\bm\rho)
:=
-\underline D(\bm\rho\|\bm\id),
\qquad
\underline S(\bm\rho)
:=
-\overline D(\bm\rho\|\bm\id).
\ee
We use overlines for spectral sup-rates and underlines for spectral inf-rates.
\begin{rem} Equivalently, \(\overline S(\bm\rho)\) is the infimum over all real
\(R\) for which there exists a sequence of projections \((\Pi_n)_n\)
satisfying \(\Tr(\Pi_n\rho_n)\to1\) as \(n\to\infty\), and
\(\Tr\Pi_n\le 2^{nR}\) for all sufficiently large \(n\). 
(We include an explanation of this remark in Appendix~\ref{app:info-spec} for the convenience of readers who are unfamiliar with the Information Spectrum Framework.)
\end{rem}

We refer to \(\overline S(\bm\rho)\) and
\(\underline S(\bm\rho)\), respectively, as the spectral sup-entropy rate and the spectral inf-entropy rate of the source
\(\bm\rho\). Note that \(\underline S(\bm\rho)\le\overline S(\bm\rho)\).
\smallskip

For a sequence of bipartite states
\(\bm\rho_{AB}=(\rho_{A^nB^n})_n\), the spectral conditional sup- and inf-entropy rates
are, respectively, defined by
\bb\label{eq:cond-sup-spec}
\overline S(A|B)_{\bm\rho}
:=
-\underline D\!\left(
\bm\rho_{AB}
\middle\|
(\id_{A^n}\otimes\rho_{B^n})_n
\right),
\ee
and
\bb
\underline S(A|B)_{\bm\rho}
:=
-\overline D\!\left(
\bm\rho_{AB}
\middle\|
(\id_{A^n}\otimes\rho_{B^n})_n
\right),
\ee
where \(\rho_{B^n}:=\Tr_{A^n}\rho_{A^nB^n}\). These quantities reduce to
the usual conditional entropy \(S(A|B)_\rho=S(\rho_{AB})-S(\rho_B)\) in
the i.i.d.\ case \(\rho_{A^nB^n}=\rho_{AB}^{\otimes n}\).

\medskip

\subsection{MSR almost i.i.d.\ states}\label{sec:def-msr}
Let \(S_n\) denote the permutation group on \(\{1,\ldots,n\}\), and let
\({\mathcal D}(\mathcal H)\) be the set of density operators acting on a finite-dimensional Hilbert space \(\mathcal H\).
The symmetric subspace of \(\mathcal H^{\otimes n}\) is defined by
\bb
\mathrm{Sym}^n(\mathcal H)
:=
\operatorname{span}
\{
\ket{\phi}^{\otimes n}:\ket{\phi}\in\mathcal H
\}.
\ee

Given a vector \(\ket{\theta}\in\mathcal H\) and an integer \(m\le n\), let
\bb
\mathcal V(\mathcal H^{\otimes n},\ket{\theta}^{\otimes m})
:=
\left\{
\pi\!\left(
\ket{\theta}^{\otimes m}\otimes\ket{\Omega}
\right)
:
\pi\in S_n,\ 
\ket{\Omega}\in\mathcal H^{\otimes(n-m)}
\right\},
\ee
that is, the set of vectors obtained by permuting a tensor product consisting of
\(m\) copies of \(\ket{\theta}\) together with an arbitrary state on the remaining
\(n-m\) subsystems. We further define
\bb
\mathrm{Sym}^n(\mathcal H,\ket{\theta}^{\otimes m})
:=
\mathrm{Sym}^n(\mathcal H)
\cap
\operatorname{span}
\mathcal V(\mathcal H^{\otimes n},\ket{\theta}^{\otimes m}).
\ee

The notion of almost i.i.d.\ states was originally introduced for pure symmetric states: a pure state
\(
\ket{\Psi^{(n)}}\in
\mathrm{Sym}^n(\mathcal H,\ket{\theta}^{\otimes(n-r)})
\)
was called a \(\binom{n}{r}\)-almost i.i.d.\ state in \(\ket{\theta}\).
To accommodate mixed states and more general correlated structures, we use the following broader definition.

\begin{Def}[(MSR almost i.i.d.\ states)]\label{def:MSR}
Let \(\mathcal H_A\) be a finite-dimensional Hilbert space,
let \(\sigma_A\in {\mathcal D}(\mathcal H_A)\),
and let \(n\in\mathbb N_+\) and \(0\le r\le n\).
A state
\(
\rho_{A_1^n}\in {\mathcal D}(\mathcal H_A^{\otimes n})
\)
is called a \(\binom{n}{r}\)-almost i.i.d.\ state along \(\sigma_A\)
if there exist a purification \(\ket{\theta}_{AE}\) of \(\sigma_A\) and an extension
\(\rho_{A_1^nE_1^n}\) of \(\rho_{A_1^n}\) such that:

\begin{enumerate}
\item[(i)]
\(\rho_{A_1^nE_1^n}\) is invariant under simultaneous permutations of the subsystem pairs \((A_i,E_i)\);

\item[(ii)]
\bb\label{eq:supp}
\operatorname{supp}(\rho_{A_1^nE_1^n})
\subseteq
\operatorname{span}
\mathcal V
\bigl(
\mathcal H_{AE}^{\otimes n},
\ket{\theta}_{AE}^{\otimes(n-r)}
\bigr).
\ee
\end{enumerate}
The set of all such states are denoted by\footnote{We use the notation of Mazzola--Sutter--Renner~\cite{mazzola2026almost} here. For a generalization of the MSR almost i.i.d.\ source see~\cite{girardi2026new}.}
$
{\mathcal S}^n(\mathcal H_A,\sigma_A^{\otimes(n-r)}).
$

In the following, we shall refer to such an extension $\rho_{A_1^n E_1^n}$ as an 
{\emph{MSR extension}} of the state $\rho_{A_1^n}$ along the purification \(\ket{\theta}_{AE}\) of $\sigma_A$.
\end{Def}
\smallskip

\noindent
Note that $\mathcal V
\bigl(
\mathcal H_{AE}^{\otimes n},
\ket{\theta}_{AE}^{\otimes(n-r)}
\bigr)$ is a set of vectors rather than a linear subspace; the corresponding linear subspace is obtained by taking its span.

\begin{rem}[(Unrestricted tensor positions and defect size)]
The parameter \(r\) measures the number of tensor positions that are not
constrained to carry the reference purification \(\ket{\theta}_{AE}\).
More precisely, each vector in
\(
\mathcal V
\bigl(
\mathcal H_{AE}^{\otimes n},
\ket{\theta}_{AE}^{\otimes(n-r)}
\bigr)
\)
has \(n-r\) tensor positions fixed to be in the state
\(\ket{\theta}_{AE}\), while the remaining \(r\) tensor positions are
arbitrary and may be jointly correlated. We refer to these remaining
positions as \emph{unrestricted tensor positions}, and to \(r\) as the
\emph{defect size}.
\end{rem}

\begin{rem}[(Defect spaces)]
For a purification \(\ket{\theta}_{AE}\) of \(\sigma_A\) and an integer
\(r\le n\), we define the associated \emph{defect space} by
\bb\label{eq:defect-space}
\mathcal M_n(\theta,r)
:=
\operatorname{span}
\mathcal V
\bigl(
\mathcal H_{AE}^{\otimes n},
\ket{\theta}_{AE}^{\otimes(n-r)}
\bigr).
\ee
Thus, if
\(
\rho_{A_1^n}\in
\mathcal S^n(\mathcal H_A,\sigma_A^{\otimes(n-r)})
\),
then by Definition~\ref{def:MSR} there exists an MSR extension
\(\rho_{A_1^nE_1^n}\) satisfying
\bb
\operatorname{supp}(\rho_{A_1^nE_1^n})
\subseteq
\mathcal M_n(\theta,r).
\ee
We shall frequently refer to \(\mathcal M_n(\theta,r)\) as the {\emph{defect space}} associated with the reference purification \(\ket{\theta}_{AE}\) and defect parameter \(r\).
\end{rem}

\begin{Def}[(MSR almost i.i.d.\ sources)]
\label{def:MSR-source}
Let \(\rho_A\in\mathcal D(\mathcal H_A)\), and let
\((r_n)_n\) be a sequence of nonnegative integers satisfying
\(r_n\le n\). A sequence of states
\(
\bm\rho_A:=(\rho_{A^n})_{n\ge1}
\),
with
\(
\rho_{A^n}\in\mathcal D(\mathcal H_A^{\otimes n})
\),
is called an MSR almost i.i.d.\ source along \(\rho_A\), with defect sizes
\((r_n)_n\), if for every \(n\),
\bb
\rho_{A^n}
\in
\mathcal S^n(\mathcal H_A,\rho_A^{\otimes(n-r_n)}).
\ee
In the asymptotic regime considered in this paper, we shall always assume
\(r_n=o(n)\).
\end{Def}

The same terminology will be used for bipartite sources
\(
\bm\rho_{AB}:=(\rho_{A^nB^n})_n
\)
by applying the preceding definition to the single composite system
\(AB\), with reference state \(\rho_{AB}\).

\begin{rem}[(Pure MSR sources)]
A sequence
\(
\bm\Psi_{AB}:=(\ket{\Psi_n}\!\bra{\Psi_n}_{A^nB^n})_{n}
\)
is called a pure MSR almost i.i.d.\ source along a bipartite pure state
\(\ket{\phi}_{AB}\), with defect sizes \((r_n)_n\), if
\bb\label{eq:pure-pi-msr}
\ket{\Psi_n}_{A^nB^n}
\in
\Sym^n\!\left(
\mathcal H_A\otimes\mathcal H_B,
\ket{\phi}_{AB}^{\otimes(n-r_n)}
\right)
\ee
for every \(n\). In particular, each \(\ket{\Psi_n}\) is invariant under
simultaneous permutations of the \(n\) bipartite tensor factors. 
\end{rem}

\begin{rem}
    (Terminology) For brevity we often refer to an MSR almost i.i.d.\ sequence of states (or source) along a reference state simply as an MSR sequence (or source) along a reference state.
\end{rem}
\section{Structural and entropic rigidity of MSR states}\label{sec:props-msr}

In this section we establish the basic stability and rigidity properties
of MSR almost i.i.d.\ states which will be used throughout the paper. The
first group of results concerns the behaviour of the MSR structure under
natural operations, such as local tensor-power channels, and gives
subexponential bounds on the size of the associated defect spaces. The
second group of results shows that, despite the possible presence of
sublinear deviations from the reference tensor-power structure, MSR
sources retain the relevant asymptotic spectral entropy properties of
their reference states.

\subsection{Structural stability and defect spaces}\label{subsec:struct}

\begin{lemma}[(MSR stability under local tensor-power channels)]
\label{lem:msr-stability-local-channels}
Let \(\bm\rho_{AB}:=(\rho_{A^nB^n})_n\) be MSR almost i.i.d.\ along
\(\rho_{AB}\), with defect sizes \(r_n=o(n)\). Let
\(\Lambda_A:\mathcal D(\mathcal H_A)\to\mathcal D(\mathcal H_{A'})\) be a quantum channel, and define
\bb
\omega_{A'^nB^n}
:=
(\Lambda_A^{\otimes n}\otimes\Id_{B^n})(\rho_{A^nB^n}),
\qquad
\omega_{A'B}
:=
(\Lambda_A\otimes\Id_B)(\rho_{AB}).
\ee
Then \(\bm\omega_{A'B}:=(\omega_{A'^nB^n})_n\) is MSR almost i.i.d.\ along
\(\omega_{A'B}\), with the same defect sizes \(r_n=o(n)\).
\end{lemma}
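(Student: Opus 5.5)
The plan is to push the MSR extension of \(\rho_{A^nB^n}\) through a Stinespring dilation of \(\Lambda_A\). This produces an extension of \(\omega_{A'^nB^n}\) that is supported on a defect space built from a purification of \(\omega_{A'B}\).

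Fix \(n\). By Definition~\ref{def:MSR} there is a purification \(\ket{\theta}_{ABE}\) of \(\rho_{AB}\) and an MSR extension \(\rho_{A^nB^nE^n}\). This extension is invariant under simultaneous permutations of the triples \((A_i,B_i,E_i)\), and its support lies in \(\mathcal M_n(\theta,r_n)\). Choose an isometry \(V:\mathcal H_A\to\mathcal H_{A'}\otimes\mathcal H_F\) with \(\Lambda_A(X)=\Tr_F VXV^\dagger\). Set \(\ket{\theta'}_{A'B(EF)}:=(V\otimes\id_{BE})\ket{\theta}_{ABE}\). This is a pure state, and its marginal on \(A'B\) is \((\Lambda_A\otimes\Id_B)(\rho_{AB})=\omega_{A'B}\). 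Hence \(\ket{\theta'}\) is a purification of \(\omega_{A'B}\) with purifying system \(E':=EF\).

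Next define the candidate extension
\[
\omega_{A'^nB^nE'^n}:=(V^{\otimes n}\otimes\id)\,\rho_{A^nB^nE^n}\,(V^{\otimes n}\otimes\id)^\dagger .
\]
Tracing out \(E'^n=E^nF^n\) gives \((\Lambda_A^{\otimes n}\otimes\Id_{B^n})(\rho_{A^nB^n})=\omega_{A'^nB^n}\), so this is an extension of the right state. Permutation invariance (i) holds because \(V^{\otimes n}\otimes\id\) commutes with every simultaneous permutation \(\pi\) of the composite sites \((A'_i,B_i,E'_i)\). Concretely, the permutation acting on input sites \((A_i,B_i,E_i)\) intertwines with the one acting on output sites \((A'_i,F_i,B_i,E_i)\).

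For the support condition (ii), the key observation is that for any product of isometries \(W^{\otimes n}\) we have
\[
W^{\otimes n}\pi\bigl(\ket{\theta}^{\otimes(n-r)}\otimes\ket{\Omega}\bigr)=\pi\bigl((W\ket{\theta})^{\otimes(n-r)}\otimes W^{\otimes r}\ket{\Omega}\bigr).
\]
Applying this with \(W=V\otimes\id_{BE}\), the isometry maps \(\mathcal V(\mathcal H_{ABE}^{\otimes n},\ket{\theta}^{\otimes(n-r_n)})\) into \(\mathcal V(\mathcal H_{A'BE'}^{\otimes n},\ket{\theta'}^{\otimes(n-r_n)})\). By linearity it also maps the spans into each other. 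Since \(\supp(W^{\otimes n}\rho W^{\dagger\otimes n})=W^{\otimes n}\supp(\rho)\), the support of the new extension lies in \(\mathcal M_n(\theta',r_n)\). The defect size \(r_n\) is unchanged, so \(\omega_{A'^nB^n}\in\mathcal S^n(\mathcal H_{A'B},\omega_{A'B}^{\otimes(n-r_n)})\) for every \(n\).

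The only delicate point is bookkeeping. One must regroup the output systems correctly, treating \(E'_i=E_iF_i\) as the \(i\)-th purifying site. One must also check that the definition allows any purification of the reference state, rather than a fixed one. Definition~\ref{def:MSR} does allow this, since it only asks that there exist a purification and an extension. With that in place, no estimates are needed and the argument is purely structural.
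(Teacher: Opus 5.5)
Your proposal is correct and follows essentially the same route as the paper. Both take a Stinespring isometry $V:A\to A'F$ and use $(V\otimes\id_{BE})\ket{\theta}_{ABE}$ as the purification of $\omega_{A'B}$, with purifying system $EF$. Both then conjugate the MSR extension by $V^{\otimes n}\otimes\id_{B^nE^n}$ and use the intertwining relation with simultaneous permutations to send spanning vectors of $\mathcal M_n(\theta,r_n)$ into those of $\mathcal M_n(\theta',r_n)$. The only difference is cosmetic: you take the purification supplied by Definition~\ref{def:MSR} at each $n$, whereas the paper fixes a purification and invokes Remark~2.4(a) of~\cite{mazzola2026almost} to obtain an MSR extension along it. Either suffices for this lemma.
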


\begin{proof} See Appendix~\ref{app:structure-msr-stability}.
\end{proof}

\begin{cor}[(MSR stability under taking marginals)]
\label{cor:msr-marginals}
Let \(\bm\rho_{AB}:=(\rho_{A^nB^n})_n\) be MSR almost i.i.d.\ along
\(\rho_{AB}\), with defect sizes \(r_n=o(n)\). Then the marginal source
\(\bm\rho_B:=(\rho_{B^n})_n\), where
\(\rho_{B^n}:=\Tr_{A^n}\rho_{A^nB^n}\), is MSR almost i.i.d.\ along
\(\rho_B:=\Tr_A\rho_{AB}\), with the same defect sizes.

\end{cor}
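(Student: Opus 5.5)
The plan is to derive the corollary directly from Lemma~\ref{lem:msr-stability-local-channels} by choosing the local channel to be the trace map. Take \(\mathcal H_{A'}:=\mathbb C\) (a one-dimensional system) and let \(\Lambda_A:=\Tr_A:\mathcal D(\mathcal H_A)\to\mathcal D(\mathbb C)\). This map is completely positive and trace preserving, so it is a legitimate quantum channel. Its \(n\)-fold tensor power is \(\Lambda_A^{\otimes n}=\Tr_{A^n}\). Hence
\[
\omega_{A'^nB^n}=(\Tr_{A^n}\otimes\Id_{B^n})(\rho_{A^nB^n})=\rho_{B^n},
\qquad
\omega_{A'B}=\Tr_A\rho_{AB}=\rho_B,
\]
once we identify \(\mathbb C^{\otimes n}\otimes\mathcal H_B^{\otimes n}\cong\mathcal H_B^{\otimes n}\) and \(\mathbb C\otimes\mathcal H_B\cong\mathcal H_B\). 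Lemma~\ref{lem:msr-stability-local-channels} then states that \((\omega_{A'^nB^n})_n\) is MSR almost i.i.d.\ along \(\omega_{A'B}\) with the same defect sizes \(r_n\), which is exactly the claim.

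The only point needing care is that this identification respects Definition~\ref{def:MSR}. The trivial factor \(A'\cong\mathbb C\) contributes nothing to the tensor structure. Purifications of \(\omega_{A'B}\) on \(A'BE\) correspond bijectively to purifications of \(\rho_B\) on \(BE\). Likewise, permutation invariance of the pairs \((A'_iB_i,E_i)\) is the same as permutation invariance of \((B_i,E_i)\), and the support condition~\eqref{eq:supp} transfers under the canonical unitary isomorphism \(\mathbb C\otimes\mathcal H\cong\mathcal H\). So the MSR extension supplied by the lemma is, after this relabelling, an MSR extension of \(\rho_{B^n}\) along a purification of \(\rho_B\).

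If one prefers to avoid the trivial system, a direct argument is also short, and essentially reproduces the lemma's proof in this special case. Take an MSR extension \(\rho_{A^nB^nE^n}\) along a purification \(\ket{\theta}_{ABE}\) of \(\rho_{AB}\). Regard \(\ket{\theta}\) as a purification of \(\rho_B\) with purifying system \(E':=AE\), and keep the same global state as an extension of \(\rho_{B^n}\). Permutation invariance of the pairs \((A_iB_i,E_i)\) is the same as invariance of \((B_i,E'_i)\), and the defect space \(\mathcal M_n(\theta,r_n)\) is unchanged by regrouping \(ABE\) as \(B\,E'\). This gives \(\rho_{B^n}\in\mathcal S^n(\mathcal H_B,\rho_B^{\otimes(n-r_n)})\).

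I expect no real obstacle. The only subtlety is this bookkeeping of how the tensor factors are regrouped into the system and purifying parts.
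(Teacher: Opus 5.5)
Your proposal is correct and matches the paper's proof: the paper also applies Lemma~\ref{lem:msr-stability-local-channels} with the one-dimensional-output channel \(\Lambda_A=\Tr_A\) and reads off \(\rho_{B^n}\) and \(\rho_B\). Your alternative direct argument is also valid. It regards the same MSR extension as one for \(\rho_{B^n}\) with purifying system \(E'=AE\), and is simply the lemma's proof specialised to the Stinespring dilation of \(\Tr_A\), which moves \(A\) into the environment.
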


\begin{proof}
Apply Lemma~\ref{lem:msr-stability-local-channels} with the channel
\(\Lambda_A=\Tr_A\), whose output system is one-dimensional. Then
\bb
(\Lambda_A^{\otimes n}\otimes\Id_{B^n})(\rho_{A^nB^n})
=
\Tr_{A^n}\rho_{A^nB^n}
=
\rho_{B^n},
\ee
and
\bb
(\Lambda_A\otimes\Id_B)(\rho_{AB})
=
\Tr_A\rho_{AB}
=
\rho_B.
\ee
Hence \(\bm\rho_B\) is MSR almost i.i.d.\ along \(\rho_B\), with the same defect sizes \(r_n=o(n)\).
\end{proof}

The following lemma quantifies the size of the defect spaces appearing
in Definition~1. It shows that when the number of defects is sublinear,
the corresponding defect spaces have only subexponential dimension.

\begin{lemma}[(Subexponential defect complexity)]
\label{lem:subexp-defect}
Let \(\ket{\theta}_{AE}\in\mathcal H_A\otimes\mathcal H_E\), and let
\((r_n)_n\) be a sequence of integers satisfying \(0\le r_n\le n\) and
\(r_n=o(n)\). Then the corresponding MSR defect spaces
\(\mathcal M_n(\theta,r_n)\), defined in~\eqref{eq:defect-space}, satisfy
\bb
\dim \mathcal M_n(\theta,r_n)=2^{o(n)}.
\ee
Consequently, every vector in \(M_n(\theta,r_n)\) can be expanded using at most
\(2^{o(n)}\) basis vectors.
\end{lemma}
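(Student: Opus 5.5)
We bound the dimension of the span directly by exhibiting an explicit spanning set. Set $d:=\dim(\mathcal H_A\otimes\mathcal H_E)$ and fix an orthonormal basis $\{\ket{e_1},\dots,\ket{e_d}\}$ of $\mathcal H_{AE}$.

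Every element of $\mathcal V(\mathcal H_{AE}^{\otimes n},\ket{\theta}^{\otimes(n-r_n)})$ has the form $\pi(\ket{\theta}^{\otimes(n-r_n)}\otimes\ket{\Omega})$ with $\ket{\Omega}\in\mathcal H_{AE}^{\otimes r_n}$. A permutation $\pi$ acts on such a vector only through the set $T\subseteq\{1,\dots,n\}$, $|T|=r_n$, of positions to which the unrestricted factors are moved, together with a reordering inside $T$. The reordering inside $T$ can be absorbed into $\ket{\Omega}$, since $\ket{\Omega}$ is arbitrary. Hence
\[
\mathcal M_n(\theta,r_n)=\sum_{T\subseteq\{1,\dots,n\},\,|T|=r_n}\ket{\theta}^{\otimes T^c}\otimes\mathcal H_{AE}^{\otimes T},
\]
where $\ket{\theta}^{\otimes T^c}\otimes\mathcal H_{AE}^{\otimes T}$ denotes the subspace of vectors with $\ket{\theta}$ in every position outside $T$ and an arbitrary vector on the positions in $T$. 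Expanding $\ket{\Omega}$ in the product basis $\ket{e_{i_1}}\otimes\cdots\otimes\ket{e_{i_{r_n}}}$ shows that each summand has dimension at most $d^{r_n}$. Subadditivity of dimension under sums of subspaces then gives
\[
\dim\mathcal M_n(\theta,r_n)\le\binom{n}{r_n}d^{r_n}.
\]

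For the asymptotics, write $\binom{n}{r_n}\le 2^{nh(r_n/n)}$, where $h$ is the binary entropy function. Since $r_n/n\to0$ and $h$ is continuous with $h(0)=0$, we get $nh(r_n/n)=o(n)$. Also $r_n\log d=o(n)$. Therefore $\log\dim\mathcal M_n(\theta,r_n)=o(n)$, that is, $\dim\mathcal M_n(\theta,r_n)=2^{o(n)}$.

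The final sentence of the lemma follows at once, because the product vectors $\ket{\theta}^{\otimes T^c}\otimes\ket{e_{i_1}}\otimes\cdots\otimes\ket{e_{i_{r_n}}}$ constructed above span $\mathcal M_n(\theta,r_n)$ and number at most $2^{o(n)}$. Alternatively, any basis of $\mathcal M_n(\theta,r_n)$ has exactly $\dim\mathcal M_n(\theta,r_n)$ elements.

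The only point needing care is the first step: that absorbing the within-$T$ reordering into $\ket{\Omega}$ loses nothing. This holds because the set of allowed $\ket{\Omega}$ is all of $\mathcal H_{AE}^{\otimes r_n}$, which is itself invariant under permutations of its factors. Note also that the bound uses only the combinatorial structure of $\mathcal V$ and nothing specific to $\ket{\theta}$. It is therefore uniform over all purifications, which is useful later when the same estimate is applied to several MSR sources at once.
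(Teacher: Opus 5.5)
Your proof is correct and follows the paper's argument. The paper imports the bound $\dim\mathcal M_n(\theta,r_n)\le\binom{n}{r_n}d^{r_n}$ from Remark~2.4(c) of Mazzola--Sutter--Renner and then uses $\log\binom{n}{r_n}\le n\,h(r_n/n)$ and $r_n/n\to0$ exactly as you do; the only difference is that you derive that dimension bound yourself, by writing the defect space as a sum over $r_n$-subsets $T$ of the subspaces $\ket{\theta}^{\otimes T^c}\otimes\mathcal H_{AE}^{\otimes T}$, which makes the argument self-contained.
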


\begin{proof}
By Remark~2.4(c) of~\cite{mazzola2026almost},
\bb
\dim \mathcal M_n(\theta,r_n)
\le
\binom{n}{r_n}d^{r_n},
\ee
where \(d=\dim(\mathcal H_A\otimes\mathcal H_E)\). Taking logarithms and
dividing by \(n\), we obtain
\bb
\frac1n\log \dim \mathcal M_n(\theta,r_n)
\le
\frac1n\log\binom{n}{r_n}
+
\frac{r_n}{n}\log d.
\ee
Since \(r_n=o(n)\), we have \(r_n/n\to0\) as $n \to \infty$. Moreover, the standard bound
\(\log\binom{n}{r_n}\le n h(r_n/n)\), where \(h\) denotes the binary entropy
function (which follows immediately from the binomial theorem), gives
\bb
\frac1n\log\binom{n}{r_n}
\le
h\!\left(\frac{r_n}{n}\right)
\to 0 \quad \text{as} \quad n \to \infty.
\ee
Hence
\bb
\lim_{n\to\infty}
\frac1n\log \dim \mathcal M_n(\theta,r_n)
=0,
\ee
which is equivalent to
\(\dim\mathcal M_n(\theta,r_n)=2^{o(n)}\).

The final assertion follows by choosing an orthonormal basis of
\(\mathcal M_n(\theta,r_n)\).
\end{proof}

\subsection{Entropy-rigidity estimates}\label{subsec:entropy-rigid}

We next establish the entropy estimates which ensure that MSR almost
i.i.d.\ sources with sublinear defect size have the same asymptotic
spectral entropy as their reference i.i.d.\ state. The point is that,
although the states in the MSR class need not be close in trace norm to
\(\rho^{\otimes n}\), their deviation from the tensor-power structure is
controlled by a sublinear defect size. As a consequence, this deviation
has no effect on the entropy rate. The following lemma makes this precise
in the language of information spectrum quantities, and will be used
repeatedly to replace spectral entropy rates of MSR sources by the
ordinary von Neumann entropy of the reference state.

\begin{lemma}[(Spectral entropy rates of MSR almost i.i.d.\ sources)]
\label{lem:spectral-entropy-msr}
Let \(\rho\in\mathcal D(\mathcal H)\), and let
\(\bm\rho:=(\rho_n)_n\) be such that
\(\rho_n\in S^n(\mathcal H,\rho^{\otimes(n-r_n)})\) for every \(n\), with
\(r_n=o(n)\). Then
\bb
\underline S(\bm\rho)=\overline S(\bm\rho)=S(\rho).
\ee
\end{lemma}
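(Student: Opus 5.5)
The plan is to replace the non-orthogonal description of the defect space $\mathcal M_n(\theta,r_n)$ by an \emph{orthogonal} one, and then prove the two inequalities $\overline S(\bm\rho)\le S(\rho)$ and $\underline S(\bm\rho)\ge S(\rho)$ separately. Since $\underline S\le\overline S$ always holds, these two inequalities give the claim.

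\textbf{Orthogonal decomposition.} Write $\sigma:=\rho$ for the reference state on $A$, and let $\ket{\theta}_{AE}$ and $\rho_{A^nE^n}$ be as in Definition~\ref{def:MSR}. Choose an orthonormal basis $e_0=\ket\theta,e_1,\dots,e_{d-1}$ of $\mathcal H_{AE}$. Then $\mathcal M_n(\theta,r)$ is exactly the span of the product basis vectors $e_{x_1}\otimes\cdots\otimes e_{x_n}$ with at most $r$ indices $x_i\neq0$. This is because every element of $\mathcal V$ expands into such vectors, and every such vector lies in $\mathcal V$.

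Group these basis vectors by their set $S\subseteq\{1,\dots,n\}$ of non-$\theta$ positions, where $|S|\le r$. This gives an orthogonal decomposition
\[
\mathcal M_n=\bigoplus_{S} W_S,\qquad W_S=\ket\theta^{\otimes S^c}\otimes(\theta^\perp)^{\otimes S}.
\]
The number of blocks is $N_n\le (r_n+1)\binom{n}{r_n}=2^{o(n)}$, by the estimate used in Lemma~\ref{lem:subexp-defect}. Let $P_S$ be the projection onto $W_S$. Then $P_S\rho_{A^nE^n}P_S=p_S\,\theta^{\otimes S^c}\otimes\tau_S$ with $\sum_S p_S=1$ and $\tau_S$ a state.

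\textbf{Lower bound $\underline S(\bm\rho)\ge S(\rho)$.} By the pinching inequality, $\rho_{A^nE^n}\le N_n\sum_S P_S\rho_{A^nE^n}P_S$. Tracing out $E^n$ gives
\[
\rho_n\le N_n\sum_S p_S\,\sigma^{\otimes S^c}\otimes\tau'_S,
\]
where each $\tau'_S$ is a state on $A^S$. Fix $\delta>0$, and let $\Pi_n$ be any projection with $\Tr\Pi_n\le 2^{n(S(\sigma)-\delta)}$. We need $\Tr\Pi_n\rho_n\to0$; by the Remark in Section~2 this is equivalent to $\underline S\ge S-\delta$. For each term, set $X=\sigma^{\otimes m}$ with $m=|S^c|\ge n-r_n$, and split $X=X\{X>\lambda\}+X\{X\le\lambda\}$. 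This gives
\[
\Tr\Pi_n(X\otimes\tau)\le \Tr X\{X>\lambda\}+\lambda\Tr\Pi_n .
\]
Take $\lambda=2^{-m(S-\delta/2)}$. The first term is exponentially small by a Cramér/Hoeffding bound for the i.i.d.\ eigenvalue distribution of $\sigma^{\otimes m}$. The second term is at most $2^{-n\delta/2+o(n)}$. Multiplying by $N_n=2^{o(n)}$ still leaves an exponentially small quantity.

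\textbf{Upper bound $\overline S(\bm\rho)\le S(\rho)$.} Let $T^{(m)}_\delta$ be the $\delta$-typical projection of $\sigma^{\otimes m}$. Define $\Pi_n:=\bigvee_S\bigl(T_\delta^{S^c}\otimes\id_{A^S}\bigr)$ on $A^n$. Its rank is at most $N_n d_A^{r_n}2^{n(S+\delta)}=2^{n(S+\delta)+o(n)}$.

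Take any pure vector $v=\sum_S v_S$ in $\mathcal M_n$, with $v_S=\ket\theta^{\otimes S^c}\otimes w_S\in W_S$ and $\sum_S\|w_S\|^2=1$. Since $\id-\Pi_n\le\id-T^{S^c}_\delta\otimes\id$ for each $S$, we get
\[
\|(\id-\Pi_n)v\|\le\sum_S\|(\id-T^{S^c}_\delta)\ket\theta^{\otimes S^c}\|\,\|w_S\|\le\sqrt{N_n}\,\max_{m\ge n-r_n}\sqrt{\Tr(\id-T^{(m)}_\delta)\sigma^{\otimes m}} ,
\]
using Cauchy--Schwarz in the last step. The typicality error is exponentially small in $m\ge n-r_n$, so the right-hand side tends to $0$. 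Since $\rho_{A^nE^n}$ is a mixture of such vectors, convexity gives $\Tr(\id-\Pi_n)\rho_n\to0$. Hence $\overline S\le S+\delta$ for every $\delta>0$.

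\textbf{Main obstacle.} The delicate point is the first step. The defining spanning set $\mathcal V$ is highly non-orthogonal, so naive triangle-inequality or operator-convexity bounds lose control of the norms of the components. Recognizing $\mathcal M_n$ as an orthogonal direct sum of the $W_S$ with $2^{o(n)}$ blocks is what makes both the pinching bound and the Cauchy--Schwarz bound cost only a subexponential factor. That factor is then absorbed by the exponential concentration of the i.i.d.\ part.
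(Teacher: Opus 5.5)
Your argument is correct, but it takes a genuinely different route from the paper.

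\textbf{The paper's route.} The paper delegates both bounds to external results.
For $\underline S(\bm\rho)\ge S(\rho)$ it combines three ingredients:
the Datta--Renner identity $\underline S(\bm\rho)=\lim_{\varepsilon\to0}\liminf_{n}\frac1n H_{\min}^{\varepsilon}(\rho_n)$, with trace-norm smoothing;
the strong AEP for almost i.i.d.\ states (Proposition~4.2 of Mazzola--Sutter--Renner);
and a Fuchs--van de Graaf conversion between smoothing balls.
For $\overline S(\bm\rho)\le S(\rho)$ it uses the fact that MSR sources are weakly almost i.i.d., and invokes the universal typical projector $\{-\frac1n\log\sigma_q^{\otimes n}\le h_q+\delta\}$, with $\sigma_q=(1-q)\rho+q\id/d$, from earlier work.

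\textbf{Your route.} You instead open up the defect space itself. Writing $\mathcal M_n(\theta,r_n)=\bigoplus_{|S|\le r_n}W_S$ with $W_S=\ket{\theta}^{\otimes S^c}\otimes(\theta^\perp)^{\otimes S}$ replaces the non-orthogonal spanning set by $2^{o(n)}$ orthogonal sectors, each carrying an exact tensor power on $S^c$. Pinching then gives the domination $\rho_n\le 2^{o(n)}\sum_S p_S\,\sigma^{\otimes S^c}\otimes\tau'_S$ for the lower bound. A join of typical projectors together with Cauchy--Schwarz handles the upper bound. Both bounds thereby reduce to the i.i.d.\ AEP.

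\textbf{What each buys.} Your proof is self-contained, using only Hoeffding and the pinching inequality. It yields explicit exponential error bounds that depend only on $\sigma$, $\delta$ and $(r_n)_n$, not on the particular source and not even on $\dim\mathcal H_E$. Consequently, the uniformity over the MSR class that the paper needs in Corollary~\ref{cor:msr-lower-spectral-tail} falls out of your lower-bound estimate directly. The paper instead obtains it from the claimed uniformity of the error term in the external AEP.
The paper's route is shorter on the page. Its upper bound applies to the much larger class of weakly almost i.i.d.\ sources, with a projector independent of $(r_n)_n$. Its lower bound passes through smooth min-entropy, which connects to one-shot statements.

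\textbf{One small slip.} The Remark in Section~2 characterizes $\overline S$, not $\underline S$. The implication you actually need follows straight from the definition of $\overline D(\bm\rho\|\bm\id)$: apply your bound to $\Pi_n=\{\rho_n>2^{-n(S(\rho)-\delta)}\id\}$, whose rank is at most $2^{n(S(\rho)-\delta)}$.
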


\begin{proof}
We first prove \(\underline S(\bm\rho)\ge S(\rho)\). By Theorem~1 of~\cite{Datta_2009},
\bb
\underline S(\bm\rho)
=
\lim_{\varepsilon\to0}
\liminf_{n\to\infty}
\frac1n H_{\min}^{\varepsilon,\|\cdot\|_1}(\rho_n).
\ee
Here the smoothing is with respect to trace norm.

The Strong AEP for almost i.i.d.\ states, Proposition~4.2
of~\cite{mazzola2026almost}, applied with trivial conditioning system,
gives, for every fixed \(\varepsilon\in(0,1)\),
\bb
\frac1n H_{\min}^{\varepsilon,\mathrm P}(\rho_n)
=
S(\rho)+\frac{a_n}{n},
\ee
where where \(a_n\in\mathbb R\) satisfies \(a_n=o(n)\)\footnote{Here the error term \(a_n\) depends only on the
reference state $\rho$, the dimension of the underlying Hilbert space, the parameter 
\(\varepsilon\), and the defect-size
sequence \((r_n)_n\) where \(r_n=o(n)\), but not on the particular state \(\rho_n\) in the
MSR class.}. The smoothing is with respect to purified distance.
In particular, since \(a_n/n\to0\) as $n \to \infty$, we obtain
\bb
\liminf_{n\to\infty}
\frac1n H_{\min}^{\varepsilon,\mathrm P}(\rho_n)
\ge
S(\rho),
\ee
where the smoothing is with respect to purified distance. By~(\ref{eq:h-min}),
$
H_{\min}^{2\varepsilon,\|\cdot\|_1}(\rho_n)
\ge
H_{\min}^{\varepsilon,\mathrm P}(\rho_n).
$
Consequently, for every fixed \(\varepsilon\in(0,1)\),
\bb
\liminf_{n\to\infty}
\frac1n
H_{\min}^{2\varepsilon,\|\cdot\|_1}(\rho_n)
\ge
S(\rho).
\ee
Taking \(\varepsilon\to 0\) yields \(\underline S(\bm\rho)\ge S(\rho)\).

We next prove \(\overline S(\bm\rho)\le S(\rho)\). Since MSR almost i.i.d.\ sources with
\(r_n=o(n)\) are weakly almost i.i.d.\ along their reference state, Lemma 3 of~\cite{datta2026entropy}
applies to \(\bm\rho\). Fix \(R>S(\rho)\). {Since by this lemma,
\(
h_q:=-\Tr\rho\log((1-q)\rho+q\id/d)
\to S(\rho)
\)
as \(q\downarrow0\), we may choose
\(q\in(0,1)\) and \(\delta>0\) such that
\(h_q+\delta<R\).}
The {universal typical projector} used in this quoted lemma, which is defined as
\bb
\Pi_n:=\{-\frac1n\log\sigma_q^{\otimes n}\le h_q+\delta\},
\ee
satisfies \(\Tr(\Pi_n\rho_n)\to1\) as $n \to \infty$ and
\bb
\Tr\Pi_n\le e^{n(h_q+\delta)}\le e^{nR}.
\ee
By the projector characterization of the sup-spectral entropy rate $\overline S(\bm\rho)$ (see~\eqref{eq:sup-spectral-entropy-rate} and the remark below it),
the existence of projectors \(\Pi_n\) satisfying
\(\Tr(\Pi_n\rho_n)\to1\) as $n \to \infty$, and \(\Tr\Pi_n\le 2^{nR}\) implies
\(\overline S(\bm\rho)\le R\). Since \(R>S(\rho)\) was arbitrary,
\(\overline S(\bm\rho)\le S(\rho)\).

Finally, since always \(\underline S(\bm\rho)\le\overline S(\bm\rho)\), the two bounds imply
\bb
\underline S(\bm\rho)=\overline S(\bm\rho)=S(\rho).
\ee
\end{proof}
{\begin{lemma}[(Conditional spectral entropy bound for MSR almost i.i.d.\ states)]
\label{lem:conditional-spectral-bound-msr}
Let \(\bm\rho_{AB}:=(\rho_{A^nB^n})_n\) be MSR almost i.i.d.\ along
\(\rho_{AB}\), with defect sizes \(r_n=o(n)\). Then
\bb
\overline S(A|B)_{\bm\rho}
\le
S(A|B)_\rho .
\ee
\end{lemma}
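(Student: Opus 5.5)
The plan is to reduce the claim to the unconditional spectral entropy rates, which Lemma~\ref{lem:spectral-entropy-msr} already controls, using the information-spectrum chain-rule inequality
\begin{equation*}
\overline S(A|B)_{\bm\rho}\;\le\;\overline S(\bm\rho_{AB})-\underline S(\bm\rho_B),
\end{equation*}
valid for an arbitrary sequence of bipartite states. This inequality is, I believe, among the chain-rule inequalities for spectral entropy rates proved in~\cite{bowen2006beyond} (see also~\cite{BowenDattaISIT2006}); I would check that it appears there in exactly this form before citing it. Once it is available, the lemma follows in two applications of earlier results.

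\emph{First step.} The bipartite source $\bm\rho_{AB}$ is MSR along $\rho_{AB}$, viewing $AB$ as a single system. Lemma~\ref{lem:spectral-entropy-msr} therefore gives $\overline S(\bm\rho_{AB})=S(\rho_{AB})$.

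\emph{Second step.} By Corollary~\ref{cor:msr-marginals}, the marginal source $\bm\rho_B$ is MSR along $\rho_B$ with the same defect sizes $r_n=o(n)$. Lemma~\ref{lem:spectral-entropy-msr} then gives $\underline S(\bm\rho_B)=S(\rho_B)$.

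Substituting both into the chain-rule inequality yields $\overline S(A|B)_{\bm\rho}\le S(\rho_{AB})-S(\rho_B)=S(A|B)_\rho$.

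The main obstacle is the chain-rule inequality itself, if a self-contained proof is wanted. Unwinding the definition, fix $\gamma<S(\rho_B)-S(\rho_{AB})$ and let $\Lambda_n:=\{\rho_{A^nB^n}\le 2^{n\gamma}\,\id_{A^n}\otimes\rho_{B^n}\}$. One must show $\Tr(\Lambda_n\rho_{A^nB^n})\to 0$. The available ingredients are:
\begin{itemize}
\item projectors $P_n$ with $\Tr(P_n\rho_{A^nB^n})\to1$ and $\Tr P_n\le 2^{n(S(\rho_{AB})+\delta)}$, which exist by the projector characterization of $\overline S$;
\item the projectors $Q_n:=\{\rho_{B^n}\le 2^{-n(S(\rho_B)-\delta)}\}$, which satisfy $\Tr(Q_n\rho_{B^n})\to1$ because $\underline S(\bm\rho_B)=S(\rho_B)$.
\end{itemize}
The naive estimate
\begin{equation*}
\Tr(\Lambda_n\rho)\le 2^{n\gamma}\Tr\bigl(\Lambda_n(\id\otimes\rho_{B^n})\bigr)
\end{equation*}
must be combined with a restriction to $P_n$ and to $\id\otimes Q_n$. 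The difficulty is that $\Lambda_n$ commutes with neither projector.

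My first attempt would use the operator inequality $\Tr[T(\rho-c\sigma)]\le\Tr(\rho-c\sigma)_+$, valid for $0\le T\le\id$, in the form
\begin{equation*}
\Tr[(\id-\Lambda_n)\rho]\;\ge\;\Tr[T_n\rho]-c\,\Tr[T_n\sigma],
\end{equation*}
with $c=2^{n\gamma}$ and $\sigma=\id\otimes\rho_{B^n}$. The test is $T_n$, built from $P_n$ and $\id\otimes Q_n$, for example $T_n=P_n$ after first replacing $\rho_{A^nB^n}$ by $(\id\otimes Q_n)\rho_{A^nB^n}(\id\otimes Q_n)$. Since $\Tr[(\id\otimes Q_n)\rho_{A^nB^n}]=\Tr(Q_n\rho_{B^n})\to1$, the gentle measurement lemma controls this replacement at vanishing trace-norm cost, and it makes $\id\otimes\rho_{B^n}$ effectively bounded by $2^{-n(S(\rho_B)-\delta)}\id$ on the relevant subspace. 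One then gets $c\Tr[T_n\sigma]\le 2^{n\gamma}2^{n(S(\rho_{AB})+\delta)}2^{-n(S(\rho_B)-\delta)}\to0$ for $\delta$ small, while $\Tr[T_n\rho]\to1$.

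Making the $Q_n$-compression compatible with the spectral projection $\Lambda_n$, which is defined with respect to the uncompressed operators, is the delicate point. If it does not go through cleanly, I would fall back on citing the inequality from the information-spectrum literature.
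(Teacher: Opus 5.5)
Your proposal is correct and follows the paper's proof exactly. Corollary~\ref{cor:msr-marginals} and Lemma~\ref{lem:spectral-entropy-msr} give $\overline S(\bm\rho_{AB})=S(\rho_{AB})$ and $\underline S(\bm\rho_B)=S(\rho_B)$, and the paper then cites the chain-rule inequality $\overline S(A|B)_{\bm\rho}\le\overline S(\bm\rho_{AB})-\underline S(\bm\rho_B)$ as Proposition~9 of~\cite{bowen2006beyond}. Your optional self-contained sketch also goes through without any compatibility issue if you take the test operator $T_n:=(\id\otimes Q_n)P_n(\id\otimes Q_n)$ in $\Tr[(\id-\Lambda_n)\rho_{A^nB^n}]\ge\Tr[T_n\rho_{A^nB^n}]-2^{n\gamma}\Tr[T_n(\id\otimes\rho_{B^n})]$: gentle measurement gives $\Tr[T_n\rho_{A^nB^n}]\to1$, and $\Tr[T_n(\id\otimes\rho_{B^n})]\le 2^{-n(S(\rho_B)-\delta)}\Tr P_n$.
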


\begin{proof}
By stability of the MSR property under taking marginals given by Corollary~\ref{cor:msr-marginals}, the marginal
source \(\bm\rho_B:=(\rho_{B^n})_n\) is MSR almost i.i.d.\ along
\(\rho_B\), with defect sizes \(r_n=o(n)\). Hence,
Lemma~\ref{lem:spectral-entropy-msr}, applied to \(\bm\rho_{AB}\) and to
\(\bm\rho_B\), gives
\bb
\overline S(\bm\rho_{AB})=S(\rho_{AB}),
\qquad
\underline S(\bm\rho_B)=S(\rho_B).
\ee
Using the information-spectrum chain-rule inequality given by Proposition~9 of~\cite{bowen2006beyond},
\bb
\overline S(A|B)_{\bm\rho}
\le
\overline S(\bm\rho_{AB})-\underline S(\bm\rho_B),
\ee
we obtain
\bb
\overline S(A|B)_{\bm\rho}
\le
S(\rho_{AB})-S(\rho_B)
=
S(A|B)_\rho .
\ee
\end{proof}

The entropy-rigidity estimates above show that MSR almost i.i.d.\ sources
retain the relevant asymptotic entropy behaviour of their reference state.
In the applications below, however, we need a uniform version of this
estimate: the weight assigned to the spectral projection
\(\{\rho_n\ge 2^{-n\gamma}\id_n\}\) must vanish uniformly over the whole
MSR class with a prescribed sublinear defect bound, whenever
\(\gamma<S(\rho)\). Equivalently, no source in this class can
asymptotically assign non-negligible weight to eigenvalues larger than
\(2^{-n\gamma}\) for any \(\gamma<S(\rho)\). This is the form in which the
entropy-rigidity estimate will be used later, and is made precise in the following corollary.
{\begin{cor}[(Uniform lower spectral tail bound for MSR sources)]
\label{cor:msr-lower-spectral-tail}
Let \(\rho\in\mathcal D(\mathcal H)\), and let \((r_n)_n\) satisfy
\(r_n=o(n)\). Then, for every \(\gamma<S(\rho)\),
\bb
\sup_{\bm\rho=(\rho_n)_n}
\Tr\!\left[
\left\{
\rho_n\ge 2^{-n\gamma}\id_n
\right\}
\rho_n
\right]
\to 0 \quad \text{as} \quad n \to \infty,
\ee
where the supremum is over all states \(\rho_n\in S^n(H,\rho^{\otimes(n-s_n)})\)
with \(s_n\le r_n\), or equivalently over all MSR source sequences with defect at
most \(r_n\) at blocklength \(n\).

\end{cor}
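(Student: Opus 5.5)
The plan is to turn the uniform min-entropy estimate behind Lemma~\ref{lem:spectral-entropy-msr} into a uniform tail bound. The key input is the Strong AEP for almost i.i.d.\ states (Proposition~4.2 of~\cite{mazzola2026almost}). As the footnote in the proof of Lemma~\ref{lem:spectral-entropy-msr} records, its error term \(a_n=o(n)\) depends only on \(\rho\), \(\dim\mathcal H\), \(\varepsilon\) and the defect sequence, and not on the particular state \(\rho_n\). One point needs checking first. Since \(\mathcal M_n(\theta,s)\subseteq\mathcal M_n(\theta,r)\) for \(s\le r\) (fewer positions are constrained to \(\ket{\theta}\)), every \(\rho_n\in S^n(\mathcal H,\rho^{\otimes(n-s_n)})\) with \(s_n\le r_n\) also lies in \(S^n(\mathcal H,\rho^{\otimes(n-r_n)})\). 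So it suffices to take the supremum over the single class with defect exactly \(r_n\), and the AEP error term is then uniform over it. This also justifies the ``equivalently'' in the statement.

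Fix \(\gamma<S(\rho)\) and \(\varepsilon\in(0,1)\). Set \(\eta:=(S(\rho)-\gamma)/2\). By the uniform AEP, for all \(n\ge n_0(\varepsilon)\) and every \(\rho_n\) in the class, there is a state \(\tilde\rho_n\) with \(P(\tilde\rho_n,\rho_n)\le\varepsilon\), and hence \(\|\tilde\rho_n-\rho_n\|_1\le2\varepsilon\) by~\eqref{eq:ball-inclusion}, satisfying
\[
\tilde\rho_n\le 2^{-n(\gamma+\eta)}\id_n .
\]
(If the supremum in the definition of the smooth min-entropy is not attained, lose an arbitrarily small amount and absorb it into \(\eta\).)

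Now let \(Q_n:=\{\rho_n\ge2^{-n\gamma}\id_n\}\). This projector commutes with \(\rho_n\), so \(\Tr Q_n\rho_n\ge 2^{-n\gamma}\Tr Q_n\). Combined with \(\Tr Q_n\rho_n\le 1\), this gives
\[
\Tr Q_n\le 2^{n\gamma}.
\]
The tail is then controlled by
\[
\Tr Q_n\rho_n\le \Tr Q_n\tilde\rho_n+\|\rho_n-\tilde\rho_n\|_1
\le 2^{-n(\gamma+\eta)}\Tr Q_n+2\varepsilon
\le 2^{-n\eta}+2\varepsilon .
\]
The right-hand side does not depend on the particular \(\rho_n\). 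Taking the supremum, then \(\limsup_{n\to\infty}\), and finally \(\varepsilon\downarrow0\), gives the claim.

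The main obstacle is the uniformity of the AEP error term, which the whole argument rests on. I would verify it against the explicit bound in~\cite{mazzola2026almost}. That bound takes the form \(H^{\varepsilon}_{\min}\ge nS(\rho)-O(\sqrt{n\log(1/\varepsilon)})\cdot(\text{dimension factor})-O(r_n\log d)-\log\binom{n}{r_n}\)-type corrections, and all of these depend only on \(n\), \(r_n\), \(d\) and \(\varepsilon\). If a state-dependent constant did appear there, I would fall back on a direct argument. By Lemma~\ref{lem:subexp-defect} and the support condition~\eqref{eq:supp}, one writes the MSR extension as supported on a span of \(2^{o(n)}\) permuted vectors \(\pi(\ket{\theta}^{\otimes(n-r_n)}\otimes\ket{\Omega})\). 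One then bounds the largest eigenvalue of \(\rho_n\) restricted to the \(\rho^{\otimes(n-r_n)}\)-typical subspace by \(2^{o(n)}\cdot2^{-(n-r_n)(S(\rho)-\delta)}\), which is again uniform in \(\rho_n\).
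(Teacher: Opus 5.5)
Your proposal is correct and follows the paper's proof essentially step for step: the uniform Strong AEP of Mazzola--Sutter--Renner gives a smoothed state $\tilde\rho_n\le 2^{-n\kappa}\id_n$ with $\kappa=\gamma+\eta\in(\gamma,S(\rho))$, the bound $\Tr Q_n\le 2^{n\gamma}$ together with the trace-norm estimate yields $\Tr[Q_n\rho_n]\le 2^{-n(\kappa-\gamma)}+2\varepsilon$ uniformly over the class, and one then lets $n\to\infty$ and $\varepsilon\to0$. Your added observation that $\mathcal M_n(\theta,s)\subseteq\mathcal M_n(\theta,r)$ for $s\le r$ justifies the ``equivalently'' clause, which the paper leaves implicit, and the fallback argument in your last paragraph is not needed.
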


\begin{proof}
Fix \(\gamma<S(\rho)\), and choose \(\kappa\) such that
\(\gamma<\kappa<S(\rho)\). Let \(\varepsilon\in(0,1)\). 

By Proposition~4.2 of~\cite{mazzola2026almost}, the error term in the preceding AEP estimate may be
chosen uniformly over all MSR almost i.i.d.\ sources along \(\rho\) with defect size at most
\(r_n\). Thus, for every fixed \(\varepsilon\in(0,1)\),
\[
\frac1n H_{\min}^{\varepsilon,P}(\rho_n)
=
S(\rho)+\frac{a_n}{n},
\]
where \(a_n\in\mathbb R\), \(a_n=o(n)\), and \(a_n\) depends only on \(\rho\), the
dimension, \(\varepsilon\), and the defect-size bound \((r_n)_n\), but not on the
particular state \(\rho_n\). Since \(\kappa<S(\rho)\), and since \(a_n/n\to0\) as $n \to \infty$, it follows
that, for all sufficiently large \(n\),
\bb
\frac1n H_{\min}^{\varepsilon,\mathrm P}(\rho_n)
\ge
\kappa
\ee
uniformly over the class.

Thus, for each such \(\rho_n\), there exists a state
\(\widetilde\rho_n\) such that \(P(\widetilde\rho_n,\rho_n)\le
\varepsilon\) and
\(\widetilde\rho_n\le 2^{-n\kappa}\id\). By the Fuchs--van de Graaf
inequality,
\(\|\widetilde\rho_n-\rho_n\|_1\le2\varepsilon\).

Set
\(
Q_n:=\{\rho_n\ge 2^{-n\gamma}\id\}.
\)
Since \(Q_n\) is the projector onto eigenvalues of \(\rho_n\) at least
\(2^{-n\gamma}\), and since \(\Tr\rho_n=1\), we have
\(\operatorname{rank}Q_n\le 2^{n\gamma}\). Therefore
\bb
\Tr[Q_n\widetilde\rho_n]
\le
2^{-n\kappa}\operatorname{rank}Q_n
\le
2^{-n(\kappa-\gamma)}.
\ee
Moreover, since \(0\le Q_n\le\id\),
\bb
\begin{aligned}
\Tr[Q_n\rho_n]
&=
\Tr[Q_n\widetilde\rho_n]
+
\Tr[Q_n(\rho_n-\widetilde\rho_n)]  \\
&\le
\Tr[Q_n\widetilde\rho_n]
+
\|\rho_n-\widetilde\rho_n\|_1  \\
&\le
2^{-n(\kappa-\gamma)}+2\varepsilon .
\end{aligned}
\ee
Taking the supremum over the MSR class and then the limit superior gives
\bb
\limsup_{n\to\infty}
\sup_{\bm\rho}
\Tr[Q_n\rho_n]
\le
2\varepsilon .
\ee
Since \(\varepsilon>0\) was arbitrary, the claim follows.
\end{proof}
We shall also need a simple stability property of conditional spectral
entropy rates under the addition of auxiliary systems of subexponential
size. This situation arises when an auxiliary description of an MSR source
includes finite classical labels and an additional quantum register whose
dimension grows only subexponentially with the blocklength. The following
lemma shows that such auxiliary registers do not change the normalized
sup-conditional spectral entropy rate, except for the expected contribution
coming from the logarithm of the dimension of the added quantum register.
{\begin{lemma}[(Finite-register stability of conditional spectral entropy)]
\label{lem:finite-register-stability-conditional-spectrum}
Let
\(\bm\gamma_{ABCZ}:=(\gamma_{A_lB_lC_lZ_l})_{l\ge1}\)
be a sequence of states, where \(Z_l\) is a classical register, as in the applications below, and let
\(d_{C,l}:=\dim\mathcal H_{C_l}\). Then
\bb
\overline S(AC|BZ)_{\bm\gamma}
\le
\overline S(A|B)_{\bm\gamma}
+
\limsup_{l\to\infty}\frac1l\log d_{C,l}.
\ee
In particular, if \(\log d_{C,l}=o(l)\), then
\(\overline S(AC|BZ)_{\bm\gamma}\le
\overline S(A|B)_{\bm\gamma}\). Thus adjoining to Alice's side a system
whose logarithmic dimension is sublinear, and adjoining a classical
register to the conditioning system, does not increase the normalized
sup-conditional spectral entropy rate.
\end{lemma}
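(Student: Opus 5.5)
The plan is to deduce the inequality from two standard facts about the spectral inf-divergence rate \(\underline D\). The first is its monotonicity under quantum channels (data processing), which is available in the information-spectrum framework of~\cite{bowen2006beyond,HayashiNagaoka2003}. The second is its behaviour when the second argument is rescaled by a positive factor. Recall that \(\overline S(AC|BZ)_{\bm\gamma}=-\underline D(\bm\gamma_{ACBZ}\|(\id_{A_lC_l}\otimes\gamma_{B_lZ_l})_l)\) and \(\overline S(A|B)_{\bm\gamma}=-\underline D(\bm\gamma_{AB}\|(\id_{A_l}\otimes\gamma_{B_l})_l)\). The claimed bound is therefore equivalent to
\begin{equation*}
\underline D\bigl(\bm\gamma_{ACBZ}\big\|(\id_{A_lC_l}\otimes\gamma_{B_lZ_l})_l\bigr)
\ \ge\
\underline D\bigl(\bm\gamma_{AB}\big\|(\id_{A_l}\otimes\gamma_{B_l})_l\bigr)-\limsup_{l\to\infty}\frac1l\log d_{C,l}.
\end{equation*}

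\textbf{Step 1 (data processing).} Apply the channel \(\Tr_{C_lZ_l}\) for each \(l\). It maps \(\gamma_{A_lC_lB_lZ_l}\mapsto\gamma_{A_lB_l}\). Since \(\Tr_{Z_l}\gamma_{B_lZ_l}=\gamma_{B_l}\) and \(\Tr_{C_l}\id_{C_l}=d_{C,l}\), it also maps \(\id_{A_lC_l}\otimes\gamma_{B_lZ_l}\mapsto d_{C,l}\,\id_{A_l}\otimes\gamma_{B_l}\). Monotonicity of \(\underline D\) under CPTP maps, applied to a state sequence and a positive semidefinite sequence, then gives
\begin{equation*}
\underline D\bigl(\bm\gamma_{ACBZ}\big\|(\id_{A_lC_l}\otimes\gamma_{B_lZ_l})_l\bigr)\ \ge\ \underline D\bigl(\bm\gamma_{AB}\big\|(d_{C,l}\,\id_{A_l}\otimes\gamma_{B_l})_l\bigr).
\end{equation*}
Note that this step does not use the classicality of \(Z_l\); that hypothesis merely matches the applications.

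\textbf{Step 2 (rescaling).} Show that for positive constants \(c_l\),
\begin{equation*}
\underline D\bigl(\bm\rho\big\|(c_l\sigma_l)_l\bigr)\ \ge\ \underline D(\bm\rho\|\bm\sigma)-\limsup_{l\to\infty}\frac1l\log c_l .
\end{equation*}
Fix \(\gamma'<\underline D(\bm\rho\|\bm\sigma)-\limsup_l\frac1l\log c_l\). Choose \(\gamma''\) strictly between \(\gamma'+\limsup_l\frac1l\log c_l\) and \(\underline D(\bm\rho\|\bm\sigma)\). Then for large \(l\) we have \(2^{l\gamma'}c_l\le 2^{l\gamma''}\). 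We need \(\Tr[\{\rho_l-2^{l\gamma'}c_l\sigma_l>0\}\rho_l]\to1\), knowing that \(\Tr[\{\rho_l-2^{l\gamma''}\sigma_l>0\}\rho_l]\to1\).

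This is the main obstacle. The function \(t\mapsto\Tr[\{\rho-t\sigma>0\}\rho]\) is not obviously monotone for noncommuting operators, so the comparison cannot be made pointwise. I would circumvent this with a monotone characterization of \(\underline D\). The first option is the identity \(\underline D(\bm\rho\|\bm\sigma)=\lim_{\varepsilon\to0}\liminf_l\frac1l D^\varepsilon_{\min}\)-type formula (hypothesis-testing/smooth min-divergence) of~\cite{Datta_2009}. Since \(D^\varepsilon(\rho\|c\sigma)=D^\varepsilon(\rho\|\sigma)-\log c\) exactly, the rescaling claim follows from it immediately. Alternatively, one can use the Hayashi--Nagaoka operator inequality to show that \(\Tr[\{\rho-t\sigma>0\}\rho]\to1\) for \(t=2^{l\gamma''}\) forces the same for every \(t'\le t\) up to a vanishing error.

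\textbf{Conclusion.} Combining Steps 1 and 2 with \(c_l=d_{C,l}\) and negating yields the stated inequality. The "in particular" clause follows because \(\log d_{C,l}=o(l)\) makes the limsup term vanish.
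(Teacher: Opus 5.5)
Your proof is correct and is essentially the paper's argument: data processing of $\underline D$ under partial traces, followed by a rescaling bound for the second argument. The only differences are that the paper removes $C_l$ and $Z_l$ in two separate steps where you do it in one, and that the paper justifies the rescaling step only by the ``elementary scaling behaviour'' of the spectral projections, so your concern about non-monotonicity of $t\mapsto\Tr[\{\rho-t\sigma>0\}\rho]$ is legitimate and your treatment is more careful. One correction to your second option: the tool it needs is not the Hayashi--Nagaoka inequality but the elementary bound $\Tr[\{X>0\}X]\ge\Tr[TX]$ for $0\le T\le\id$, which with $T=\{\rho_l-t\sigma_l>0\}$ gives $\Tr[\{\rho_l-t'\sigma_l>0\}\rho_l]\ge\Tr[T\rho_l]-t'/t$.
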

\begin{proof}
    See Appendix~\ref{app:aux-info-spec}.
\end{proof}
}

{\section{Entanglement manipulation: distillable entanglement and entanglement cost}
\label{sec:ent-manipulate}

In this section we recall the basic entanglement manipulation tasks that
will be considered in the sequel. We first discuss entanglement
distillation, and its pure-state specialization, entanglement concentration,
and then turn to the reverse task of entanglement dilution. We introduce
the corresponding operational quantities for general sequences of
bipartite states and recall their standard expressions in the i.i.d.\
setting.

Let
\(
\bm\rho_{AB}:=(\rho_{A^nB^n})_{n\ge1}
\)
be a sequence of bipartite states shared by two distant parties, Alice and
Bob. An entanglement-distillation protocol for this sequence consists, for
each \(n\), of an LOCC map \(\pazocal D_n\) acting on \(A^nB^n\), whose
aim is to extract a maximally entangled state \(\Phi_{M_n}\) of Schmidt
rank \(M_n\). A rate \(R\ge0\) is said to be achievable if there exists
such a sequence of LOCC maps \((\pazocal D_n)_n\) such that
\bb
\liminf_{n\to\infty}\frac1n\log M_n\ge R
\ee
and
\bb
\lim_{n\to\infty}
F\!\left(
\pazocal D_n(\rho_{A^nB^n}),
\Phi_{M_n}
\right)
=
1.
\ee
The distillable entanglement of the sequence \(\bm\rho_{AB}\) is defined
as
\bb
E_D(\bm\rho_{AB})
:=
\sup\{R:\ R\text{ is achievable for }\bm\rho_{AB}\}.
\ee
When the sequence is i.i.d., \(\bm\rho_{AB}=(\rho_{AB}^{\otimes n})_n\),
this reduces to the usual distillable entanglement \(E_D(\rho_{AB})\).
For general mixed states, no single-letter formula for \(E_D(\rho_{AB})\)
is known in general, although the hashing inequality gives the coherent
information as an achievable lower bound~\cite{DevetakWinter2005}. Thus, in particular,
\bb
E_D(\rho_{AB})\ge \max\{0,I(A\rangle B)_\rho\},
\ee
where $I(A\rangle B)_\rho:= - S(A|B)_\rho$ denotes the coherent information. For pure states, however, the
problem is completely characterized: if
\(
\psi_{AB}:=\ket{\psi}\!\bra{\psi}_{AB}
\),
then
\bb
E_D(\psi_{AB})
=
S(\rho_A),
\qquad
\rho_A:=\Tr_B\psi_{AB}.
\ee

In the special case where each \(\rho_{A^nB^n}\) is pure, say
\(
\rho_{A^nB^n}=\ket{\Psi_n}\!\bra{\Psi_n}_{A^nB^n}
\),
the same task is usually called \emph{entanglement concentration}. Thus,
for a pure-state sequence
\(
\bm\Psi_{AB}:=(\ket{\Psi_n}\!\bra{\Psi_n}_{A^nB^n})_n
\),
a rate \(R\ge0\) is achievable for entanglement concentration if there
exist LOCC maps \(\pazocal C_n\) such that
\bb
\liminf_{n\to\infty}\frac1n\log M_n\ge R,
\qquad
\lim_{n\to\infty}
F\!\left(
\pazocal C_n(\Psi_{A^nB^n}),
\Phi_{M_n}
\right)
=
1.
\ee
The optimal rate is denoted by \(E_D(\bm\Psi_{AB})\), or simply referred
to as the concentration rate of the sequence. In the i.i.d.\ pure-state
case, \(\bm\Psi_{AB}=(\psi_{AB}^{\otimes n})_n\), this rate is equal to
the entropy of entanglement \(S(\rho_A)\), by the entanglement
concentration theorem of Bennett, Bernstein, Popescu and
Schumacher~\cite{Bennett-distillation, Bennett-distillation-mixed}. For
arbitrary pure-state sequences, entanglement concentration admits an
information-spectrum characterization~\cite{Hayashi2006}.

Conversely, entanglement dilution is the task of creating a desired
bipartite state from maximally entangled states by LOCC. Let
\(
\bm\rho_{AB}:=(\rho_{A^nB^n})_{n\ge1}
\)
be a sequence of bipartite states. An entanglement-dilution protocol for
this sequence consists, for each \(n\), of an LOCC map \(\pazocal E_n\)
whose input is a maximally entangled state \(\Phi_{M_n}\) of Schmidt rank
\(M_n\), and whose output is a state on \(A^nB^n\). A rate \(R\ge0\) is
said to be achievable if there exists such a sequence of LOCC maps
\((\pazocal E_n)_n\) such that
\bb
\limsup_{n\to\infty}\frac1n\log M_n\le R
\ee
and
\bb
\lim_{n\to\infty}
F\!\left(
\pazocal E_n(\Phi_{M_n}),
\rho_{A^nB^n}
\right)
=
1.
\ee
The \emph{entanglement cost} of the sequence \(\bm\rho_{AB}\) is defined
as
\bb
E_C(\bm\rho_{AB})
:=
\inf\{R:\ R\text{ is achievable for entanglement dilution of }
\bm\rho_{AB}\}.
\ee

In the i.i.d.\ case, \(\bm\rho_{AB}=(\rho_{AB}^{\otimes n})_{n\ge1}\),
this recovers the usual entanglement cost \(E_C(\rho_{AB})\). Its value
is given by the regularized entanglement of formation,
\bb
E_C(\rho_{AB})
=
E_F^\infty(\rho_{AB})
:=
\lim_{m\to\infty}
\frac1m E_F(\rho_{AB}^{\otimes m})
=
\inf_{m\ge1}
\frac1m E_F(\rho_{AB}^{\otimes m}).
\ee
Here the entanglement of formation of a bipartite state \(\omega_{AB}\)
is defined by
\bb
E_F(\omega_{AB})
:=
\inf_{\{p_i,\ket{\psi_i}\}}
\sum_i p_i\, S(\psi_{i,A}),
\ee
where the infimum is over all finite pure-state ensemble decompositions
\bb
\omega_{AB}
=
\sum_i p_i
\ket{\psi_i}\!\bra{\psi_i}_{AB},
\ee
and
\(
\psi_{i,A}:=\Tr_B\ket{\psi_i}\!\bra{\psi_i}_{AB}
\).
Thus \(E_F^\infty(\rho_{AB})\) is the asymptotic rate at which maximally
entangled states must be consumed, per copy, in order to prepare
\(\rho_{AB}^{\otimes n}\) by LOCC.
}

\section{Entanglement concentration for pure MSR states}\label{sec:ent-conc}

Before specializing to pure-state entanglement concentration, we first
present a slightly more general, source-dependent achievability result
for entanglement distillation from mixed MSR sources, showing that every
rate below the coherent information of the reference state is achievable.

{\begin{thm}[(Robust entanglement-distillation achievability, source-dependent)]
\label{thm:robust-distillation-achievability}
Let \(\rho_{AB}\in\mathcal D(\mathcal H_A\otimes\mathcal H_B)\), and let
\(
\bm\rho_{AB}:=(\rho_{A^nB^n})_n
\)
be MSR almost i.i.d.\ along \(\rho_{AB}\), with defect sizes
\(r_n=o(n)\). Then
\bb
E_D(\bm\rho_{AB})
\ge
I(A\rangle B)_{\rho_{AB}}
=
-S(A|B)_{\rho_{AB}} .
\ee
Equivalently, if \(I(A\rangle B)_{\rho_{AB}}>0\), then every rate
\(
0\le R<I(A\rangle B)_{\rho_{AB}}
\)
is achievable for entanglement distillation from the source
\(\bm\rho_{AB}\).
\end{thm}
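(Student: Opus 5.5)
The plan is to combine the general information-spectrum achievability bound for entanglement distillation with the conditional entropy rigidity estimate already proved for MSR sources. For an arbitrary sequence of bipartite states \(\bm\rho_{AB}=(\rho_{A^nB^n})_n\), the one-way hashing/random-coding argument in information-spectrum form gives
\[
E_D(\bm\rho_{AB})\ \ge\ -\overline S(A|B)_{\bm\rho}.
\]
This bound appears in the information-spectrum treatment of distillation (cf.~\cite{Hayashi2006,bowen2006beyond,BowenDattaISIT2006}), and I would use it as a black box. The resulting protocol is one-way LOCC, and it depends on the sequence \(\bm\rho_{AB}\) because it uses the spectral projections of \(\rho_{A^nB^n}\) relative to \(\id_{A^n}\otimes\rho_{B^n}\). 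This is why the theorem is stated as source-dependent.

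If I instead wanted a self-contained one-shot route, I would proceed as follows. First, use a one-shot hashing bound: from \(\rho_{A^nB^n}\) one can distill \(\log M_n \approx H^{\varepsilon}_{\min}(A^n|B^n)_\rho - O(\log(1/\varepsilon))\) ebits with error \(O(\sqrt\varepsilon)\), via decoupling by a random Alice-side hash plus Bob's decoding. Next, relate \(\liminf_n \frac1n H^{\varepsilon}_{\min}(A^n|B^n)\) to \(-\overline S(A|B)_{\bm\rho}\) through the conditional version of the smoothing characterization of~\cite{Datta_2009}, used in the same way as in the proof of Lemma~\ref{lem:spectral-entropy-msr}.

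Next, apply Lemma~\ref{lem:conditional-spectral-bound-msr}, which gives \(\overline S(A|B)_{\bm\rho}\le S(A|B)_\rho\). Hence
\[
E_D(\bm\rho_{AB})\ \ge\ -S(A|B)_\rho = I(A\rangle B)_\rho .
\]
Concretely, fix \(R<I(A\rangle B)_\rho\) and pick \(\gamma\) with \(-R>\gamma>S(A|B)_\rho\ge\overline S(A|B)_{\bm\rho}\). By the definition of the sup-conditional spectral rate, the projections \(P_n:=\{\rho_{A^nB^n}\le 2^{n\gamma}\,\id_{A^n}\otimes\rho_{B^n}\}\) then satisfy \(\Tr P_n\rho_{A^nB^n}\to1\). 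Feeding these into the hashing protocol with \(M_n=2^{nR}\) yields fidelity tending to \(1\). If \(I(A\rangle B)_\rho\le0\), the statement is trivial since \(E_D\ge0\).

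The main obstacle is the first step: making sure the information-spectrum hashing bound is available in exactly the form \(E_D\ge-\overline S(A|B)\), with \(\overline S\) defined relative to the actual marginal \(\rho_{B^n}\) as in~\eqref{eq:cond-sup-spec}, for non-i.i.d.\ sequences. If only a version with an optimized \(\sigma_{B^n}\) is available, it is still stronger, since choosing \(\sigma_{B^n}=\rho_{B^n}\) is admissible, so the argument goes through. The rigidity input is already supplied by Lemma~\ref{lem:conditional-spectral-bound-msr}, which in turn rests on Corollary~\ref{cor:msr-marginals} and Lemma~\ref{lem:spectral-entropy-msr}.
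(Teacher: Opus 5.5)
Your main line is the paper's proof. You use the information-spectrum hashing bound $E_D(\bm\rho_{AB})\ge-\overline S(A|B)_{\bm\rho}$ as a black box; the paper invokes Theorem~1 of~\cite{bowen2008asymptotic}, stated with the actual marginal $\rho_{B^n}$ as in~\eqref{eq:cond-sup-spec}. You then combine it with Lemma~\ref{lem:conditional-spectral-bound-msr} and handle the trivial case $I(A\rangle B)_\rho\le0$ separately. That part is correct and suffices. The elaborations around it, however, contain sign errors, and each would fail if you actually relied on it.

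\emph{The ``concretely'' paragraph.} The condition $\gamma>\overline S(A|B)_{\bm\rho}$ means $-\gamma<\underline D\bigl(\bm\rho_{AB}\big\|(\id_{A^n}\otimes\rho_{B^n})_n\bigr)$. This gives $\Tr\bigl[\{\rho_{A^nB^n}>2^{-n\gamma}\,\id_{A^n}\otimes\rho_{B^n}\}\rho_{A^nB^n}\bigr]\to1$, using the approximate monotonicity of this quantity in the threshold. Your $P_n=\{\rho_{A^nB^n}\le 2^{n\gamma}\,\id_{A^n}\otimes\rho_{B^n}\}$ reverses both the inequality and the exponent. Take $\rho_{A^nB^n}=\Phi_d^{\otimes n}$ and any $\gamma<0$, which your choice $\gamma<-R\le0$ forces. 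Then $P_n$ is the projector onto the orthogonal complement of $\ket{\Phi_d}^{\otimes n}$, so $\Tr P_n\rho_{A^nB^n}=0$.

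\emph{The one-shot route.} Hashing/decoupling yields about $-H_{\max}^{\varepsilon}(A^n|B^n)_\rho$ ebits, equivalently $H_{\min}^{\varepsilon}(A^n|E^n)$ for a purifying system. It does not yield $H_{\min}^{\varepsilon}(A^n|B^n)_\rho$, which is about $-n\log d$ for $\Phi_d^{\otimes n}$. Accordingly, $\lim_{\varepsilon\to0}\liminf_n\frac1n H_{\min}^{\varepsilon}(A^n|B^n)$ corresponds, as in~\cite{Datta_2009}, to the inf-rate $\underline S(A|B)_{\bm\rho}$, not to $-\overline S(A|B)_{\bm\rho}$. The quantity matching $-\overline S(A|B)_{\bm\rho}$ is $-\lim_{\varepsilon\to0}\limsup_n\frac1n H_{\max}^{\varepsilon}(A^n|B^n)$.

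\emph{The fallback about an optimized $\sigma_{B^n}$.} This runs the wrong way. A conditional entropy optimized over $\sigma_B$ is a supremum over $\sigma_B$ (as for $H_{\min}(A|B)$ in Section~\ref{sec:not-def}), hence at least the $\rho_{B^n}$-version. Such a theorem therefore gives a weaker rate and cannot be specialized to $\sigma_{B^n}=\rho_{B^n}$. You would also need an MSR bound on the optimized quantity, which Lemma~\ref{lem:conditional-spectral-bound-msr} does not provide.
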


\begin{proof}
If \(I(A\rangle B)_{\rho_{AB}}\le0\), then the claim follows immediately
from the non-negativity of \(E_D(\bm\rho_{AB})\). We may therefore assume
that \(I(A\rangle B)_{\rho_{AB}}>0\).

We use the information-spectrum entanglement-distillation achievability
theorem for arbitrary sequences of bipartite mixed states, given by
Theorem~1 of~\cite{bowen2008asymptotic}.
Applied to the sequence \(\bm\rho_{AB}\), it states that every rate
\bb
0\le R<-\overline S(A|B)_{\bm\rho}
\ee
is achievable for entanglement distillation from \(\bm\rho_{AB}\).

It remains to compare the conditional spectral entropy rate of the MSR
source with the ordinary conditional entropy of the reference state
\(\rho_{AB}\). By Lemma~\ref{lem:conditional-spectral-bound-msr}, since
\(\bm\rho_{AB}\) is MSR almost i.i.d.\ along \(\rho_{AB}\) with
\(r_n=o(n)\), we have
\bb
\overline S(A|B)_{\bm\rho}
\le
S(A|B)_{\rho_{AB}}.
\ee
Hence
\bb
-\overline S(A|B)_{\bm\rho}
\ge
-S(A|B)_{\rho_{AB}}
=
I(A\rangle B)_{\rho_{AB}} .
\ee
Therefore, for every
\(
0\le R<I(A\rangle B)_{\rho_{AB}}
\),
we have
\bb
R<-\overline S(A|B)_{\bm\rho}.
\ee
The information-spectrum distillation theorem then implies that \(R\) is
achievable for entanglement distillation from \(\bm\rho_{AB}\). Since
this holds for every \(R<I(A\rangle B)_{\rho_{AB}}\), it follows that
\bb
E_D(\bm\rho_{AB})
\ge
I(A\rangle B)_{\rho_{AB}} .
\ee
\end{proof}}

For pure-state sources, entanglement distillation reduces to entanglement
concentration. Applying the preceding achievability result to a pure MSR
almost i.i.d.\ source therefore gives the following source-dependent
concentration statement.

{\begin{cor}
\label{cor:pure-msr-concentration}
Let \(\ket{\psi}_{AB}\in\mathcal H_A\otimes\mathcal H_B\) be a
bipartite pure state, let
\(
\psi_{AB}:=\ket{\psi}\!\bra{\psi}_{AB}
\),
and let
\(
\rho_A:=\Tr_B\psi_{AB}
\).
Let
\(
\bm{\Psi}_{AB}:=(\ketbra{\Psi_{A^nB^n}})_n
\)
be a pure MSR almost i.i.d.\ source along \(\psi_{AB}\), with defect
sizes \(r_n=o(n)\). Then every rate
\(
0\le R<S(\rho_A)
\)
is achievable for entanglement concentration from \(\bm{\Psi}_{AB}\),
possibly by a protocol depending on the source sequence
\(\bm{\Psi}_{AB}\).
\end{cor}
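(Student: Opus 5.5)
The plan is to derive the corollary directly from Theorem~\ref{thm:robust-distillation-achievability}, applied to the pure source \(\bm\Psi_{AB}\) viewed as a (mixed-state) MSR source along \(\psi_{AB}\). Two things have to be checked. First, that a pure MSR source in the sense of~\eqref{eq:pure-pi-msr} is an MSR almost i.i.d.\ source along \(\psi_{AB}\) in the sense of Definition~\ref{def:MSR-source}. Second, that the coherent information of the pure reference state equals \(S(\rho_A)\). Once both are in hand, Theorem~\ref{thm:robust-distillation-achievability} shows that every \(0\le R<I(A\rangle B)_\psi\) is achievable for distillation from \(\bm\Psi_{AB}\). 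For pure-state sequences, distillation is by definition concentration, with the same fidelity and rate criteria, so the claim follows. The protocol is the one supplied by the information-spectrum theorem of~\cite{bowen2008asymptotic}, which is why it may depend on the source.

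For the first step, take the purification \(\ket{\theta}_{(AB)E}:=\ket{\psi}_{AB}\otimes\ket{0}_E\) of \(\psi_{AB}\), with \(E\) one-dimensional or with a fixed \(\ket{0}_E\). As the MSR extension, take \(\Psi_{A^nB^n}\otimes\ketbra{0}^{\otimes n}_{E^n}\). Condition~(i) of Definition~\ref{def:MSR} holds because \(\ket{\Psi_n}\in\Sym^n(\mathcal H_A\otimes\mathcal H_B)\) is invariant under simultaneous permutations of the pairs \((A_iB_i)\), and \(\ket{0}^{\otimes n}\) is invariant as well. Condition~(ii) holds because \(\ket{\Psi_n}\) lies in \(\operatorname{span}\mathcal V(\mathcal H_{AB}^{\otimes n},\ket{\psi}^{\otimes(n-r_n)})\). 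Tensoring each generating vector \(\pi(\ket{\psi}^{\otimes(n-r_n)}\otimes\ket{\Omega})\) with \(\ket{0}^{\otimes n}\) yields \(\pi(\ket{\theta}^{\otimes(n-r_n)}\otimes\ket{\Omega'})\), where \(\ket{\Omega'}=\ket{\Omega}\otimes\ket{0}^{\otimes r_n}\) after reordering factors. The support of the extension is therefore contained in \(\mathcal M_n(\theta,r_n)\). If \(E\) is taken trivially one-dimensional, this step becomes just the observation that the pure definition is a special case of the mixed one.

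For the second step, \(\psi_{AB}\) is pure, so \(S(\psi_{AB})=0\) and \(S(\psi_B)=S(\rho_A)\) by the Schmidt decomposition. Hence \(I(A\rangle B)_\psi=-S(A|B)_\psi=S(\rho_A)\). The only potential obstacle is the bookkeeping in the first step: matching the pure definition~\eqref{eq:pure-pi-msr} with Definition~\ref{def:MSR}, specifically the choice of purification and the fact that trivial \(E\)-factors preserve membership in the span. Everything else is immediate from Theorem~\ref{thm:robust-distillation-achievability}.
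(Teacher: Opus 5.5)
Your proposal is correct and follows the paper's proof: apply Theorem~\ref{thm:robust-distillation-achievability} to the pure source \(\bm\Psi_{AB}\), and use \(I(A\rangle B)_\psi=S(\rho_A)\) for the pure reference state. You also check explicitly that a pure MSR source satisfies Definition~\ref{def:MSR}, taking the purification \(\ket{\psi}_{AB}\otimes\ket{0}_E\) or a one-dimensional \(E\); the paper leaves this step implicit, and your verification is correct.
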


\begin{proof}
Since \(\psi_{AB}\) is pure, we have
\(
I(A\rangle B)_\psi=S(\rho_A)
\).
The claim therefore follows immediately from
Theorem~\ref{thm:robust-distillation-achievability}, applied to the
pure-state sequence \(\bm{\Psi}_{AB}\), which gives
\bb
E_D(\bm{\Psi}_{AB})
\ge
S(\rho_A).
\ee
\end{proof}}

\subsection{Universal entanglement concentration for pure MSR states}

We now turn to a universal version of the preceding pure-state
concentration statement given by Corollary~\ref{cor:pure-msr-concentration}. 

Fix a bipartite pure state
\(\ket{\phi}_{AB}\), let
\(\phi_{AB}:=\ket{\phi}\!\bra{\phi}_{AB}\), and denote by
\(\mathfrak S^{\mathrm{pMSR}}_{\phi}((r_n)_n)\) the class of all pure-state
sequences
\(
\bm\Psi_{AB}:=(\ketbra{\Psi_{A^nB^n}})_n
\)
which are MSR almost i.i.d.\ along \(\phi_{AB}\), with defect sizes
\(r_n\) at blocklength $n$, with $r_n=o(n)$. By~\eqref{eq:pure-pi-msr},
\(
\ket{\Psi_n}_{A^nB^n}\in
\Sym^n(\mathcal H_A\otimes\mathcal H_B)
\)
for each \(n\).

By a universal entanglement-concentration protocol for
\(\mathfrak S^{\mathrm{pMSR}}_{\phi}\), we mean a sequence of LOCC
protocols which depends only on the reference state \(\phi_{AB}\) and on
the desired rate $R$, but not on the particular source sequence
\(\bm\Psi_{AB}\in\mathfrak S^{\mathrm{pMSR}}_{\phi}\). We shall prove
that every rate below \(S(\phi_A)\), where
\(\phi_A:=\Tr_B\phi_{AB}\), is achievable uniformly over this class.

To prove this, we use the following consequence of Schur--Weyl duality proved by
Fawzi and Renner in~\cite{Fawzi-Renner}.
\begin{lemma}[[Lemma~C.1] of~\cite{Fawzi-Renner}]
\label{lem:Fawzi-Renner-Schur}
Let \(\mathcal H_A\) and \(\mathcal H_B\) be Hilbert spaces with
\(\dim(\mathcal H_A)=\dim(\mathcal H_B)=d\), and let
\(\Lambda_{n,d}\) denote the set of Young diagrams of size \(n\) with at
most \(d\) rows. For each \(\lambda\in\Lambda_{n,d}\), let
\(\mathcal U_\lambda\) and \(\mathcal V_\lambda\) denote the irreducible
representations of \(U(d)\) and \(S_n\), respectively, appearing in the
Schur--Weyl decompositions
\(\mathcal H_A^{\otimes n}\simeq
\bigoplus_{\lambda}\mathcal U_{A,\lambda}\otimes\mathcal V_{A,\lambda}\)
and
\(\mathcal H_B^{\otimes n}\simeq
\bigoplus_{\lambda}\mathcal U_{B,\lambda}\otimes\mathcal V_{B,\lambda}\).

Then there exists a family
\(\{\ket{\psi_\lambda}_{\mathcal V_{A,\lambda}\mathcal V_{B,\lambda}}\}_\lambda\)
of maximally entangled normalized vectors such that every vector
\(\ket{\Omega}\in\Sym^n(\mathcal H_A\otimes\mathcal H_B)\) admits the
decomposition
\bb
\ket{\Omega}
=
\sum_{\lambda\in\Lambda_{n,d}}
\ket{\phi_\lambda}_{\mathcal U_{A,\lambda}\mathcal U_{B,\lambda}}
\otimes
\ket{\psi_\lambda}_{\mathcal V_{A,\lambda}\mathcal V_{B,\lambda}},
\ee
where the vectors
\(\ket{\phi_\lambda}_{\mathcal U_{A,\lambda}\mathcal U_{B,\lambda}}\)
are not necessarily normalized.
\end{lemma}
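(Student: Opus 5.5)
The plan is to characterise $\Sym^n(\mathcal H_A\otimes\mathcal H_B)$ as the subspace of $(\mathcal H_A\otimes\mathcal H_B)^{\otimes n}\cong\mathcal H_A^{\otimes n}\otimes\mathcal H_B^{\otimes n}$ left invariant by the diagonal action $\pi\mapsto P_A(\pi)\otimes P_B(\pi)$ of $S_n$. Here $P_A,P_B$ are the permutation representations on $A^n$ and $B^n$. The structure of this invariant subspace will then follow from Schur--Weyl duality together with Schur's lemma. First, I will recall that for $\dim\mathcal H>1$ the span of $\{\ket{\phi}^{\otimes n}\}$ equals the set of permutation-invariant vectors. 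This follows from polarization: the invariant subspace is the range of the symmetrizer, and $\frac{1}{n!}\sum_\pi\pi$ applied to product vectors is spanned by $\ket{\phi}^{\otimes n}$. Applied to the composite system $AB$, this gives the description above.

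Next, I will apply Schur--Weyl duality to each factor to write
\[
\mathcal H_A^{\otimes n}\otimes\mathcal H_B^{\otimes n}\simeq\bigoplus_{\lambda,\mu\in\Lambda_{n,d}}\mathcal U_{A,\lambda}\otimes\mathcal U_{B,\mu}\otimes\mathcal V_{A,\lambda}\otimes\mathcal V_{B,\mu}.
\]
Here $S_n$ acts only on $\mathcal V_{A,\lambda}\otimes\mathcal V_{B,\mu}$, by $\pi\mapsto v_\lambda(\pi)\otimes v_\mu(\pi)$, and trivially on the unitary-group factors. Because $\dim\mathcal H_A=\dim\mathcal H_B=d$, both index sets are the same $\Lambda_{n,d}$, and $\mathcal V_{A,\lambda}$ and $\mathcal V_{B,\lambda}$ are copies of the same Specht module, which I identify. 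The invariant subspace therefore decomposes as
\[
\bigoplus_{\lambda,\mu}\mathcal U_{A,\lambda}\otimes\mathcal U_{B,\mu}\otimes\bigl(\mathcal V_{\lambda}\otimes\mathcal V_{\mu}\bigr)^{S_n}.
\]
Projecting a general invariant vector onto each isotypic block reduces the problem to computing $(\mathcal V_\lambda\otimes\mathcal V_\mu)^{S_n}$.

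For this computation I use that $(\mathcal V_\lambda\otimes\mathcal V_\mu)^{S_n}\cong\mathrm{Hom}_{S_n}(\mathcal V_\lambda^{*},\mathcal V_\mu)$. Every irreducible representation of $S_n$ can be realised by real orthogonal matrices, for instance in Young's orthogonal form. Hence $\mathcal V_\lambda^*\cong\mathcal V_\lambda$, and by Schur's lemma this Hom-space is zero for $\lambda\neq\mu$ and one-dimensional for $\lambda=\mu$. To identify the spanning vector explicitly, I fix a real orthonormal basis $\{\ket{i}\}$ of $\mathcal V_\lambda$ in which each $v_\lambda(\pi)$ is a real orthogonal matrix $M$. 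I then set
\[
\ket{\psi_\lambda}:=d_\lambda^{-1/2}\sum_i\ket{i}\ket{i}, \qquad d_\lambda:=\dim\mathcal V_\lambda.
\]
This vector is normalized and maximally entangled, and it is invariant because $(M\otimes M)\sum_i\ket{ii}=\sum_i\ket{i}\otimes MM^{T}\ket{i}=\sum_i\ket{ii}$.

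Combining these steps, every $\ket{\Omega}\in\Sym^n(\mathcal H_A\otimes\mathcal H_B)$ has vanishing components in the blocks with $\lambda\neq\mu$. Its $(\lambda,\lambda)$-component lies in $\mathcal U_{A,\lambda}\otimes\mathcal U_{B,\lambda}\otimes\mathbb C\ket{\psi_\lambda}$, so it equals $\ket{\phi_\lambda}\otimes\ket{\psi_\lambda}$ for some unnormalized $\ket{\phi_\lambda}$. Summing over $\lambda$ gives the stated decomposition, and the family $\{\ket{\psi_\lambda}\}$ is independent of $\Omega$. The main obstacle is the bookkeeping in the third step. One must fix the identification $\mathcal V_{A,\lambda}\cong\mathcal V_{B,\lambda}$ consistently, using the same orthogonal form on both sides, so that the invariant vector is genuinely $\sum_i\ket{ii}$ and not a twisted version $\sum_i\ket{i}\otimes W\ket{i}$. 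The reality (self-duality) of the $S_n$ irreducibles is what makes this choice possible; a twisted invariant vector would still be maximally entangled, so the conclusion holds in any case.
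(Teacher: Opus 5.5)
The paper does not prove this lemma. It quotes it from Fawzi--Renner and uses it as a black box in the proof of Theorem~\ref{thm:universal-concentration-pi-pure-msr}, so there is no in-paper argument to compare with. Your proposal is a correct, self-contained proof.

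\emph{The argument.} You identify $\Sym^n(\mathcal H_A\otimes\mathcal H_B)$, via polarization, with the vectors invariant under the diagonal action $\pi\mapsto P_A(\pi)\otimes P_B(\pi)$. You then decompose both factors by Schur--Weyl and use that the projections onto the $(\lambda,\mu)$ blocks commute with this action. This reduces everything to $(\mathcal V_\lambda\otimes\mathcal V_\mu)^{S_n}\cong\mathrm{Hom}_{S_n}(\mathcal V_\lambda^*,\mathcal V_\mu)$. Self-duality of the $S_n$-irreducibles and Schur's lemma make this space zero for $\lambda\neq\mu$ and one-dimensional for $\lambda=\mu$. This also shows why the family $\{\ket{\psi_\lambda}\}$ can be chosen independently of $\ket{\Omega}$, which the existential form of the statement requires.

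\emph{Implicit but harmless points.}
The Schur--Weyl isomorphisms should be taken unitary, so that ``maximally entangled'' refers to the Hilbert-space structure on $\mathcal V_{A,\lambda}$ and $\mathcal V_{B,\lambda}$. Since equivalent unitary irreducibles are unitarily equivalent, the identification $\mathcal V_{A,\lambda}\cong\mathcal V_{B,\lambda}$ can also be taken unitary.

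Your closing remark about twisted invariant vectors is correct, for the following reason. A nonzero invariant vector with coefficient matrix $X$ satisfies $v^A_\lambda(\pi)X=X\,\overline{v^B_\lambda(\pi)}$. Hence $X^\dagger X$ commutes with an irreducible representation, so $X^\dagger X\propto\id$. Maximal entanglement is therefore forced regardless of the choice of basis.

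The restriction $\dim\mathcal H>1$ in your first step is unnecessary; the statement holds in every dimension.

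\emph{What your route gives beyond the citation.} It never uses $d_A=d_B$, because the $S_n$-irreducibles are labelled by $\lambda\vdash n$ independently of the local dimension. For $d_A\neq d_B$, the same argument gives the decomposition with $\lambda$ ranging over $\Lambda_{n,\min\{d_A,d_B\}}$. The isometric embedding the paper performs before invoking the lemma in the concentration protocol is therefore not needed.
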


\begin{rem}
For notational simplicity, we suppress the dependence on the blocklength
\(n\) in the Schur--Weyl spaces and write
\(
\mathcal U_{A,\lambda},
\mathcal U_{B,\lambda},
\mathcal V_{A,\lambda}
\)
and
\(
\mathcal V_{B,\lambda}
\).
More precisely, for each \(n\) and each Young diagram
\(\lambda\in\Lambda_{n,d}\), these spaces should be written as
\(
\mathcal U_{A,\lambda}^{(n)},
\mathcal U_{B,\lambda}^{(n)},
\mathcal V_{A,\lambda}^{(n)}
\)
and
\(
\mathcal V_{B,\lambda}^{(n)}
\).
The corresponding dimensions and maximally entangled vectors therefore
also depend on \(n\), although this dependence will usually be suppressed
in the notation.
\end{rem}

\begin{rem}
    If \(d_A\neq d_B\) (where $d_A = {\rm dim} {\mathcal H}_A$ and $d_B = {\rm dim} {\mathcal H}_B$ ) we embed the smaller Hilbert space isometrically into a
Hilbert space of dimension \(d=\max\{d_A,d_B\}\). This does not affect the
state or the LOCC protocol, and allows Lemma~\ref{lem:Fawzi-Renner-Schur} to be applied.
\end{rem}

In the Schur--Weyl decomposition, each block is a tensor product of a
representation space and a multiplicity space. The Hayashi--Matsumoto
concentration protocol extracts entanglement from these multiplicity
spaces, whose dimensions determine the amount of entanglement obtained
from each block.

The next lemma provides the key link between the spectral concentration
results of the previous section and the Schur--Weyl entanglement
concentration protocol of Section~\ref{sec:ent-conc}. It shows that, uniformly over the class of pure
MSR sources along a fixed reference state \(\ket{\phi}_{AB}\), the weight
of the reduced state \(\rho_{A^n}\) on Schur sectors whose multiplicity
spaces have dimension significantly below
\(2^{nS(\rho_A)}\) becomes asymptotically negligible. This uniform
concentration property will allow us to apply the Hayashi--Matsumoto
protocol of~\cite{hayashi2002universal} universally throughout the MSR class.

\begin{lemma}[(Uniform Schur-block concentration for pure MSR states)]
\label{lem:schur-block-concentration-pure-msr}
Let \(\ket{\phi}_{AB}\in\mathcal H_A\otimes\mathcal H_B\) be a bipartite
pure state, let \(\rho_A:=\Tr_B\ket{\phi}\!\bra{\phi}_{AB}\), and let
\((r_n)_n\) satisfy \(r_n=o(n)\).
Let
\(\mathfrak S^{\rm pMSR}_{\phi}((r_n)_n)\)
denote the class of all pure MSR sources along
\(\ket{\phi}_{AB}\), with defect size at most \(r_n\) at
blocklength \(n\).

Consider the Schur--Weyl decomposition
\bb
\mathcal H_A^{\otimes n}
=
\bigoplus_{\lambda\in\Lambda_{n,d_A}}
\mathcal U_{A,\lambda}\otimes\mathcal V_{A,\lambda},
\ee
and let \(P_{A,\lambda}\) denote the projector onto
\(\mathcal U_{A,\lambda}\otimes\mathcal V_{A,\lambda}\). Then, for every \(R<S(\rho_A)\),
\bb
\sup_{\bm\Psi\in\mathfrak S^{\rm pMSR}_{\phi}((r_n)_n)}
\sum_{\lambda:\,\frac1n \log\dim\mathcal V_{A,\lambda}<R}
\Tr\!\left[
P_{A,\lambda}\rho_{A^n}
\right]
\to 0 \quad \text{as} \quad n \to \infty,
\ee
where
\(
\rho_{A^n}
:=
\Tr_{B^n}\ket{\Psi_n}\!\bra{\Psi_n}.
\)
\end{lemma}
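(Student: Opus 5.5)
The plan is to reduce the Schur-block weight bound to the uniform lower spectral tail bound of Corollary~\ref{cor:msr-lower-spectral-tail}, applied to the reduced source \(\bm\rho_A=(\rho_{A^n})_n\).

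\textbf{Step 1: the reduced states form an MSR class.} By Corollary~\ref{cor:msr-marginals} (taking the partial trace over \(B\) instead of \(A\)), each reduced state \(\rho_{A^n}=\Tr_{B^n}\ket{\Psi_n}\!\bra{\Psi_n}\) lies in \(S^n(\mathcal H_A,\rho_A^{\otimes(n-s_n)})\) with \(s_n\le r_n\). The structure of that lemma's proof, which maps MSR extensions and supports through a local channel, does not depend on the particular source. So the whole family \(\{\rho_{A^n}\}\), over \(\mathfrak S^{\rm pMSR}_\phi((r_n)_n)\), sits inside the MSR class with defect bound \(r_n\) to which Corollary~\ref{cor:msr-lower-spectral-tail} applies uniformly.

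\textbf{Step 2: a block-wise spectral bound.} Since \(\ket{\Psi_n}\) is symmetric, \(\rho_{A^n}\) commutes with all permutations of \(A^n\). By Schur--Weyl duality it therefore has the block form
\[
\rho_{A^n}=\bigoplus_\lambda \omega_\lambda\otimes \frac{\id_{\mathcal V_{A,\lambda}}}{\dim\mathcal V_{A,\lambda}}.
\]
Equivalently, Lemma~\ref{lem:Fawzi-Renner-Schur} shows directly that the \(\mathcal V\)-part of each block is maximally mixed. Every eigenvalue of \(\rho_{A^n}\) on block \(\lambda\) thus has the form \(\mu/\dim\mathcal V_{A,\lambda}\), where \(\mu\) is an eigenvalue of \(\omega_\lambda\).

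\textbf{Step 3: splitting the block.} Fix \(R<S(\rho_A)\) and choose \(\gamma\) with \(R<\gamma<S(\rho_A)\). On a block with \(\dim\mathcal V_{A,\lambda}<2^{nR}\), split \(\omega_\lambda\) into the part with eigenvalues \(\mu\ge 2^{-n(\gamma-R)}\) and the rest.
\begin{itemize}
\item On the first part, the eigenvalues of \(\rho_{A^n}\) are at least \(2^{-n(\gamma-R)}2^{-nR}=2^{-n\gamma}\), so this part lies under the projector \(\{\rho_{A^n}\ge2^{-n\gamma}\id\}\).
\item The second part carries total weight at most \(2^{-n(\gamma-R)}\dim\mathcal U_{A,\lambda}\).
\end{itemize}
Summing over the at most \((n+1)^{d_A}\) Young diagrams, and using \(\dim\mathcal U_{A,\lambda}\le(n+1)^{d_A^2}\), gives
\[
\sum_{\lambda:\,\frac1n\log\dim\mathcal V_{A,\lambda}<R}\Tr[P_{A,\lambda}\rho_{A^n}]
\le \Tr\!\left[\{\rho_{A^n}\ge2^{-n\gamma}\id\}\rho_{A^n}\right]+(n+1)^{d_A+d_A^2}2^{-n(\gamma-R)}.
\]

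\textbf{Step 4: conclusion.} The second term is independent of the source and tends to zero. The first term tends to zero uniformly over the class by Corollary~\ref{cor:msr-lower-spectral-tail}. Taking the supremum over the class and then \(n\to\infty\) yields the claim.

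The main obstacle is Step 1: justifying that the reduced states fall into the class over which the uniformity of Corollary~\ref{cor:msr-lower-spectral-tail} holds. This requires constructing an explicit MSR extension of \(\rho_{A^n}\) along a purification of \(\rho_A\). The natural choice is to use \(\ket{\Psi_n}\) itself, with \(E=B\) and \(\ket{\theta}=\ket\phi\); this is permutation-invariant and supported in \(\mathcal M_n(\phi,r_n)\), so Definition~\ref{def:MSR} is satisfied directly, with uniformity automatic.
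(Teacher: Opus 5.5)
Your proof is correct and is essentially the paper's proof. Both pass to the reduced source $\rho_{A^n}$; your direct check that $\ket{\Psi_n}\!\bra{\Psi_n}$ itself, with $E=B$ and $\ket{\theta}=\ket{\phi}$, is an MSR extension of $\rho_{A^n}$ is a clean way to do this. Both then invoke the uniform tail bound of Corollary~\ref{cor:msr-lower-spectral-tail} for $Q_n:=\{\rho_{A^n}\ge 2^{-n\gamma}\id\}$, and absorb the rest into a $\mathrm{poly}(n)\,2^{-n(\gamma-R)}$ term. The only difference is that the paper does not use the Schur--Weyl block form of $\rho_{A^n}$ with maximally mixed multiplicity part: it bounds the trace of the projector $\Pi_n^{<R}$ onto the low-dimension blocks by $\mathrm{poly}(n)\,2^{nR}$ and uses $(\id-Q_n)\rho_{A^n}(\id-Q_n)\le 2^{-n\gamma}(\id-Q_n)$. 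So it never needs permutation invariance of $\rho_{A^n}$ at that step, whereas your block-wise eigenvalue split reaches the same estimate through that extra structure.
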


\begin{proof}
Let
\(
\bm\Psi
=
(\ket{\Psi_n}\!\bra{\Psi_n}_{A^nB^n})_{n}
\in
\mathfrak S^{\rm pMSR}_{\phi}((r_n)_n)
\)
be arbitrary, and let
\(
\bm\rho_A
=
(\rho_{A^n})_{n}.
\)

By stability of the MSR property under local tensor-power channels,
Lemma~\ref{lem:msr-stability-local-channels}, applied to the local
channel \(\Tr_B:\mathcal D(\mathcal H_B)\to\mathbb C\), the reduced
source \(\bm\rho_A=(\rho_{A^n})_n\) is MSR almost i.i.d.\ along
\(\rho_A\), with defect size at most \(r_n\).

Hence, by Corollary~\ref{cor:msr-lower-spectral-tail}, for every
\(\gamma<S(\rho_A)\),
\bb\label{eq:unif-conv}
\sup_{\bm\Psi\in\mathfrak S^{\rm pMSR}_{\phi}((r_n)_n)}
\Tr\!\left[
\left\{
\rho_{A^n}
\ge
2^{-n\gamma}\id_{A^n}
\right\}
\rho_{A^n}
\right]
\to 0 \quad \text{as} \quad n \to \infty .
\ee
Fix \(R<S(\rho_A)\), and choose
\(R'\) and \(\gamma\) such that
\(
R<R'<\gamma<S(\rho_A).
\)

\noindent
Define
\bb
\Pi_n^{<R}
:=
\sum_{\lambda:\,\frac1n\log\dim\mathcal V_{A,\lambda}<R}
P_{A,\lambda}.
\ee

{Since \(d_A\) is fixed, the number of Young diagrams
\(|\Lambda_{n,d_A}|\) is polynomial in \(n\). Moreover, by the Weyl
dimension formula, the dimensions of the \(U(d_A)\)-irreducible
representation spaces \(\mathcal U_{A,\lambda}\) are polynomially
bounded in \(n\), uniformly in \(\lambda\in\Lambda_{n,d_A}\). Hence
\bb
\sum_{\lambda\in\Lambda_{n,d_A}}\dim\mathcal U_{A,\lambda}
\le
\operatorname{poly}(n).
\ee
Since every \(\lambda\) appearing in the definition of \(\Pi_n^{<R}\)
satisfies \(\dim\mathcal V_{A,\lambda}<2^{nR}\), we obtain
\bb
\begin{aligned}
\Tr\Pi_n^{<R}
&=
\sum_{\lambda:\,n^{-1}\log\dim\mathcal V_{A,\lambda}<R}
\dim\mathcal U_{A,\lambda}\,
\dim\mathcal V_{A,\lambda}  \\
&\le
2^{nR}
\sum_{\lambda\in\Lambda_{n,d_A}}\dim\mathcal U_{A,\lambda}
\le
\operatorname{poly}(n)2^{nR}.
\end{aligned}
\ee}
{Since \(R'>R\), the polynomial prefactor is negligible compared with
\(2^{n(R'-R)}\). Hence, for all sufficiently large \(n\),
\bb
\Tr\Pi_n^{<R}
\le
\operatorname{poly}(n)2^{nR}
\le
2^{nR'}.
\ee}

Let $
Q_n
:=
\left\{
\rho_{A^n}
\ge
2^{-n\gamma}\id_{A^n}
\right\}.
$
Since \(Q_n\) is a spectral projector of
\(\rho_{A^n}\), it commutes with
\(\rho_{A^n}\). Therefore
\bb
\rho_{A^n}
=
Q_n\rho_{A^n}Q_n
+
(\id_{A^n}-Q_n)\rho_{A^n}(\id_{A^n}-Q_n),
\ee
the cross terms vanishing identically. Consequently,
\bb
\Tr[\Pi_n^{<R}\rho_{A^n}]
=
\Tr[\Pi_n^{<R}Q_n\rho_{A^n}Q_n]
+
\Tr[\Pi_n^{<R}(\id_{A^n}-Q_n)\rho_{A^n}(\id_{A^n}-Q_n)].
\ee
Since \(0\le\Pi_n^{<R}\le\id_{A^n}\),
\bb
\Tr[\Pi_n^{<R}Q_n\rho_{A^n}Q_n]
\le
\Tr[Q_n\rho_{A^n}].
\ee
By the uniform spectral-tail estimate in~\eqref{eq:unif-conv},
the term \(\Tr[Q_n\rho_{A^n}]\) converges to \(0\) uniformly over
\(\mathfrak S^{\rm pMSR}_{\phi}((r_n)_n)\).

For the second term, the definition of \(Q_n\) implies
\bb
(\id_{A^n}-Q_n)\rho_{A^n}(\id_{A^n}-Q_n)
\le
2^{-n\gamma}(\id_{A^n}-Q_n).
\ee

Hence
\bb
\Tr[\Pi_n^{<R}(\id_{A^n}-Q_n)\rho_{A^n}(\id_{A^n}-Q_n)]
\le
2^{-n\gamma}\Tr\Pi_n^{<R}
\le
2^{-n(\gamma-R')},
\ee
which converges to \(0\) exponentially fast.

Combining the two bounds yields
\bb
\sup_{\bm\Psi\in\mathfrak S^{\rm pMSR}_{\phi}((r_n)_n)}
\Tr[\Pi_n^{<R}\rho_{A^n}]
\to 0 \quad \text{as} \quad n \to \infty .
\ee

Since
\bb
\Tr[\Pi_n^{<R}\rho_{A^n}]
=
\sum_{\lambda:\,\frac1n\log\dim\mathcal V_{A,\lambda}<R}
\Tr[P_{A,\lambda}\rho_{A^n}],
\ee
the claim follows.
\end{proof}

The preceding lemma shows that, uniformly over the pure MSR class, almost
all of the state is supported on Schur blocks whose multiplicity spaces
have exponential rate \(S(\rho_A)\). This is precisely the structural
input needed for the Hayashi--Matsumoto concentration protocol: after
projecting onto the Schur--Weyl decomposition, the protocol extracts
entanglement from these multiplicity spaces. Since the relevant block
concentration holds uniformly over the MSR class, the same Schur--Weyl
protocol works for all sources in the class. We therefore obtain the
following universal concentration theorem.

\begin{thm}[(Universal entanglement concentration for pure MSR sources)]
\label{thm:universal-concentration-pi-pure-msr}
Let \(\ket{\phi}_{AB}\in\mathcal H_A\otimes\mathcal H_B\) be a bipartite
pure state, let
\(
\rho_A:=\Tr_B\ket{\phi}\!\bra{\phi}_{AB}
\),
and let \((r_n)_n\) satisfy \(r_n=o(n)\). Let
\(\mathfrak S^{\rm pMSR}_{\phi}((r_n)_n)\) denote the class of pure MSR
sources along \(\ket{\phi}_{AB}\) with defect size at most \(r_n\) at
blocklength \(n\).

Then every rate
\(
0\le R<S(\rho_A)
\)
is universally achievable for entanglement concentration over
\(\mathfrak S^{\rm pMSR}_{\phi}((r_n)_n)\). More precisely, for every such
\(R\), there exists a sequence of LOCC protocols \((\mathcal C_n)_n\),
depending only on \(R\) and the local Hilbert spaces, and not on the
particular source in the class, such that, with
\(M_n:=\lfloor 2^{nR}\rfloor\),
\bb
\sup_{\bm\Psi=(\Psi_n)_n \in\mathfrak S^{\rm pMSR}_{\phi}((r_n)_n)}
\left[
1-
F\!\left(
\mathcal C_n(\Psi_n),
\Phi_{M_n}
\right)
\right]
\to 0
\quad\text{as } n\to\infty .
\ee
The protocol itself depends only on the target rate and the local Hilbert spaces;
the reference state enters only through the admissible range \(R<S(\rho_A)\).
\end{thm}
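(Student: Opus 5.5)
Since each \(\ket{\Psi_n}\) lies in \(\Sym^n(\mathcal H_A\otimes\mathcal H_B)\), Lemma~\ref{lem:Fawzi-Renner-Schur} lets us write
\[
\ket{\Psi_n}=\sum_{\lambda}\ket{\phi_\lambda}_{\mathcal U_{A,\lambda}\mathcal U_{B,\lambda}}\otimes\ket{\psi_\lambda}_{\mathcal V_{A,\lambda}\mathcal V_{B,\lambda}},
\]
where the maximally entangled vectors \(\ket{\psi_\lambda}\) are fixed independently of the source. The protocol \(\mathcal C_n\) is the following Hayashi--Matsumoto type procedure, built only from this decomposition and \(R\).

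First, Alice measures the Young-diagram projectors \(\{P_{A,\lambda}\}_\lambda\) and sends \(\lambda\) to Bob. If \(\frac1n\log\dim\mathcal V_{A,\lambda}\ge R\), Alice and Bob discard the \(\mathcal U\) registers. They are then left with \(\ket{\psi_\lambda}\), a maximally entangled state of Schmidt rank \(D_\lambda:=\dim\mathcal V_{A,\lambda}\ge 2^{nR}\ge M_n\).

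Second, they convert \(\ket{\psi_\lambda}\) deterministically into \(\Phi_{M_n}\). When \(M_n\) divides \(D_\lambda\) this is a local isometry plus discarding. In general one uses Nielsen's theorem: a maximally entangled state of rank \(D\ge M\) majorization-converts exactly to one of rank \(M\) by LOCC. Alternatively, Alice projects onto an \(M_n\lfloor D_\lambda/M_n\rfloor\)-dimensional subspace, which succeeds with probability at least \(1-M_n/D_\lambda\). To keep the bookkeeping simplest, I would invoke Nielsen's theorem.

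Third, if \(\frac1n\log D_\lambda<R\), they output an arbitrary fixed state. Every step depends only on the local Schur--Weyl structure and \(R\).

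To analyze fidelity, note that Bob's reduced state is also block diagonal and the blocks \(\lambda\) for \(A\) and \(B\) are perfectly correlated. This holds because \(\ket{\Psi_n}\in\Sym^n\) forces the same \(\lambda\) on both sides, which is visible directly from the decomposition above. Hence the post-measurement state for outcome \(\lambda\) is exactly \(\ket{\phi_\lambda}\otimes\ket{\psi_\lambda}\), up to normalization, and it is a product across \(\mathcal U\) and \(\mathcal V\). Tracing out \(\mathcal U\) therefore leaves exactly \(\ket{\psi_\lambda}\). The protocol's output is thus
\[
\sum_{\lambda\in \text{good}}p_\lambda\,\Phi_{M_n}+\sum_{\lambda\in\text{bad}}p_\lambda\,\tau,
\qquad p_\lambda=\|\phi_\lambda\|^2=\Tr[P_{A,\lambda}\rho_{A^n}].
\]
By concavity of fidelity, or directly since \(F^2\) is linear in the first argument when the second argument is pure,
\[
F\big(\mathcal C_n(\Psi_n),\Phi_{M_n}\big)^2\ \ge\ 1-\sum_{\lambda:\frac1n\log\dim\mathcal V_{A,\lambda}<R}\Tr[P_{A,\lambda}\rho_{A^n}].
\]

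Lemma~\ref{lem:schur-block-concentration-pure-msr} says the right-hand sum tends to \(0\) uniformly over \(\mathfrak S^{\rm pMSR}_\phi((r_n)_n)\), which gives the uniform fidelity statement with \(M_n=\lfloor 2^{nR}\rfloor\). The main point needing care is the second step, the exact deterministic conversion from \(\ket{\psi_\lambda}\) to \(\Phi_{M_n}\) when \(M_n\nmid D_\lambda\). A second point is confirming that tracing out the \(\mathcal U\) registers leaves \(\ket{\psi_\lambda}\) pure. This relies on the source-independent product structure supplied by Lemma~\ref{lem:Fawzi-Renner-Schur}, which is precisely what makes the protocol universal.
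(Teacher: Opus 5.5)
Your proposal is correct and is essentially the paper's proof: the Fawzi--Renner decomposition, a local Schur measurement with perfectly correlated labels, discarding the $\mathcal U$ registers, exact conversion to $\Phi_{M_n}$ by Nielsen's theorem, the bound $F^2\ge 1-p_{\rm fail}$, and Lemma~\ref{lem:schur-block-concentration-pure-msr} for uniformity. The differences are cosmetic: the paper lets both parties measure and thresholds at $\dim\mathcal V_{A,\lambda}\ge M_n$, and, like the paper, you should embed into a common dimension when $d_A\neq d_B$, since Lemma~\ref{lem:Fawzi-Renner-Schur} is stated for equal dimensions. Do keep Nielsen's theorem rather than your projection alternative, because with the threshold $D_\lambda\ge 2^{nR}$ the success bound $1-M_n/D_\lambda$ need not tend to $1$ unless you raise the threshold to some $R'\in(R,S(\rho_A))$.
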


\begin{proof}
Fix \(0\le R<S(\rho_A)\), and set
\(
M_n:=\lfloor 2^{nR}\rfloor
\).
The protocol is the Schur--Weyl concentration protocol of
Hayashi and Matsumoto~\cite{hayashi2002universal}.
Alice and Bob perform the local Schur measurements, namely the projective
measurements
\(
\{P_{A,\lambda}\}_{\lambda}
\)
and
\(
\{P_{B,\lambda}\}_{\lambda}
\),
where \(P_{A,\lambda}\) and \(P_{B,\lambda}\) denote the orthogonal
projections onto the Schur subspaces
\(
\mathcal U_{A,\lambda}\otimes\mathcal V_{A,\lambda}
\)
and
\(
\mathcal U_{B,\lambda}\otimes\mathcal V_{B,\lambda}
\),
respectively. If \(d_A\neq d_B\), we first embed the smaller Hilbert space isometrically
into an auxiliary Hilbert space of dimension
\(
d:=\max\{d_A,d_B\}
\),
as described above. We then apply the Schur--Weyl decomposition to the
resulting pair of \(d\)-dimensional local systems.
Let
\(
\bm\Psi=(\ket{\Psi_n}\!\bra{\Psi_n}_{A^nB^n})_{n\ge1}
\in
\mathfrak S^{\rm pMSR}_{\phi}((r_n)_n)
\).
Since \(\ket{\Psi_n}
\in
\Sym^n\!\left(
\mathcal H_A\otimes\mathcal H_B,
\ket{\phi}_{AB}^{\otimes(n-r_n)}
\right)\), and hence 
is permutation-invariant,
Lemma~\ref{lem:Fawzi-Renner-Schur} gives
\bb
\ket{\Psi_n}
=
\sum_{\lambda}
\ket{\varphi_{n,\lambda}}_{\mathcal U_{A,\lambda}\mathcal U_{B,\lambda}}
\otimes
\ket{\psi_{\lambda}}_{\mathcal V_{A,\lambda}\mathcal V_{B,\lambda}},
\ee
where each \(\ket{\psi_{\lambda}}\) is a normalized maximally entangled
state on
\(
\mathcal V_{A,\lambda}\otimes\mathcal V_{B,\lambda}
\),
and the vectors \(\ket{\varphi_{n,\lambda}}\) are not necessarily
normalized.

Thus only matching Young diagrams occur on Alice's and Bob's sides. In
particular, the two Schur measurements produce the same label
\(\lambda\) with probability one. Conditional on outcome \(\lambda\),
Alice and Bob discard the \(\mathcal U\)-systems and retain an exact
maximally entangled state on
\(
\mathcal V_{A,\lambda}\otimes\mathcal V_{B,\lambda}
\),
whose Schmidt rank is
\(
\dim\mathcal V_{A,\lambda}=\dim\mathcal V_{B,\lambda}
\).

{If
\(
\dim\mathcal V_{A,\lambda}\ge M_n
\),
then, conditional on the measurement outcome \(\lambda\), Alice and Bob
convert the resulting maximally entangled state deterministically by LOCC
into \(\Phi_{M_n}\). This is possible because its Schmidt rank is at least
\(M_n\), and follows, for example, from Nielsen's majorization
theorem~\cite{Nielsen1999}. If instead
\(
\dim\mathcal V_{A,\lambda}<M_n
\),
then the protocol outputs an arbitrary fixed state. This protocol depends
only on \(R\) and on the Schur--Weyl decomposition, and not on the
particular source \(\bm\Psi\).
}

Let
\(
\rho_{A^n}:=\Tr_{B^n}\ket{\Psi_n}\!\bra{\Psi_n}
\).
Since only matching Young diagrams occur, we have
\bb
(P_{A,\lambda}\otimes\id)\ket{\Psi_n}
=
(\id\otimes P_{B,\lambda})\ket{\Psi_n}
=
(P_{A,\lambda}\otimes P_{B,\lambda})\ket{\Psi_n}.
\ee
Hence the probability of obtaining the common Schur label \(\lambda\) is
\(
\Tr[P_{A,\lambda}\rho_{A^n}]
\).
Therefore the failure probability is
\bb
p_{\rm fail}^{(n)}(\Psi_n)
=
\sum_{\lambda:\,\dim\mathcal V_{A,\lambda}<M_n}
\Tr\!\left[
P_{A,\lambda}\rho_{A^n}
\right].
\ee
Since \(M_n=\lfloor 2^{nR}\rfloor\), the condition
\(
\dim\mathcal V_{A,\lambda}<M_n
\)
implies
\(
\frac1n\log\dim\mathcal V_{A,\lambda}<R
\).
Therefore, by Lemma~\ref{lem:schur-block-concentration-pure-msr},
\bb
\sup_{\bm\Psi\in\mathfrak S^{\rm pMSR}_{\phi}((r_n)_n)}
p_{\rm fail}^{(n)}(\Psi_n)
\to 0
\quad\text{as } n\to\infty .
\ee

For each input source, the output of the protocol has the form
\bb
(1-p_{\rm fail}^{(n)}(\Psi_n))\Phi_{M_n}
+
p_{\rm fail}^{(n)}(\Psi_n)\sigma_n,
\ee
where \(\sigma_n\) is the state output conditional on failure, i.e.\ on the event
\(
\dim\mathcal V_{A,\lambda}<M_n
\).

Since
\(\Phi_{M_n}\) is pure, for any state \(\omega\),
\(
F(\omega,\Phi_{M_n})=\sqrt{\Tr[\Phi_{M_n}\omega]}
\).
Thus, for \(p=p_{\rm fail}^{(n)}(\Psi_n)\),
\bb
F\!\left(
(1-p)\Phi_{M_n}+p\sigma_n,\Phi_{M_n}
\right)
=
\sqrt{(1-p)+p\,\Tr[\Phi_{M_n}\sigma_n]}
\ge
\sqrt{1-p}.
\ee
It follows that
\bb
F\!\left(
\mathcal C_n(\Psi_n),
\Phi_{M_n}
\right)
\ge
\sqrt{1-p_{\rm fail}^{(n)}(\Psi_n)}.
\ee
Since \(p_{\rm fail}^{(n)}(\Psi_n)\to0\) uniformly over
\(\mathfrak S^{\rm pMSR}_{\phi}((r_n)_n)\), the fidelity converges to
\(1\) uniformly over this class.

Finally, since \(M_n=\lfloor 2^{nR}\rfloor\), we have
\bb
\liminf_{n\to\infty}\frac1n\log M_n=R.
\ee
Thus the rate \(R\) is universally achievable over
\(\mathfrak S^{\rm pMSR}_{\phi}((r_n)_n)\).
\end{proof}

\section{Robustness of entanglement dilution for MSR states}
\label{sec:msr-ent-dil}

We now turn to the reverse task, entanglement dilution. While the proof
of the entanglement concentration theorem
(Theorem~\ref{thm:universal-concentration-pi-pure-msr}) relies on
Schur--Weyl duality and the Hayashi--Matsumoto protocol, entanglement
dilution is naturally approached through information-spectrum techniques,
together with the MSR entropy-rigidity results established earlier.

We begin with a lemma which provides the technical bridge between the MSR
structure of the target states \(\rho_{A^nB^n}\) and the ensemble-based
description needed for entanglement dilution. In the i.i.d.\ setting, a
cq-extension of \(\rho_{AB}\) with pure conditional states,
\(
\rho_{RAB}
=
\sum_i p_i \ket{i}\!\bra{i}_R
\otimes
\ket{\psi_i}\!\bra{\psi_i}_{AB},
\)
induces, for each \(n\), the tensor-product ensemble
\(
\rho_{AB}^{\otimes n}
=
\sum_{i^n} p_{i^n}
\ket{\psi_{i^n}}\!\bra{\psi_{i^n}}_{A^nB^n},
\)
where
\(
p_{i^n}:=p_{i_1}\cdots p_{i_n}
\)
and
\(
\ket{\psi_{i^n}}
:=
\ket{\psi_{i_1}}\otimes\cdots\otimes\ket{\psi_{i_n}}.
\)
Equivalently, this gives a natural cq-extension of
\(\rho_{AB}^{\otimes n}\), with a classical register \(R^n\) recording
the sequence \(i^n\) of ensemble labels.

For an MSR almost i.i.d.\ target sequence \((\rho_{A^nB^n})_n\), however,
the states \(\rho_{A^nB^n}\) need not arise from such an exact
tensor-product ensemble. The lemma below shows that this difficulty can
be overcome: any fixed cq-extension of the reference state \(\rho_{AB}\)
with pure conditional states can be lifted to cq-extensions of the MSR
states \(\rho_{A^nB^n}\), at the cost of an additional classical register
\(\widetilde R_n\) whose logarithmic dimension is \(o(n)\). This overhead
is asymptotically negligible at the level of rates, and the resulting
marginal sequence on \(R^nA^nB^n\) is MSR almost i.i.d.\ along the chosen
cq-extension of \(\rho_{AB}\). This allows the spectral-entropy rigidity
results proved earlier to be applied to the lifted cq-state, which is the
form required in the subsequent entanglement-dilution argument. Note that by the term {\em{pure-state cq-extension}} used below, we mean a cq-extension whose conditional states on \(AB\) are pure.

\begin{lemma}[(cq-lifting of MSR target sequences up to sublinear classical overhead)]
\label{lem:cq-lifting}
Let
\(
\bm\rho_{AB}:=(\rho_{A^nB^n})_{n}
\)
be an MSR almost i.i.d.\ target sequence along \(\rho_{AB}\), with defect sizes
\(r_n=o(n)\). Let
\bb
\varrho_{RAB}
=
\sum_i p_i \ket{i}\!\bra{i}_R
\otimes
\ket{\phi_i}\!\bra{\phi_i}_{AB}
\ee
be a pure-state cq-extension of \(\rho_{AB}\).
Then there exists a sequence of cq-extensions $(\widehat\varrho^{(n)})_n$ where 
{\bb
\widehat\varrho^{(n)} \equiv \widehat\varrho_{\widetilde R_nR^nA^nB^n}
=
\sum_{\substack{\alpha\in\mathcal A_n,\; i^n:\\
q^{(n)}_{\alpha,i^n}>0}}
q^{(n)}_{\alpha,i^n}
\ket{\alpha}\!\bra{\alpha}_{\widetilde R_n}
\otimes
\ket{i^n}\!\bra{i^n}_{R^n}
\otimes
\ket{\phi^{(n)}_{\alpha,i^n}}\!
\bra{\phi^{(n)}_{\alpha,i^n}}_{A^nB^n}
\ee}
of \(\rho_{A^nB^n}\) such that
\(
\log\dim\widetilde R_n=o(n),
\)
and, writing
\bb
\varrho_{R^nA^nB^n}
:=
\Tr_{\widetilde R_n}
\widehat\varrho_{\widetilde R_nR^nA^nB^n},
\ee
the sequence
\(
\bm\varrho_{RAB}
:=
(\varrho_{R^nA^nB^n})_{n}
\)
is MSR almost i.i.d.\ along \(\varrho_{RAB}\), with defect sizes
\(r_n=o(n)\).
\end{lemma}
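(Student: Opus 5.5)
The plan is to obtain the lifted cq-states by measuring, site by site, the purifying system of an MSR extension. The first task is to choose the purification so that this measurement produces the ensemble $\{p_i,\ket{\phi_i}\}$. Set
\[
\ket{\theta'}_{ABE'}:=\sum_i\sqrt{p_i}\,\ket{\phi_i}_{AB}\otimes\ket{i}_{E'} .
\]
This is a purification of $\rho_{AB}$. Measuring $E'$ in the basis $\{\ket{i}\}$ and relabelling $E'\to R$ maps $\theta'$ to $\varrho_{RAB}$.

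By Definition~\ref{def:MSR}, $\rho_{A^nB^n}$ has an MSR extension $\rho_{A^nB^nE^n}$ along some purification $\ket{\theta}_{ABE}$ of $\rho_{AB}$, which need not equal $\theta'$. Two purifications of the same state are related by a partial isometry $W:E\to E'$ with $(\id\otimes W)\ket{\theta}=\ket{\theta'}$. Complete $W$ to a channel
\[
\Lambda_E(X):=WXW^\dagger+\Tr\!\big[(\id-W^\dagger W)X\big]\,\ket{0}\!\bra{0}_{E'} .
\]
Apply $\Lambda_E^{\otimes n}$ to the MSR extension, in the spirit of Lemma~\ref{lem:msr-stability-local-channels} but now acting on the purifying system. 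The output $\rho_{A^nB^nE'^n}$ still extends $\rho_{A^nB^n}$ and is permutation invariant. Its support lies in $\mathcal M_n(\theta',r_n)$, because each Kraus operator $\id\otimes K_1\otimes\cdots\otimes K_n$ maps a vector $\pi(\ket{\theta}^{\otimes(n-r_n)}\otimes\ket{\Omega})$ to a combination of vectors of the form $\pi(\ket{\theta'}^{\otimes(n-r_n)}\otimes\ket{\Omega'})$ or to zero. On the $n-r_n$ reference positions, $W\ket{\theta}=\ket{\theta'}$ and the junk Kraus operators annihilate $\ket{\theta}$. I expect this step to be the main obstacle. The support claim has to be checked carefully for the junk Kraus operators and for the case $\dim E<\dim E'$, where one embeds first. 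If it resists, I would fall back on the proof of Lemma~\ref{lem:msr-stability-local-channels}, which presumably establishes exactly this kind of support inclusion.

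Next, spectrally decompose
\[
\rho_{A^nB^nE'^n}=\sum_{\alpha\in\mathcal A_n}\lambda_\alpha\ket{v_\alpha}\!\bra{v_\alpha},
\qquad \ket{v_\alpha}\in\mathcal M_n(\theta',r_n).
\]
This gives $|\mathcal A_n|\le\dim\mathcal M_n(\theta',r_n)=2^{o(n)}$ by Lemma~\ref{lem:subexp-defect}, hence $\log\dim\widetilde R_n=o(n)$. Now measure $E'^n$ in the product basis $\{\ket{i^n}\}$. Set
\[
q^{(n)}_{\alpha,i^n}:=\lambda_\alpha\big\|(\id_{A^nB^n}\otimes\bra{i^n})\ket{v_\alpha}\big\|^2,
\]
and, whenever $q^{(n)}_{\alpha,i^n}>0$, let $\ket{\phi^{(n)}_{\alpha,i^n}}$ be the normalized vector $(\id\otimes\bra{i^n})\ket{v_\alpha}$. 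Since $E'^n$ is measured completely, these conditional states on $A^nB^n$ are pure. Summing over $\alpha$ and $i^n$ returns $\Tr_{E'^n}\rho_{A^nB^nE'^n}=\rho_{A^nB^n}$, so $\widehat\varrho^{(n)}$ is a cq-extension of $\rho_{A^nB^n}$ of the required form.

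Finally, tracing out $\widetilde R_n$ gives $\varrho_{R^nA^nB^n}=\Gamma^{\otimes n}(\rho_{A^nB^nE'^n})$, where $\Gamma$ is the single-site channel that measures $E'$ in the basis $\{\ket{i}\}$ and relabels the outcome as $R$. The sequence $(\rho_{A^nB^nE'^n})_n$ is MSR along $\theta'_{ABE'}$, with itself as MSR extension (trivial further extension). By Lemma~\ref{lem:msr-stability-local-channels} with the channel $\Gamma$ on $E'$ and the identity on $AB$, the sequence $(\varrho_{R^nA^nB^n})_n$ is MSR along $\Gamma(\theta')=\varrho_{RAB}$, with the same defect sizes $r_n$.
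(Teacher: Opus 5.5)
Your proposal is correct and follows essentially the same route as the paper. Both arguments transport an MSR extension onto the purification $\sum_i\sqrt{p_i}\,\ket{\phi_i}\ket{i}$, bound the spectral index set by $\dim\mathcal M_n$ via Lemma~\ref{lem:subexp-defect}, expand each eigenvector in the product basis of the label register, and apply Lemma~\ref{lem:msr-stability-local-channels} to the site-wise dephasing.

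The one difference is how you pass from the given purification to the label-register one. The paper pads $R$ with zero-probability labels so that a genuine isometry $W:E\to R$ exists. You instead complete a partial isometry to a channel, and that also works. The step you flag as the main obstacle is harmless: $(\id\otimes W)\ket{\theta}$ is a unit vector, which forces $(\id\otimes W^\dagger W)\ket{\theta}=\ket{\theta}$. Hence the junk Kraus operators annihilate $\ket{\theta}$, and no embedding is needed when $\dim E<\dim E'$.
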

\begin{proof}
See Appendix~\ref{app:cq}.
\end{proof}

The following lemma is a simple but useful blocking observation. Fix a
block size \(m\), and write each blocklength as
\(
n=lm+s
\),
where \(0\le s<m\). We group the first \(lm\) tensor positions into
\(l\) blocks of size \(m\), leaving \(s\) remaining tensor positions.
Since \(m\) is fixed, this remainder has size uniformly bounded in
\(n\). Each unrestricted tensor position in one of the vectors spanning
the MSR support space can affect at most one \(m\)-block. Hence, after
blocking, such a vector has at most \(\min\{r_n,l\}\) unrestricted
\(m\)-blocks, while the final \(s\) tensor positions are treated as an
unrestricted remainder. Thus, for each fixed value of \(s\), the MSR
structure is preserved after blocking, with block-defect size
\(t_{l,s}=o(l)\). This will allow us to apply MSR entropy estimates to
fixed-size blocks while keeping the leftover tensor positions
asymptotically negligible.

{\begin{lemma}[(Blocking with bounded remainder)]
\label{lem:MSR-blocking-remainder}
Let
\(
\bm\rho_{AB}:=(\rho_{A^nB^n})_{n\ge1}
\)
be MSR almost i.i.d.\ along \(\rho_{AB}\), with defect sizes
\(r_n=o(n)\). Fix \(m\in\mathbb N_+\), and write
\(n=lm+s\), where \(0\le s<m\). Then there exists an extension
\bb
\omega_{(A^mB^mE^m)^l A^sB^sE^s}
\ee
of \(\rho_{A^{lm+s}B^{lm+s}}\) such that, if
\(\ket{\theta}_{ABE}\) is a purification of \(\rho_{AB}\) and
\(\ket{\theta_m}_{A^mB^mE^m}:=\ket{\theta}_{ABE}^{\otimes m}\), then
\bb
\supp\omega_{(A^mB^mE^m)^l A^sB^sE^s}
\subseteq
\operatorname{span}
\mathcal V\!\left(
(\mathcal H_{ABE}^{\otimes m})^{\otimes l},
\ket{\theta_m}^{\otimes(l-t_{l,s})}
\right)
\otimes
\mathcal H_{ABE}^{\otimes s},
\ee
where
\(
t_{l,s}:=\min\{r_{lm+s},l\}
\).
In particular, for each fixed residue \(s\), we have
\(t_{l,s}=o(l)\) as \(l\to\infty\). Moreover, \(\omega\) is invariant
under permutations of the \(l\) blocks of size \(m\).
\end{lemma}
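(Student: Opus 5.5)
The plan is to take the MSR extension from Definition~\ref{def:MSR} and simply regroup its tensor factors. By definition, for \(n=lm+s\) there is a purification \(\ket{\theta}_{ABE}\) of \(\rho_{AB}\) and an extension \(\rho_{(ABE)^n}\) of \(\rho_{A^nB^n}\). This extension is invariant under permutations of the triples \((A_iB_iE_i)\), and its support lies in \(\mathcal M_n(\theta,r_n)\). I would set \(\omega\) to be this same operator, with the first \(lm\) positions grouped into \(l\) consecutive blocks of size \(m\) and the last \(s\) positions kept as the remainder. Since \(\omega\) is literally the MSR extension, it is automatically an extension of \(\rho_{A^nB^n}\).

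\textbf{Block invariance.} A permutation of the \(l\) blocks is the element of \(S_n\) that moves blocks of \(m\) consecutive positions rigidly and fixes the last \(s\) positions. Full \(S_n\)-invariance of the extension therefore gives block-permutation invariance immediately.

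\textbf{Support.} It suffices to show that every spanning vector \(\pi(\ket{\theta}^{\otimes(n-r_n)}\otimes\ket{\Omega})\) lies in the claimed subspace, because that subspace is linear and is closed under taking spans. For such a vector, let \(D\subseteq\{1,\dots,n\}\) be the set of unrestricted positions, with \(|D|=r_n\). Let \(J\) be the set of blocks \(j\le l\) that meet \(D\); then \(|J|\le\min\{r_n,l\}=t_{l,s}\). Every block outside \(J\) carries exactly \(\ket{\theta}^{\otimes m}=\ket{\theta_m}\). The restricted positions falling inside blocks of \(J\) or inside the remainder are also copies of \(\ket{\theta}\), so the whole vector has the form (block permutation)\(\bigl(\ket{\theta_m}^{\otimes(l-|J|)}\otimes\ket{\Omega'}\bigr)\). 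Here \(\ket{\Omega'}\) is some vector on the \(|J|\) blocks together with the remainder, and \(\ket{\Omega'}\) may be entangled across them. If \(|J|<t_{l,s}\), I absorb \(t_{l,s}-|J|\) further copies of \(\ket{\theta_m}\) into \(\ket{\Omega'}\), which is allowed because those positions are unrestricted.

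\textbf{Main obstacle.} The delicate point is matching the stated form \(\operatorname{span}\mathcal V\otimes\mathcal H_{ABE}^{\otimes s}\), since \(\ket{\Omega'}\) may entangle the defect blocks with the remainder. I would handle this by expanding \(\ket{\Omega'}=\sum_k\ket{\Omega'_k}\otimes\ket{e_k}\) in an orthonormal basis \(\{\ket{e_k}\}\) of \(\mathcal H_{ABE}^{\otimes s}\). Each term is then (block permutation)\(\bigl(\ket{\theta_m}^{\otimes(l-t_{l,s})}\otimes\ket{\Omega'_k}\bigr)\otimes\ket{e_k}\), which lies in \(\mathcal V\otimes\mathcal H^{\otimes s}\). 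The span of such terms equals \(\operatorname{span}\mathcal V\otimes\mathcal H^{\otimes s}\), and this gives the support inclusion.

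\textbf{Sublinearity.} Finally, for fixed \(s\) we have \(t_{l,s}\le r_{lm+s}\). Since \(r_{lm+s}/(lm+s)\to0\) and \(lm+s\le (m+1)l\), it follows that \(t_{l,s}/l\le (m+1)\,r_{lm+s}/(lm+s)\to0\), so \(t_{l,s}=o(l)\).
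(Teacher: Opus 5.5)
Your proof is correct and is essentially the paper's argument. You keep the permutation-invariant MSR extension and regroup its first $lm$ sites into $l$ blocks. You observe that at most $\min\{r_n,l\}$ blocks meet an unrestricted site, padding with extra $\ket{\theta_m}$ blocks when fewer do. Block-permutation invariance is inherited from full $S_n$-invariance, and $t_{l,s}=o(l)$ follows from $r_{lm+s}/(lm+s)\to0$.

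There are two small differences. First, you use the purification supplied by Definition~\ref{def:MSR}, which may depend on $n$. The paper instead fixes an arbitrary purification $\ket{\theta}_{ABE}$ and invokes Remark~2.4(a) of~\cite{mazzola2026almost} to obtain an MSR extension relative to it. You should add that citation, since the statement's ``if $\ket{\theta}_{ABE}$ is a purification'' and the lemma's later use with one fixed $\ket{\theta}_{ABE}$ need the stronger version. Second, your basis expansion over $\mathcal H_{ABE}^{\otimes s}$ spells out the remainder step, which the paper handles in one line.
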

\begin{proof}
See Appendix~\ref{app:msr-block}.
\end{proof}
}

The following corollary combines the preceding blocking lemma with the
cq-lifting construction. Fixing a block size \(m\), we view the source at
blocklength \(n=lm+s\) as consisting of \(l\) blocks of size \(m\), together
with a remainder of size \(s<m\). The blocking lemma shows that, for each
fixed \(s\), the first \(lm\) systems form an MSR source at the block level,
with only \(o(l)\) unrestricted blocks, while the remaining \(s\) systems
have fixed size and are therefore asymptotically negligible. We then apply the cq-lifting argument to these \(m\)-blocks, relative to
a chosen pure-state cq-extension of \(\rho_{AB}^{\otimes m}\). The result
is a pure-state cq-extension of the full state
\(\rho_{A^{lm+s}B^{lm+s}}\), with only a sublinear additional classical
register \(\widetilde R_l\), in the sense that
\(\log\dim\widetilde R_l=o(l)\). After tracing out this additional
register and the final \(A^sB^s\)-systems, the marginal on
\(R^l(A^mB^m)^l\) remains MSR almost i.i.d.\ along the chosen block-level
cq-extension.

\begin{cor}[(cq-lifting with bounded remainder)]
\label{cor:cq-lifting-remainder}
Fix \(m\in\mathbb N_+\) and \(s\in\{0,\ldots,m-1\}\). Let
\((\rho_{A^{lm+s}B^{lm+s}})_{l\ge1}\) be as in
Lemma~\ref{lem:MSR-blocking-remainder}. Let
\bb
\varrho^{(m)}_{RA^mB^m}
=
\sum_i p_i^{(m)}
\ket{i}\!\bra{i}_R
\otimes
\ket{\phi_i^{(m)}}\!\bra{\phi_i^{(m)}}_{A^mB^m}
\ee
be a pure-state cq-extension of \(\rho_{AB}^{\otimes m}\). Then there
exist pure-state cq-extensions
\bb
\widehat\varrho^{(l,s)}_{\widetilde R_lR^lA^{lm+s}B^{lm+s}}
\ee
of \(\rho_{A^{lm+s}B^{lm+s}}\) such that
\bb
\log\dim\widetilde R_l=o(l),
\ee
and, after tracing out \(\widetilde R_l\) and the bounded remainder, the
marginal sequence on \(R^l(A^mB^m)^l\) is MSR almost i.i.d.\ along
\(\varrho^{(m)}_{RA^mB^m}\), with defect sizes \(o(l)\).
\end{cor}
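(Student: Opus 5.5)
The plan is to combine Lemma~\ref{lem:MSR-blocking-remainder} with the proof mechanism of Lemma~\ref{lem:cq-lifting}, applied at the level of $m$-blocks. The remainder $A^sB^sE^s$ is handled by enlarging the extra classical register by a constant amount.

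\textbf{Step 1 (blocking).} Lemma~\ref{lem:MSR-blocking-remainder} gives an extension $\omega$ of $\rho_{A^{lm+s}B^{lm+s}}$. This extension is invariant under permutations of the $l$ blocks. Its support lies in
\[
\operatorname{span}\mathcal V\bigl((\mathcal H_{ABE}^{\otimes m})^{\otimes l},\ket{\theta_m}^{\otimes(l-t_{l,s})}\bigr)\otimes\mathcal H_{ABE}^{\otimes s},
\qquad t_{l,s}=o(l).
\]
Here $\ket{\theta_m}=\ket\theta^{\otimes m}$ purifies $\rho_{AB}^{\otimes m}$.

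\textbf{Step 2 (choice of purification).} A pure-state cq-extension $\varrho^{(m)}_{RA^mB^m}$ corresponds to a purification of $\rho_{AB}^{\otimes m}$ on a reference $E'_m$:
\[
\ket{\theta'_m}=\sum_i\sqrt{p_i^{(m)}}\,\ket{\phi_i^{(m)}}\ket{i}_{E'_m}.
\]
Measuring $E'_m$ in the basis $\{\ket i\}$ yields $\varrho^{(m)}$. All purifications are related by an isometry on the purifying system, applied blockwise. So, after possibly enlarging $E$ and applying a partial isometry $V^{\otimes l}$ to the $E^m$ parts (and the identity on the remainder), we may replace $\ket{\theta_m}$ by $\ket{\theta'_m}$. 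Local isometries of tensor-power form map $\mathcal V$-spans into $\mathcal V$-spans and preserve block-permutation invariance. This is the same argument as Lemma~\ref{lem:msr-stability-local-channels}, now with $m$-blocks as the elementary systems.

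\textbf{Step 3 (lifting).} Treat $(A^mB^m)$ as the single system and $E'_m$ as its purifier, with $l$ copies. Apply the construction of Lemma~\ref{lem:cq-lifting}: purify the block-level state and measure the $E'$-registers in the $\ket i$ basis, producing the labels $i^l$ on $R^l$. The defect positions and their content are recorded in a register $\alpha$, with $\log|\mathcal A|=o(l)$ by the counting in Lemma~\ref{lem:subexp-defect}. The remainder $A^sB^sE^s$ is likewise purified and measured in a fixed basis, and its outcome is appended to $\widetilde R_l$. This adds at most $\log(\dim\mathcal H_{ABE}^{\otimes s}\cdot\text{const})=O(m)=O(1)$ to $\log\dim\widetilde R_l$, so $\log\dim\widetilde R_l=o(l)$ still holds. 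The conditional states on $A^{lm+s}B^{lm+s}$ are then pure, because rank-one measurements on a purifying system give pure conditional states. Averaging over all outcomes returns $\rho_{A^{lm+s}B^{lm+s}}$, so the result is indeed a pure-state cq-extension.

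\textbf{Step 4 (MSR property of the marginal).} Trace out $\widetilde R_l$ and $A^sB^s$. Tracing out the remainder is a partial trace on a tensor factor outside the $\mathcal V$-span. It therefore preserves the block-level support property and block-permutation invariance. The conclusion that the marginal on $R^l(A^mB^m)^l$ is MSR along $\varrho^{(m)}$ with defects $t_{l,s}$ then follows verbatim from Lemma~\ref{lem:cq-lifting}, with $(\rho_{AB},r_n)$ replaced by $(\rho^{\otimes m}_{AB},t_{l,s})$.

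\textbf{Main obstacle.} I expect the hardest point to be that Lemma~\ref{lem:cq-lifting} is stated for sequences that are MSR in the sense of Definition~\ref{def:MSR}. Here the extension carries an extra unrestricted tensor factor $\mathcal H_{ABE}^{\otimes s}$ and is only block-permutation invariant. My first attempt would be to absorb the remainder into the classical register, i.e.\ condition on it first, and check that each conditional block state is again MSR at block level with defect $t_{l,s}$. Conditioning via a measurement on the remainder's purifier commutes with block permutations and keeps the support inside the block $\mathcal V$-span, so this should work. Then Lemma~\ref{lem:cq-lifting} can be invoked conditionally for each remainder outcome, and the resulting extensions are mixed with the outcome recorded in $\widetilde R_l$.
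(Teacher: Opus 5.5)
Your architecture is the paper's. It consists of Lemma~\ref{lem:MSR-blocking-remainder}, a blockwise isometry from $E^m$ to the label register sending $\ket{\theta_m}$ to $\sum_i\sqrt{p_i^{(m)}}\ket{\phi_i^{(m)}}\ket{i}$, dephasing of $R^l$, and a decomposition with $2^{o(l)}$ terms. Your Step~4 is exactly how the paper obtains the MSR property of the marginal on $R^l(A^mB^m)^l$. One small correction: the map must be a genuine isometry, obtained by padding the label register with zero-probability labels. A partial isometry is not enough, because unrestricted blocks can carry $E^m$-components outside the support of the reference marginal, and these would be annihilated.

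What is not right as written is your handling of the remainder, which is the only new ingredient compared with Lemma~\ref{lem:cq-lifting}. The systems $A^sB^s$ are in general entangled with the unrestricted blocks, so they can be neither purified separately nor measured. Measuring them in a fixed basis would replace the marginal by one dephased on $A^sB^s$, and you would no longer have an extension of $\rho_{A^{lm+s}B^{lm+s}}$. For the same reason, the fallback in your last paragraph does not work. Whatever you condition on, if $A^sB^s$ stays quantum it stays correlated with the blocks. Lemma~\ref{lem:cq-lifting} applied to the conditional block marginal then only produces pure states on $(A^mB^m)^l$, not pure conditional states on $A^{lm+s}B^{lm+s}$.

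The paper avoids this by not using the lemma as a black box. It spectrally decomposes the full state $\tau^{(l,s)}$, remainder included, whose rank is at most $\binom{l}{t_{l,s}}(\dim\mathcal H_{A^mB^mR})^{t_{l,s}}\dim\mathcal H_{ABE}^{\otimes s}=2^{o(l)}$. This eigenbasis, not ``defect positions and their content'' (which do not label orthogonal subspaces), is what $\alpha$ should index. It expands each eigenvector in the $R^l$-basis and then eliminates only $E^s$. It does so by writing $\Tr_{E^s}$ of each normalized component, a state of rank at most $\dim\mathcal H_E^{\otimes s}$, as a mixture of that many pure states on $A^{lm+s}B^{lm+s}$; equivalently, it measures $E^s$ alone in a fixed basis. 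The index of this decomposition is the $O(1)$ extra label. With this construction your anticipated obstacle never arises, and your Steps~3--4 go through.
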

\begin{proof}
See Appendix~\ref{app:lift-rem}.
\end{proof}

{\begin{thm}[(Robust achievability of entanglement dilution for MSR sources)]
\label{thm:MSR-dilution}
Let \(\rho_{AB}\in\mathcal D(\mathcal H_A\otimes\mathcal H_B)\), and let
\(\bm\rho_{AB}:=(\rho_{A^nB^n})_{n\ge1}\) be an MSR almost i.i.d.\ target sequence along
\(\rho_{AB}\), with defect sizes \(r_n=o(n)\). Then
\bb
E_C(\bm\rho_{AB})
\le
E_F^\infty(\rho_{AB}).
\ee
Equivalently, every rate
\(
R>E_F^\infty(\rho_{AB})
\)
is achievable for entanglement dilution of \(\bm\rho_{AB}\). Moreover,
for every fixed defect-size sequence \(r_n=o(n)\),
\bb
\sup_{\bm\rho_{AB}}
E_C(\bm\rho_{AB})
\le
E_F^\infty(\rho_{AB}),
\ee
where the supremum is over all MSR almost i.i.d.\ sources along
\(\rho_{AB}\) with defect size at most \(r_n\) at blocklength \(n\).
\end{thm}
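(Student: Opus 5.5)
We will reduce the claim to an information-spectrum dilution theorem for arbitrary sequences, in the spirit of the distillation argument of Theorem~\ref{thm:robust-distillation-achievability}. The tool is the Hayashi/Bowen--Datta achievability bound~\cite{Hayashi2006, bowen2011entanglement}. It says that if, for each $n$, the target $\rho_{A^nB^n}$ admits a pure-state cq-extension $\gamma_{Z_nA^nB^n}=\sum_z q_z\ket{z}\!\bra{z}\otimes\ket{\psi_z}\!\bra{\psi_z}$, then every rate $R>\overline S(A|Z)_{\bm\gamma}$ is achievable for dilution of $\bm\rho_{AB}$. Here $\overline S(A|Z)_{\bm\gamma}$ is the sup-conditional spectral entropy rate of the $AZ$-marginal, which controls the spectral entropy of the conditional reduced states $\Tr_B\psi_z$.

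The proof will therefore consist in constructing good ensembles and bounding this rate by $E_F^\infty(\rho_{AB})+\delta$. We will do this through a series of steps.

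\textbf{Step 1 (choice of block ensemble).} Fix $\delta>0$. Choose $m$ with $\frac1m E_F(\rho_{AB}^{\otimes m})\le E_F^\infty(\rho_{AB})+\delta$. Then choose an optimal (or $\delta$-optimal) pure-state decomposition of $\rho_{AB}^{\otimes m}$, giving a cq-extension $\varrho^{(m)}_{RA^mB^m}$ with $S(A^m|R)_{\varrho^{(m)}}=\sum_i p_i^{(m)}S(\phi^{(m)}_{i,A^m})\le m(E_F^\infty+2\delta)$.

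\textbf{Step 2 (lifting).} For each residue $s$, apply Corollary~\ref{cor:cq-lifting-remainder} to obtain pure-state cq-extensions $\widehat\varrho^{(l,s)}$ of $\rho_{A^{lm+s}B^{lm+s}}$. We take $Z_l=(\widetilde R_l,R^l)$ as the classical label. We treat Alice's remainder $A^s$, and the register $\widetilde R_l$ if needed, as a system of bounded or $2^{o(l)}$ dimension.

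\textbf{Step 3 (entropy estimate).} By Lemma~\ref{lem:finite-register-stability-conditional-spectrum}, applied at the block level with $C_l=A^s$ of bounded dimension, adjoining the remainder to Alice and $\widetilde R_l$ to the conditioning leaves the sup-rate unchanged. This yields
\[
\overline S\bigl(A^{lm+s}\big|\widetilde R_lR^l\bigr)_{\widehat{\bm\varrho}}\le \overline S\bigl((A^m)^l\big|R^l\bigr)_{\bm\varrho}.
\]
The marginal sequence on $R^l(A^mB^m)^l$ is MSR along $\varrho^{(m)}$ with $o(l)$ defects. Corollary~\ref{cor:msr-marginals} (trace out $B$) then gives an MSR sequence on $R^l(A^m)^l$, and Lemma~\ref{lem:conditional-spectral-bound-msr} bounds the right-hand side by $S(A^m|R)_{\varrho^{(m)}}\le m(E_F^\infty+2\delta)$ per block. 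Per original copy this is rate $E_F^\infty+2\delta$.

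\textbf{Step 4 (assembly).} The information-spectrum theorem then yields, along each residue subsequence $n=lm+s$, protocols of rate below $E_F^\infty+3\delta$ with fidelity tending to one. Interleaving the $m$ subsequences into a single sequence preserves both the $\limsup$ of rates and fidelity convergence. Letting $\delta\to0$ proves $E_C(\bm\rho_{AB})\le E_F^\infty(\rho_{AB})$.

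\textbf{Uniform statement.} For the supremum over sources, the bound holds for each source, and the right-hand side is source-independent. So the supremum bound follows immediately, since we only claim a bound on the cost of each source, not a uniform protocol.

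\textbf{Main obstacle.} The delicate step is Step 3. We must make sure that the conditional spectral rate entering the dilution theorem is conditioned on the classical label, that the side $B$ is correctly discarded, and that the extra register $\widetilde R_l$ (classical, $\log$-dimension $o(l)$) is placed in the conditioning rather than on Alice's side. We must also check that block normalization $1/l$ versus $1/n$ is handled correctly, since spectral rates at blocklength $l$ must be divided by $m$. If Lemma~\ref{lem:finite-register-stability-conditional-spectrum} does not literally cover conditioning on $\widetilde R_l$ together with $R^l$, we would first try treating $\widetilde R_l$ as a component of $Z$. Since it is a classical register it is permitted, and the inequality direction (conditioning on more classical data cannot hurt) gives the bound.
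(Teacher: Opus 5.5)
Your proposal is correct and follows essentially the same route as the paper. You use the Bowen--Datta information-spectrum formula for $E_C$ and a near-optimal ensemble for $\rho_{AB}^{\otimes m}$, lifted via Corollary~\ref{cor:cq-lifting-remainder} for each residue $s$; Lemmas~\ref{lem:finite-register-stability-conditional-spectrum} and~\ref{lem:conditional-spectral-bound-msr} then bound the per-copy rate by $\frac1m S(A^m|R)_{\varrho^{(m)}}$, followed by interleaving over the finitely many residues and optimizing over $m$. The differences are minor. You interleave protocols rather than cq-extension sequences. You also observe that the supremum bound follows immediately from the source-independent right-hand side, which is simpler than the paper's appeal to source-independence of the construction.
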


\begin{proof}
We use the information-spectrum formula for the entanglement cost of a general
sequence of bipartite states, which states that
\bb
E_C(\bm\rho_{AB})
=
\inf_{\bm\gamma\in\mathcal D_{\rm cq}(\bm\rho_{AB})}
\overline S(A|R)_{\bm\gamma},
\ee
where the infimum is over pure-state cq-extensions of the sequence, and

In the blocked argument below, spectral rates denoted by
\(\overline S^{\rm block}\) are normalized by the number \(l\) of
\(m\)-blocks; the rates entering \(E_C\) are normalized by the original
blocklength \(n=lm+s\).

We use the information-spectrum characterization of entanglement cost
given by Theorem~1 of~\cite{bowen2011entanglement}.
For any sequence
\(
\bm\omega_{AB}:=(\omega_{A^nB^n})_{n\ge1}
\),
\bb
E_C(\bm\omega_{AB})
=
\inf_{\bm\gamma\in D_{\rm cq}(\bm\omega_{AB})}
\overline S(A|R)_{\bm\gamma},
\ee

\(\overline S(A|R)\) denotes the spectral sup-conditional entropy rate.

Fix \(m\in\mathbb N_+\) and \(\eta>0\). Choose a pure-state ensemble
decomposition
\bb
\rho_{AB}^{\otimes m}
=
\sum_i p_i^{(m)}
\ket{\phi_i^{(m)}}\!\bra{\phi_i^{(m)}}_{A^mB^m}
\ee
such that
\bb
\sum_i p_i^{(m)}
S(\phi^{(m)}_{i,A^m})
\le
E_F(\rho_{AB}^{\otimes m})+\eta .
\ee
Let
\bb
\varrho^{(m)}_{RA^mB^m}
=
\sum_i p_i^{(m)}
\ket{i}\!\bra{i}_R
\otimes
\ket{\phi_i^{(m)}}\!\bra{\phi_i^{(m)}}_{A^mB^m}.
\ee
Then \(\varrho^{(m)}_{RA^mB^m}\) is a pure-state cq-extension of
\(\rho_{AB}^{\otimes m}\), and
\bb
S(A^m|R)_{\varrho^{(m)}}
=
\sum_i p_i^{(m)}
S(\phi^{(m)}_{i,A^m})
\le
E_F(\rho_{AB}^{\otimes m})+\eta .
\ee

Write \(n=lm+s\), where \(0\le s<m\). We first fix a value
\(
s\in\{0,\ldots,m-1\}
\)
of the remainder in this decomposition. By
Corollary~\ref{cor:cq-lifting-remainder}, applied to the block-level
cq-extension \(\varrho^{(m)}_{RA^mB^m}\), there exist pure-state
cq-extensions
\bb
\widehat\varrho^{(l,s)}_{\widetilde R_lR^lA^{lm+s}B^{lm+s}}
\ee
of \(\rho_{A^{lm+s}B^{lm+s}}\) such that
\bb
\log\dim\widetilde R_l=o(l).
\ee
Moreover, after tracing out \(\widetilde R_l\) and the final
\(A^sB^s\)-systems, the marginal sequence on
\(R^l(A^mB^m)^l\) is MSR almost i.i.d.\ along
\(\varrho^{(m)}_{RA^mB^m}\), with defect sizes \(o(l)\).

By the conditional spectral entropy bound for MSR sources,
Lemma~\ref{lem:conditional-spectral-bound-msr}, applied to this blocked
marginal sequence, we have
\bb
\overline S_{\rm block}(A^m|R)
\le
S(A^m|R)_{\varrho^{(m)}} ,
\ee
where \(\overline S_{\rm block}\) denotes the spectral sup-conditional
entropy rate normalized by the number \(l\) of \(m\)-blocks.

We now compare this blocked marginal sequence with the full cq-extension
sequence for the fixed value of \(s\),
\bb
\bm{\widehat\varrho}^{(s)}
:=
\bigl(
\widehat\varrho^{(l,s)}_{\widetilde R_lR^lA^{lm+s}B^{lm+s}}
\bigr)_{l\ge1}.
\ee
Passing from the blocked marginal on \(R^l(A^mB^m)^l\) to
\(\bm{\widehat\varrho}^{(s)}\) adds the classical register
\(\widetilde R_l\) to the conditioning system and adds the final
\(A^s\)-system to Alice's side. Since
\(\log\dim\widetilde R_l=o(l)\) and
\(
\log\dim\mathcal H_A^{\otimes s}=s\log d_A=O(1)
\)
for fixed \(s<m\), Lemma~\ref{lem:finite-register-stability-conditional-spectrum}
implies that these additions do not increase the \(l\)-normalized
spectral sup-conditional entropy rate. Hence, for each fixed value of
\(s\),
\bb
\overline S_{\rm block}
(A^mA^s|\widetilde R R)_{\bm{\widehat\varrho}^{(s)}}
\le
S(A^m|R)_{\varrho^{(m)}} .
\ee
Equivalently, when rates are normalized per original tensor factor rather
than per \(m\)-block,
\bb
\overline S
(A|\widetilde R R)_{\bm{\widehat\varrho}^{(s)}}
\le
\frac1m
S(A^m|R)_{\varrho^{(m)}} .
\ee

Since there are only finitely many possible remainders
\(s\in\{0,\ldots,m-1\}\), we obtain a pure-state cq-extension sequence of
the full source \(\bm\rho_{AB}\) by using, at blocklength \(n\), the
construction corresponding to the decomposition \(n=lm+s\). The spectral
sup-rate of a sequence obtained by interleaving finitely many subsequences
is bounded above by the maximum of the spectral sup-rates of those
subsequences. Since the preceding bound is independent of \(s\), the
resulting cq-extension sequence \(\bm{\widehat\varrho}\) satisfies
\bb
\overline S(A|\widetilde R R)_{\bm{\widehat\varrho}}
\le
\frac1m
S(A^m|R)_{\varrho^{(m)}} .
\ee
The information-spectrum formula for entanglement cost therefore gives
\bb
E_C(\bm\rho_{AB})
\le
\frac1m
S(A^m|R)_{\varrho^{(m)}} .
\ee
By the choice of the ensemble,
\bb
E_C(\bm\rho_{AB})
\le
\frac1m
E_F(\rho_{AB}^{\otimes m})
+
\frac{\eta}{m}.
\ee
Since \(\eta>0\) was arbitrary, we obtain
\bb
E_C(\bm\rho_{AB})
\le
\frac1m
E_F(\rho_{AB}^{\otimes m}) .
\ee
Finally, taking the infimum over \(m\ge1\) yields
\bb
E_C(\bm\rho_{AB})
\le
\inf_{m\ge1}
\frac1m
E_F(\rho_{AB}^{\otimes m})
=
E_F^\infty(\rho_{AB}).
\ee
Thus every rate \(R>E_F^\infty(\rho_{AB})\) is achievable for
entanglement dilution of \(\bm\rho_{AB}\).

It remains only to justify the claimed uniformity of the bound. If the defect-size
sequence \(r_n=o(n)\) is fixed in advance, then the block defect sizes
\(t_{l,s}\), the sublinear classical registers \(\widetilde R_l\) in the
cq-lifting step, and the bounded-remainder contributions depend only on
\(r_n\), \(m\), \(s\), and the reference state \(\rho_{AB}\), and not on
the particular MSR source. Consequently, the above argument gives
\bb
\sup_{\bm\rho_{AB}}
E_C(\bm\rho_{AB})
\le
E_F^\infty(\rho_{AB}),
\ee
where the supremum is over all MSR almost i.i.d.\ target sequences along
\(\rho_{AB}\) with defect size at most \(r_n\) at blocklength \(n\).
\end{proof}}

\section{Conclusions}\label{sec:conclude}
The results of this paper show that the MSR model provides a robust
framework for entanglement manipulation beyond the idealized tensor-power
setting. The structural properties of MSR sources, in particular their
stability under local tensor-power channels, marginals and blocking
operations, make it possible to transfer entanglement-manipulation
arguments from the i.i.d.\ setting to sources with sublinear deviations
from tensor-power structure. At the same time, the associated
entropy-rigidity results show that the relevant spectral quantities retain
the entropy values of the underlying reference state. Combining these
structural and entropic ingredients, we proved robust achievability
results for the two fundamental operational tasks considered here:
universal entanglement concentration for pure MSR sources, and
entanglement dilution for mixed MSR target sequences at rates bounded by the
regularized entanglement of formation of the reference state. Thus, within the MSR framework, sublinear deviations from a tensor-power structure
does not affect the standard asymptotic achievability rates for the tasks considered
here; in the pure-state concentration setting, a single protocol can be chosen that
works for the whole MSR class along the fixed reference state.

\section{Open Questions}\label{sec:open}
Several natural questions remain open. First, while we proved a universal
entanglement-concentration theorem for pure MSR sources, the corresponding
universal distillation problem for mixed MSR sources remains open. In
particular, it would be interesting to determine whether, for a fixed
reference state \(\rho_{AB}\), there exists a single sequence of LOCC
distillation protocols, depending only on \(\rho_{AB}\) and the target
rate, which achieves the coherent-information lower bound uniformly over
the whole MSR class along \(\rho_{AB}\). Such a result would require
additional ideas beyond the source-dependent information-spectrum
achievability argument used here.

A second direction is to extend the present robustness results beyond the
MSR model. The MSR class has strong structural features, in particular
the existence of permutation-invariant extensions supported on
subexponential defect spaces, which play an essential role in our
arguments. It would be important to understand whether analogous
entanglement concentration, distillation and dilution results continue to
hold for broader notions of almost i.i.d.\ states. A natural first step is
to consider generalizations of the MSR framework, before moving to still
weaker models such as Wasserstein almost i.i.d.\ or weakly almost i.i.d.\
sources. Such an extension would clarify the extent to which entanglement
manipulation is robust under approximate tensor-power structure, and
would separate the features that are genuinely asymptotic from those that
depend on the stronger symmetry and support properties of the MSR class.

\subsection*{Acknowledgments}
ND is grateful to her friend and former collaborator Garry Bowen, who introduced her
to the information-spectrum framework of quantum information theory. She fondly remembers their many interesting collaborations, which started her ``beyond
i.i.d.'' journey. She would also like to thank Bjarne Bergh and Liuhang Ye for their comments and help.
She is supported by the Engineering and Physical Sciences Research Council
[Grant Ref: EP/Y028732/1].
\bibliography{biblio}

\begin{thebibliography}{10}

\bibitem{BennettBernsteinPopescuSchumacher1996}
Charles~H. Bennett, Herbert~J. Bernstein, Sandu Popescu, and Benjamin Schumacher.
\newblock Concentrating partial entanglement by local operations.
\newblock {\em Physical Review A}, 53(4):2046--2052, 1996.

\bibitem{LoPopescu1999}
Hoi-Kwong Lo and Sandu Popescu.
\newblock Concentrating entanglement by local actions: Beyond mean values.
\newblock {\em Physical Review A}, 63(2):022301, 2001.

\bibitem{BennettDiVincenzoSmolinWootters1996}
Charles~H. Bennett, David~P. DiVincenzo, John~A. Smolin, and William~K. Wootters.
\newblock Mixed-state entanglement and quantum error correction.
\newblock {\em Physical Review A}, 54(5):3824--3851, 1996.

\bibitem{HaydenHorodeckiTerhal2001}
Patrick~M. Hayden, Michal Horodecki, and Barbara~M. Terhal.
\newblock The asymptotic entanglement cost of preparing a quantum state.
\newblock {\em Journal of Physics A: Mathematical and General}, 34(35):6891--6898, 2001.

\bibitem{Hayashi2006}
M.~Hayashi.
\newblock General formulas for fixed-length quantum entanglement concentration.
\newblock {\em IEEE Trans. Inf. Theory}, 52(5):1904--1921, 2006.

\bibitem{BowenDatta2006}
Garry Bowen and Nilanjana Datta.
\newblock Quantum coding theorems for arbitrary sources, channels and entanglement resources.
\newblock 2006.
\newblock arXiv:quant-ph/0610003.

\bibitem{bowen2011entanglement}
Garry Bowen and Nilanjana Datta.
\newblock Entanglement cost for sequences of arbitrary quantum states.
\newblock {\em Journal of Physics A: Mathematical and Theoretical}, 44(4):045302, 2011.

\bibitem{bowen2008asymptotic}
Garry Bowen and Nilanjana Datta.
\newblock Asymptotic entanglement manipulation of bipartite pure states.
\newblock {\em IEEE transactions on information theory}, 54(8):3677--3686, 2008.

\bibitem{KumagaiHayashi2013}
Wataru Kumagai and Masahito Hayashi.
\newblock Second-order asymptotics of conversions of distributions and entangled states based on rayleigh-normal probability distributions.
\newblock {\em IEEE Transactions on Information Theory}, 63(3):1829--1857, 2017.

\bibitem{DattaLeditzky2014}
Nilanjana Datta and Felix Leditzky.
\newblock Second-order asymptotics for source coding, dense coding and pure-state entanglement conversions.
\newblock {\em IEEE Transactions on Information Theory}, 61(1):582--608, 2015.

\bibitem{FangWangTomamichelDuan2017}
Kun Fang, Xin Wang, Marco Tomamichel, and Runyao Duan.
\newblock Non-asymptotic entanglement distillation.
\newblock {\em IEEE Transactions on Information Theory}, 65(10):6454--6465, 2019.

\bibitem{girardi2026new}
Filippo Girardi, Giacomo De~Palma, and Ludovico Lami.
\newblock New approaches to almost iid information theory.
\newblock {\em arXiv preprint arXiv:2605.15114}, 2026.

\bibitem{girardi2026quantum}
Filippo Girardi, Nilanjana Datta, Giacomo De~Palma, and Ludovico Lami.
\newblock Quantum shannon theory made robust: a tale of three protocols for almost iid sources.
\newblock {\em arXiv preprint arXiv:2605.18726}, 2026.

\bibitem{datta2026entropy}
Nilanjana Datta.
\newblock Entropy concentration and universal typicality for weakly almost iid quantum sources.
\newblock {\em arXiv preprint arXiv:2605.20092}, 2026.

\bibitem{hayashi2002universal}
Masahito Hayashi and Keiji Matsumoto.
\newblock Universal distortion-free entanglement concentration.
\newblock {\em arXiv preprint quant-ph/0209030}, 2002.

\bibitem{Renner2005}
Renato Renner.
\newblock {\em Security of Quantum Key Distribution}.
\newblock PhD thesis, ETH Zurich, 2005.
\newblock arXiv:quant-ph/0512258.

\bibitem{Tomamichel2016}
Marco Tomamichel.
\newblock {\em Quantum Information Processing with Finite Resources: Mathematical Foundations}.
\newblock Springer, 2016.

\bibitem{Datta08}
N.~Datta.
\newblock Min- and max-relative entropies and a new entanglement monotone.
\newblock {\em IEEE Trans. Inf. Theory}, 55(6):2816--2826, 2009.

\bibitem{FuchsVanDeGraaf1999}
Christopher~A. Fuchs and Jeroen van~de Graaf.
\newblock Cryptographic distinguishability measures for quantum-mechanical states.
\newblock {\em IEEE Transactions on Information Theory}, 45(4):1216--1227, 1999.

\bibitem{HayashiNagaoka2003}
Masahito Hayashi and Hiroshi Nagaoka.
\newblock General formulas for capacity of classical-quantum channels.
\newblock {\em IEEE Transactions on Information Theory}, 49(7):1753--1768, 2003.

\bibitem{BowenDattaISIT2006}
Garry Bowen and Nilanjana Datta.
\newblock Asymptotic quantum coding theorems for bipartite resources.
\newblock In {\em Proceedings of the International Symposium on Information Theory (ISIT 2006)}, pages 451--455, 2006.

\bibitem{Hayashi2017}
Masahito Hayashi.
\newblock {\em Quantum Information Theory: Mathematical Foundation}.
\newblock Graduate Texts in Physics. Springer, 2017.

\bibitem{bowen2006beyond}
Garry Bowen and Nilanjana Datta.
\newblock Beyond iid in quantum information theory.
\newblock In {\em 2006 IEEE International Symposium on Information Theory}, pages 451--455. IEEE, 2006.

\bibitem{mazzola2026almost}
Giulia Mazzola, David Sutter, and Renato Renner.
\newblock Almost-iid information theory.
\newblock {\em arXiv preprint arXiv:2603.15792}, 2026.

\bibitem{Datta_2009}
Nilanjana Datta and Renato Renner.
\newblock Smooth entropies and the quantum information spectrum.
\newblock {\em IEEE Transactions on Information Theory}, 55(6):2807–2815, 2009.

\bibitem{DevetakWinter2005}
Igor Devetak and Andreas Winter.
\newblock Distillation of secret key and entanglement from quantum states.
\newblock {\em Proceedings of the Royal Society A: Mathematical, Physical and Engineering Sciences}, 461(2053):207--235, 2005.

\bibitem{Bennett-distillation}
C.~H. Bennett, H.~J. Bernstein, S.~Popescu, and B.~Schumacher.
\newblock Concentrating partial entanglement by local operations.
\newblock {\em Phys. Rev. A}, 53:2046--2052, 1996.

\bibitem{Bennett-distillation-mixed}
C.~H. Bennett, G.~Brassard, S.~Popescu, B.~Schumacher, J.~A. Smolin, and W.~K. Wootters.
\newblock Purification of noisy entanglement and faithful teleportation via noisy channels.
\newblock {\em Phys. Rev. Lett.}, 76:722--725, 1996.

\bibitem{Fawzi-Renner}
O.~Fawzi and R.~Renner.
\newblock Quantum conditional mutual information and approximate {M}arkov chains.
\newblock {\em Commun. Math. Phys.}, 340(2):575--611, 2015.

\bibitem{Nielsen1999}
Michael~A. Nielsen.
\newblock Conditions for a class of entanglement transformations.
\newblock {\em Physical Review Letters}, 83(2):436--439, 1999.

\end{thebibliography}
\appendix
\section{Information-spectrum preliminaries}\label{app:info-spec}
Note that if \(R>\overline S(\bm\rho)\), then
\(-R<\underline D(\bm\rho\|\bm\id)\), since \(\overline S(\bm\rho)=-\underline D(\bm\rho\|\bm\id)\). Hence we may choose
\(\gamma\in\mathbb R\) such that
\(-R<\gamma<\underline D(\bm\rho\|\bm\id)\). By the definition of
\(\underline D(\bm\rho\|\bm\id)\),
\bb
\Tr\!\left[
\{\rho_n-2^{n\gamma}\id_n>0\}\rho_n
\right]\to 1 \quad \text{as} \quad n \to \infty.
\ee
Let \(\Pi_n:=\{\rho_n-2^{n\gamma}\id_n>0\}\). Then
\(\Tr(\Pi_n\rho_n)\to1\) as $n \to \infty$. Moreover, on the range of \(\Pi_n\), all
eigenvalues of \(\rho_n\) are strictly larger than \(2^{n\gamma}\).
Since \(\Tr\rho_n=1\), this gives \(2^{n\gamma}\Tr\Pi_n\le1\), and
therefore \(\Tr\Pi_n\le2^{-n\gamma}\). Since \(\gamma>-R\), we have
\(\Tr\Pi_n\le2^{nR}\) for all sufficiently large \(n\). Thus every rate
strictly larger than \(\overline S(\bm\rho)\) admits such a sequence of
high-probability projections.

Conversely, suppose that, for some \(R\in\mathbb R\), there exists a
sequence of projections \((\Pi_n)_n\) such that
\(\Tr(\Pi_n\rho_n)\to1\) and \(\Tr\Pi_n\le2^{nR}\) for all sufficiently
large \(n\). Fix \(\delta>0\), and define
\(Q_n:=\{\rho_n-2^{-n(R+\delta)}\id_n>0\}\). On the complement of
\(Q_n\), we have \(\rho_n\le2^{-n(R+\delta)}\id_n\). Hence
\bb
\Tr\!\left[\Pi_n(\id_n-Q_n)\rho_n\right]
\le
2^{-n(R+\delta)}\Tr\Pi_n
\le
2^{-n\delta}
\ee
for all sufficiently large \(n\). It follows that
\(\Tr(Q_n\rho_n)\ge \Tr(\Pi_n\rho_n)-2^{-n\delta}\to1\), or
equivalently,
\bb
\Tr\!\left[
\{\rho_n-2^{-n(R+\delta)}\id_n>0\}\rho_n
\right]\to 1 \quad \text{as} \quad n \to \infty.
\ee
By the definition of \(\underline D(\bm\rho\|\bm\id)\), this implies
\(-R-\delta\le \underline D(\bm\rho\|\bm\id)\). Since \(\delta>0\) is
arbitrary, we obtain
\(\overline S(\bm\rho)=-\underline D(\bm\rho\|\bm\id)\le R\). Taking
the infimum over all such \(R\) gives the claimed characterization of
$\overline S(\bm\rho)$.
\section{Technical properties of MSR sources}
\subsection{Proof of Lemma~\ref{lem:msr-stability-local-channels}
\label{app:structure-msr-stability}}
\begin{proof}
Let \(V:A\to A'F\) be a Stinespring isometry for \(\Lambda_A\), so that
\(
\Lambda_A(\cdot)=\Tr_F[V(\cdot)V^\dagger]
\).
Choose a purification \(\ket{\theta}_{ABE}\) of \(\rho_{AB}\). By
Remark~2.4(a) of~\cite{mazzola2026almost}, for this choice of
purification there exists an MSR extension
\(\rho_{A^nB^nE^n}\) of \(\rho_{A^nB^n}\) satisfying the MSR
permutation-invariance and support conditions. Define
\bb
\ket{\theta'}_{A'BFE}
:=
(V\otimes\id_{BE})\ket{\theta}_{ABE}.
\ee
Then \(\ket{\theta'}_{A'BFE}\) is a purification of \(\omega_{A'B}\).

Since \(\rho_{A^nB^n}\) is MSR almost i.i.d.\ along \(\rho_{AB}\), there exists an extension
\(\rho_{A^nB^nE^n}\) of \(\rho_{A^nB^n}\) such that:

\begin{enumerate}
\item \(\rho_{A^nB^nE^n}\) is permutation invariant under simultaneous permutations of the triples
\((A_i,B_i,E_i)\);

\item
\bb
\operatorname{supp}(\rho_{A^nB^nE^n})
\subseteq
\operatorname{span}
\mathcal V
\bigl(
\mathcal H_{ABE}^{\otimes n},
\ket{\theta}_{ABE}^{\otimes(n-r_n)}
\bigr).
\ee
\end{enumerate}

Define
\bb
\widetilde\omega_{A'^nB^nF^nE^n}
:=
(V^{\otimes n}\otimes\id_{B^nE^n})
\rho_{A^nB^nE^n}
(V^{\otimes n}\otimes\id_{B^nE^n})^\dagger .
\ee
Tracing out \(F^nE^n\) yields
\bb
\Tr_{F^nE^n}
\widetilde\omega_{A'^nB^nF^nE^n}
=
\omega_{A'^nB^n},
\ee
so \(\widetilde\omega_{A'^nB^nF^nE^n}\) is an extension of
\(\omega_{A'^nB^n}\).

Moreover,
\(
V^{\otimes n}\otimes\id_{B^nE^n}
\)
commutes with simultaneous permutations of the tensor factors. Hence
\(\widetilde\omega_{A'^nB^nF^nE^n}\) is permutation invariant under simultaneous permutations of the tuples
\((A'_i,B_i,F_i,E_i)\).

{It remains to verify the support condition (i.e.\ the condition~\eqref{eq:supp} of Definition~\ref{def:MSR}). Since
\(\widetilde\omega_{A'^nB^nF^nE^n}\) is obtained from
\(\rho_{A^nB^nE^n}\) by applying the isometry
\(V^{\otimes n}\otimes\id_{B^nE^n}\), it suffices to check how this
isometry acts on the spanning vectors of the original MSR defect space.

Let
\bb
\ket{\Psi}
=
U_\pi^{ABE}
\bigl(
\ket{\theta}_{ABE}^{\otimes(n-r_n)}
\otimes
\ket{\Omega}_{ABE}^{(r_n)}
\bigr)
\ee
be one such spanning vector, where \(U_\pi^{ABE}\) denotes the simultaneous
permutation of the \(n\) triples \((A_i,B_i,E_i)\), and
\(\ket{\Omega}_{ABE}^{(r_n)}\) is an arbitrary vector on the remaining
\(r_n\) triples. Since \(V^{\otimes n}\) acts identically on each \(A_i\),
we have the intertwining relation
\bb
(V^{\otimes n}\otimes\id_{B^nE^n})U_\pi^{ABE}
=
U_\pi^{A'BFE}
(V^{\otimes n}\otimes\id_{B^nE^n}),
\ee
where \(U_\pi^{A'BFE}\) denotes the simultaneous permutation of the
\(n\) quadruples \((A'_i,B_i,F_i,E_i)\). Therefore
\bb
\begin{aligned}
(V^{\otimes n}\otimes\id_{B^nE^n})\ket{\Psi}
&=
U_\pi^{A'BFE}
(V^{\otimes n}\otimes\id_{B^nE^n})
\bigl(
\ket{\theta}_{ABE}^{\otimes(n-r_n)}
\otimes
\ket{\Omega}_{ABE}^{(r_n)}
\bigr)  \\
&=
U_\pi^{A'BFE}
\bigl(
\ket{\theta'}_{A'BFE}^{\otimes(n-r_n)}
\otimes
\ket{\Omega'}_{A'BFE}^{(r_n)}
\bigr),
\end{aligned}
\ee
where
\(\ket{\theta'}_{A'BFE}:=(V\otimes\id_{BE})\ket{\theta}_{ABE}\), and
\[
\ket{\Omega'}_{A'BFE}^{(r_n)}
:=
(V^{\otimes r_n}\otimes\id_{B^{r_n}E^{r_n}})
\ket{\Omega}_{ABE}^{(r_n)} .
\]
By linearity, the image of every spanning vector of
\(
\operatorname{span}
\mathcal V
\bigl(
\mathcal H_{ABE}^{\otimes n},
\ket{\theta}_{ABE}^{\otimes(n-r_n)}
\bigr)
\)
lies in
\noindent
{\[
\operatorname{span}
\mathcal V
\bigl(
\mathcal H_{A'BFE}^{\otimes n},
\ket{\theta'}_{A'BFE}^{\otimes(n-r_n)}
\bigr).
\]}
 Since
\(\rho_{A^nB^nE^n}\) is supported on the original defect space, its
isometric image satisfies
\bb
\operatorname{supp}
(\widetilde\omega_{A'^nB^nF^nE^n})
\subseteq
\operatorname{span}
\mathcal V
\bigl(
\mathcal H_{A'BFE}^{\otimes n},
\ket{\theta'}_{A'BFE}^{\otimes(n-r_n)}
\bigr).
\ee
Therefore \(\widetilde\omega_{A'^nB^nF^nE^n}\) is an MSR extension of
\(\omega_{A'^nB^n}\) along the purification
\(\ket{\theta'}_{A'BFE}\), with the same defect size \(r_n\). Since
\(r_n=o(n)\), the sequence \(\bm\omega_{A'B}\) is MSR almost i.i.d.\
along \(\omega_{A'B}\).}

\end{proof}

\section{Auxiliary information-spectrum estimates}\label{app:aux-info-spec}
Proof of Lemma~\ref{lem:finite-register-stability-conditional-spectrum}
\begin{proof}
First, note that adding \(Z_l\) to the conditioning system cannot increase the
sup-conditional spectral entropy rate. This follows from the definition~\eqref{eq:cond-sup-spec} and the data processing for
the inf-spectral divergence rate (see e.g.~Proposition 4 of~\cite{bowen2006beyond}) under the partial trace over \(Z_l\),
\bb
\underline D\!\left(
\bm\gamma_{ABZ}
\middle\|
(\id_{A_l}\otimes\gamma_{B_lZ_l})_l
\right)
\ge
\underline D\!\left(
\bm\gamma_{AB}
\middle\|
(\id_{A_l}\otimes\gamma_{B_l})_l
\right).
\ee
Therefore
\bb
\overline S(A|BZ)_{\bm\gamma}
\le
\overline S(A|B)_{\bm\gamma}.
\ee

We next bound the effect of adjoining \(C_l\) to Alice's side. Applying
data processing for the inf-spectral divergence rate under the partial
trace over \(C_l\), we obtain
\bb
\underline D\!\left(
\bm\gamma_{ABCZ}
\middle\|
(\id_{A_lC_l}\otimes\gamma_{B_lZ_l})_l
\right) 
\ge
\underline D\!\left(
\bm\gamma_{ABZ}
\middle\|
(d_{C,l}\,\id_{A_l}\otimes\gamma_{B_lZ_l})_l
\right),
\ee
because
\(\Tr_{C_l}(\id_{A_lC_l}\otimes\gamma_{B_lZ_l})
=
d_{C,l}\,\id_{A_l}\otimes\gamma_{B_lZ_l}\).

{Let
\bb
L:=\limsup_{l\to\infty}\frac1l\log d_{C,l}
\ee
and set
\(
a_l:=\frac1l\log d_{C,l}
\), so that \(d_{C,l}=2^{l a_l}\). We claim that
\bb
\underline D\!\left(
\bm\gamma_{ABZ}
\middle\|
(d_{C,l}\,\id_{A_l}\otimes\gamma_{B_lZ_l})_l
\right)
\ge
\underline D\!\left(
\bm\gamma_{ABZ}
\middle\|
(\id_{A_l}\otimes\gamma_{B_lZ_l})_l
\right)
-
L .
\ee
Indeed, this follows from the elementary scaling behaviour of the
spectral divergence rate under multiplication of the second argument by
a scalar factor. More explicitly, multiplying the second argument at
blocklength \(l\) by \(d_{C,l}=2^{l a_l}\) changes the spectral
projection appearing in the definition with parameter \(\alpha\) into
the spectral projection for the unscaled second argument with parameter
\(\alpha+a_l\), since
\bb
2^{l\alpha}
d_{C,l}
(\id_{A_l}\otimes\gamma_{B_lZ_l})
=
2^{l(\alpha+a_l)}
(\id_{A_l}\otimes\gamma_{B_lZ_l}) .
\ee
Thus the scaling by \(d_{C,l}\) can decrease the inf-spectral divergence
rate by at most
\(
\limsup_{l\to\infty}a_l=L
\).
This proves the claimed bound.

Combining this with the preceding data-processing estimate gives
\bb
\begin{aligned}
\overline S(AC|BZ)_{\bm\gamma}
&=
-\underline D\!\left(
\bm\gamma_{ABCZ}
\middle\|
(\id_{A_lC_l}\otimes\gamma_{B_lZ_l})_l
\right)  \\
&\le
-\underline D\!\left(
\bm\gamma_{ABZ}
\middle\|
(d_{C,l}\,\id_{A_l}\otimes\gamma_{B_lZ_l})_l
\right)  \\
&\le
\overline S(A|BZ)_{\bm\gamma}
+
L .
\end{aligned}
\ee
Finally, since adding \(Z_l\) to the conditioning system cannot increase
the sup-conditional spectral entropy rate, we have
\(
\overline S(A|BZ)_{\bm\gamma}\le \overline S(A|B)_{\bm\gamma}
\).
Hence
\bb
\overline S(AC|BZ)_{\bm\gamma}
\le
\overline S(A|B)_{\bm\gamma}
+
\limsup_{l\to\infty}\frac1l\log d_{C,l},
\ee
as claimed.}

\end{proof}
\section{Proofs of technical lemmas for entanglement dilution}\label{app:ent-dil}
\subsection{Proof of Lemma~\ref{lem:cq-lifting}}
\label{app:cq}

\begin{proof}
Let \(\ket{\theta}_{ABE}\) be a purification of \(\rho_{AB}\). Since
\(\varrho_{RAB}\) is a cq-extension of \(\rho_{AB}\) with pure conditional
states, the vector
\bb
\ket{\eta}_{ABR}
:=
\sum_i \sqrt{p_i}\,\ket{\phi_i}_{AB}\ket{i}_R
\ee
is also a purification of \(\rho_{AB}\). Enlarging \(R\), if necessary, by
adding zero-probability classical labels, we may assume that there exists an
isometry \(W:E\to R\) such that
\bb
(\id_{AB}\otimes W)\ket{\theta}_{ABE}
=
\ket{\eta}_{ABR}.
\ee
Let \(\Delta_R\) denote the dephasing channel in the basis
\(\{\ket{i}_R\}_i\). Then
\bb
(\Delta_R\otimes \id_{AB})
\bigl(\ket{\eta}\!\bra{\eta}_{ABR}\bigr)
=
\varrho_{RAB}.
\ee

By Remark~2.4(a) of~\cite{mazzola2026almost}, for the chosen purification
\(\ket{\theta}_{ABE}\) there exists an MSR extension
\(\omega_{A^nB^nE^n}\) of \(\rho_{A^nB^n}\) which is invariant under
simultaneous permutations of the triples \((A_j,B_j,E_j)\) and satisfies
\bb
\operatorname{supp}\omega_{A^nB^nE^n}
\subseteq
\operatorname{span}{\mathcal V}
\bigl(
\mathcal H_{ABE}^{\otimes n},
\ket{\theta}_{ABE}^{\otimes(n-r_n)}
\bigr).
\ee
Define
\bb
\tau^{(n)} \equiv \tau_{A^nB^nR^n}
:=
(\id_{A^nB^n}\otimes W^{\otimes n})
\omega_{A^nB^nE^n}
(\id_{A^nB^n}\otimes W^{\otimes n})^\dagger .
\ee
Then \(\tau^{(n)}\) is invariant under simultaneous permutations of the
triples \((A_j,B_j,R_j)\), has marginal \(\rho_{A^nB^n}\) on \(A^nB^n\), and
its support is contained in
\bb
\operatorname{span}{\mathcal V}
\bigl(
\mathcal H_{ABR}^{\otimes n},
\ket{\eta}_{ABR}^{\otimes(n-r_n)}
\bigr).
\ee
Thus \((\tau^{(n)})_n\) is MSR almost i.i.d.\ along the pure reference state
\(\ket{\eta}\!\bra{\eta}_{ABR}\), with defect sizes \(r_n\).

Applying Lemma~\ref{lem:msr-stability-local-channels}
once more, now to the local tensor-power dephasing channel
\(\Delta_R^{\otimes n}\) on the $R$ systems, we obtain that the sequence
\((\varrho_{R^nA^nB^n})_n\), where
\bb
\varrho_{R^nA^nB^n}
:=
(\Delta_R^{\otimes n}\otimes\Id_{A^nB^n})
(\tau_{A^nB^nR^n}),
\ee
is MSR almost i.i.d.\ along \(\varrho_{RAB}\), with defect sizes at most
\(r_n\).
Moreover,
\(\Tr_{R^n}\varrho_{R^nA^nB^n}=\rho_{A^nB^n}\): since \(\Delta_R^{\otimes n}\) is trace-preserving, dephasing the
\(R^n\)-register does not change the marginal on \(A^nB^n\). Hence
\bb
\Tr_{R^n}\varrho_{R^nA^nB^n}
=
\Tr_{R^n}\tau_{R^nA^nB^n}
=
\Tr_{E^n}\omega_{A^nB^nE^n}
=
\rho_{A^nB^n}.
\ee

It remains to refine \(\varrho^{(n)}\) into a cq-extension with pure
conditional states on \(A^nB^n\), at the cost of a subexponential classical
register. Choose a spectral decomposition
\bb
\tau^{(n)}
=
\sum_{\alpha\in\mathcal A_n}
\lambda_\alpha^{(n)}
\ket{\Psi_\alpha^{(n)}}\!\bra{\Psi_\alpha^{(n)}},
\ee
where \(\lambda_\alpha^{(n)}>0\) and the vectors
\(\{\ket{\Psi_\alpha^{(n)}}\}_{\alpha\in\mathcal A_n}\) form an orthonormal
basis of \(\operatorname{supp}\tau^{(n)}\). Since \(\tau^{(n)}\) is supported
on the defect space associated with \(\ket{\eta}_{ABR}\), Lemma~\ref{lem:subexp-defect}
gives
\bb
|\mathcal A_n|
=
\operatorname{rank}\tau^{(n)}
\le
\dim M_n(\eta,r_n)
\le
\binom{n}{r_n}d^{r_n},
\qquad
d:=\dim(\mathcal H_A\otimes\mathcal H_B\otimes\mathcal H_R).
\ee
Hence
\bb
\log |\mathcal A_n|
\le
\log\binom{n}{r_n}+r_n\log d
=
o(n),
\ee
because \(r_n=o(n)\).

Let \(\widetilde R_n\) be a classical register with orthonormal basis
\(\{\ket{\alpha}_{\widetilde R_n}:\alpha\in\mathcal A_n\}\). Thus
\bb
\log\dim\widetilde R_n
=
\log|\mathcal A_n|
=
o(n).
\ee
For each \(\alpha\in\mathcal A_n\), write
\bb
\ket{\Psi_\alpha^{(n)}}_{R^nA^nB^n}
=
\sum_{i^n}
\ket{i^n}_{R^n}\otimes
\ket{\psi_{\alpha,i^n}^{(n)}}_{A^nB^n}.
\ee
Define
\bb
q_{\alpha,i^n}^{(n)}
:=
\lambda_\alpha^{(n)}
\bigl\|\psi_{\alpha,i^n}^{(n)}\bigr\|^2,
\ee
and, whenever \(q_{\alpha,i^n}^{(n)}>0\),
\bb
\ket{\phi_{\alpha,i^n}^{(n)}}_{A^nB^n}
:=
\bigl\|\psi_{\alpha,i^n}^{(n)}\bigr\|^{-1}
\ket{\psi_{\alpha,i^n}^{(n)}}_{A^nB^n}.
\ee
Now define
\bb
\widehat\varrho^{(n)}_{\widetilde R_nR^nA^nB^n}
:=
\sum_{\substack{\alpha\in\mathcal A_n,\ i^n:\\
q_{\alpha,i^n}^{(n)}>0}}
q_{\alpha,i^n}^{(n)}
\ket{\alpha}\!\bra{\alpha}_{\widetilde R_n}
\otimes
\ket{i^n}\!\bra{i^n}_{R^n}
\otimes
\ket{\phi_{\alpha,i^n}^{(n)}}\!
\bra{\phi_{\alpha,i^n}^{(n)}}_{A^nB^n}.
\ee
By construction,
\bb
\Tr_{\widetilde R_n}\widehat\varrho^{(n)}_{\widetilde R_nR^nA^nB^n}
=
\varrho^{(n)}_{R^nA^nB^n},
\qquad
\Tr_{\widetilde R_nR^n}\widehat\varrho^{(n)}_{\widetilde R_nR^nA^nB^n}
=
\rho_{A^nB^n}.
\ee
Therefore \(\widehat\varrho^{(n)}\) is a pure-state cq-extension of
\(\rho_{A^nB^n}\), the additional classical register satisfies
\(\log\dim\widetilde R_n=o(n)\), and the marginal sequence
\((\varrho^{(n)}_{R^nA^nB^n})_n\) is MSR almost i.i.d.\ along
\(\varrho_{RAB}\). This proves the claim.
\end{proof}}

\subsection{Proof of Lemma~\ref{lem:MSR-blocking-remainder}}\label{app:msr-block}
\begin{proof}
Let \(n=lm+s\), with \(0\le s<m\), and fix a purification
\(\ket{\theta}_{ABE}\) of \(\rho_{AB}\). By Remark~2.4(a)
of~\cite{mazzola2026almost}, for this purification there exists an
extension \(\omega_{A^nB^nE^n}\) of \(\rho_{A^nB^n}\), invariant under
permutations of the \(n\) triples \((A_j,B_j,E_j)\), such that
\bb
\supp\omega_{A^nB^nE^n}
\subseteq
\operatorname{span}
\mathcal V\!\left(
\mathcal H_{ABE}^{\otimes n},
\ket{\theta}_{ABE}^{\otimes(n-r_n)}
\right).
\ee
We view the first \(lm\) triples as \(l\) consecutive blocks of size
\(m\), and the last \(s\) triples as a remainder.

We claim that the same extension, viewed with this tensor-product
decomposition, satisfies the desired block support condition. It suffices
to check this on the generating vectors of the MSR defect space. Such a
generating vector is obtained from
\(\ket{\theta}_{ABE}^{\otimes(n-r_n)}\) by placing an arbitrary vector on
the remaining \(r_n\) one-site tensor factors and then applying a
permutation of the \(n\) sites.

Consider one such generating vector. Among the first \(lm\) tensor
factors, call an \(m\)-block bad if it contains at least one unrestricted
one-site factor. If the number of bad blocks is \(b\), then
\(b\le r_n\) and \(b\le l\). Hence
\bb
b\le t_{l,s}:=\min\{r_n,l\}=\min\{r_{lm+s},l\}.
\ee
Every good \(m\)-block is exactly equal to
\(
\ket{\theta}_{ABE}^{\otimes m}=\ket{\theta_m}_{A^mB^mE^m}
\).
The final \(s\) tensor factors are placed in the unrestricted remainder
system \(\mathcal H_{ABE}^{\otimes s}\). Since \(b\le t_{l,s}\), the
vector has at least \(l-t_{l,s}\) blocks fixed to
\(\ket{\theta_m}\). Equivalently, by allowing additional reference blocks
to be counted among the unrestricted blocks, it lies in
\bb
\operatorname{span}
\mathcal V\!\left(
(\mathcal H_{ABE}^{\otimes m})^{\otimes l},
\ket{\theta_m}^{\otimes(l-t_{l,s})}
\right)
\otimes
\mathcal H_{ABE}^{\otimes s}.
\ee
Taking the span over all generating vectors gives the desired support
inclusion for \(\omega_{A^nB^nE^n}\), viewed as a state on
\((A^mB^mE^m)^lA^sB^sE^s\).

Since \(n=lm+s\), with \(m\) and \(s\) fixed, we have
\bb
\frac{t_{l,s}}{l}
\le
\frac{r_{lm+s}}{l}
=
\frac{r_{lm+s}}{lm+s}\,(m+s/l)
\longrightarrow 0
\quad\text{as }l\to\infty.
\ee
Hence \(t_{l,s}=o(l)\).

Finally, \(\omega_{A^nB^nE^n}\) is invariant under all permutations of the
\(n\) one-site triples. In particular, it is invariant under those
permutations which exchange the \(l\) consecutive blocks of size \(m\)
and leave the remainder fixed. Therefore, viewed as a state on
\((A^mB^mE^m)^lA^sB^sE^s\), it is invariant under permutations of the
\(l\) blocks of size \(m\).
\end{proof}
\subsection{Proof of Corollary~\ref{cor:cq-lifting-remainder}}\label{app:lift-rem}
\begin{proof}
Fix \(m\in\mathbb N_+\) and \(s\in\{0,\ldots,m-1\}\), and write
\(n=lm+s\). Let \(\ket{\theta}_{ABE}\) be a purification of
\(\rho_{AB}\), and set
\bb
\ket{\theta_m}_{A^mB^mE^m}
:=
\ket{\theta}_{ABE}^{\otimes m}.
\ee
By Lemma~\ref{lem:MSR-blocking-remainder}, for each \(l\) there exists an
extension of \(\rho_{A^{lm+s}B^{lm+s}}\) such that
\bb
\operatorname{supp}\omega
\subseteq
\operatorname{span} {\mathcal V}\!\left((\mathcal H_{ABE}^{\otimes m})^{\otimes l},
|\theta_m\rangle^{\otimes(l-t_{l,s})}\right)
\otimes \mathcal H_{ABE}^{\otimes s},
\ee
where \(t_{l,s}=o(l)\), and such that \(\omega\) is invariant under
permutations of the \(l\) blocks of size \(m\).

Let
\bb
\ket{\eta_m}_{A^mB^mR}
:=
\sum_i \sqrt{p_i^{(m)}}\,
\ket{\phi_i^{(m)}}_{A^mB^m}\ket{i}_R .
\ee
This is a purification of \(\rho_{AB}^{\otimes m}\), while
\(\ket{\theta_m}\) is another purification of the same state. Enlarging
\(R\), if necessary, by adding zero-probability classical labels, we may
assume that there is an isometry
\(
W_m:E^m\to R
\)
such that
\bb
(\id_{A^mB^m}\otimes W_m)\ket{\theta_m}
=
\ket{\eta_m}.
\ee
Let \(\Delta_R\) denote the dephasing channel in the basis
\(\{\ket{i}_R\}_i\). Then
\bb
(\Delta_R\otimes\Id_{A^mB^m})
\ket{\eta_m}\!\bra{\eta_m}
=
\varrho^{(m)}_{RA^mB^m}.
\ee
Let \(U_l\) denote the isometry which applies \(W_m\) to the \(E^m\)-part
of each of the \(l\) blocks and acts as the identity on the remaining
systems \(A^sB^sE^s\). Define
\bb
\tau^{(l,s)}
:=
U_l\omega U_l^\dagger .
\ee
By the support inclusion obtained from
Lemma~\ref{lem:MSR-blocking-remainder}, together with the identity
\[
(\id_{A^mB^m}\otimes W_m)\ket{\theta_m}=\ket{\eta_m},
\]
the state \(\tau^{(l,s)}\) is supported on
\bb
\operatorname{span}
\mathcal V\!\left(
(\mathcal H_{A^mB^mR})^{\otimes l},
\ket{\eta_m}^{\otimes(l-t_{l,s})}
\right)
\otimes
\mathcal H_{ABE}^{\otimes s}.
\ee
Moreover, it is invariant under permutations of the \(l\) blocks. Applying
\(\Delta_R^{\otimes l}\) to the \(R^l\)-system gives
\bb
\zeta^{(l,s)}_{R^l(A^mB^m)^lA^sB^sE^s}
:=
(\Delta_R^{\otimes l}\otimes\id_{(A^mB^m)^lA^sB^sE^s})
(\tau^{(l,s)}).
\ee
Since \(\Delta_R\) is a local channel on the \(R\)-system, the same
local-channel stability argument as in Lemma~\ref{lem:cq-lifting}, now
applied to the \(m\)-blocks, shows that the marginal sequence on
\(R^l(A^mB^m)^l\), obtained by tracing out the bounded remainder
\(A^sB^sE^s\), is MSR almost i.i.d.\ along
\(\varrho^{(m)}_{RA^mB^m}\), with defect sizes \(t_{l,s}=o(l)\).

It remains to produce a pure-state cq-extension of the full state
\(\rho_{A^{lm+s}B^{lm+s}}\) and to bound the additional classical
overhead. Choose a spectral decomposition
\bb
\tau^{(l,s)}
=
\sum_{\alpha\in\mathcal A_{l,s}}
\lambda_\alpha^{(l,s)}
\ket{\Psi_\alpha^{(l,s)}}\!
\bra{\Psi_\alpha^{(l,s)}},
\ee
with \(\lambda_\alpha^{(l,s)}>0\). Since \(\tau^{(l,s)}\) is supported on
the block-defect space above, tensored with the bounded remainder system,
its rank is at most
\bb
\binom{l}{t_{l,s}}
d_m^{\,t_{l,s}}
d_{\rm rem},
\ee
where
\[
d_m:=\dim(\mathcal H_{A^mB^mR})
\qquad\text{and}\qquad
d_{\rm rem}:=\dim(\mathcal H_{ABE}^{\otimes s}).
\]
Since \(m\) and \(s\) are fixed and \(t_{l,s}=o(l)\), we have
\bb
\log\operatorname{rank}\tau^{(l,s)}
\le
\log\binom{l}{t_{l,s}}
+
t_{l,s}\log d_m
+
\log d_{\rm rem}
=
o(l).
\ee

For each \(\alpha\), expand
\bb
\ket{\Psi_\alpha^{(l,s)}}
=
\sum_{i^l}
\ket{i^l}_{R^l}
\otimes
\ket{\psi_{\alpha,i^l}^{(l,s)}}_{A^{lm+s}B^{lm+s}E^s},
\ee
where the \(E^s\)-system is included in the second tensor factor. Define
\bb
q_{\alpha,i^l}^{(l,s)}
:=
\lambda_\alpha^{(l,s)}
\bigl\|\psi_{\alpha,i^l}^{(l,s)}\bigr\|^2 .
\ee
Whenever \(q_{\alpha,i^l}^{(l,s)}>0\), set
\bb
\ket{\chi_{\alpha,i^l}^{(l,s)}}_{A^{lm+s}B^{lm+s}E^s}
:=
\bigl\|\psi_{\alpha,i^l}^{(l,s)}\bigr\|^{-1}
\ket{\psi_{\alpha,i^l}^{(l,s)}} .
\ee
Tracing out \(E^s\), we obtain a state
\bb
\chi_{\alpha,i^l,A^{lm+s}B^{lm+s}}^{(l,s)}
:=
\Tr_{E^s}
\ket{\chi_{\alpha,i^l}^{(l,s)}}\!
\bra{\chi_{\alpha,i^l}^{(l,s)}} .
\ee
Choose a pure-state decomposition
\bb
\chi_{\alpha,i^l,A^{lm+s}B^{lm+s}}^{(l,s)}
=
\sum_j
p_{j|\alpha,i^l}^{(l,s)}
\ket{\xi_{\alpha,i^l,j}^{(l,s)}}\!
\bra{\xi_{\alpha,i^l,j}^{(l,s)}}_{A^{lm+s}B^{lm+s}} .
\ee
Since the rank of
\(\chi_{\alpha,i^l,A^{lm+s}B^{lm+s}}^{(l,s)}\) is at most
\(\dim\mathcal H_E^{\otimes s}\), and this dimension is independent of
\(l\), the decomposition may be chosen with a bounded number of nonzero
terms. Absorbing this bounded factor into the classical overhead, define
\(\widetilde R_l\) to be a classical register indexing the pairs
\((\alpha,j)\). Then
\bb
\log\dim\widetilde R_l
\le
\log\operatorname{rank}\tau^{(l,s)}
+
\log \dim\mathcal H_E^{\otimes s}
=
o(l).
\ee

Now define
\bb
\widehat\varrho^{(l,s)}_{\widetilde R_lR^lA^{lm+s}B^{lm+s}}
:=
\sum_{\substack{\alpha,i^l,j:\\ q_{\alpha,i^l}^{(l,s)}
p_{j|\alpha,i^l}^{(l,s)}>0}}
q_{\alpha,i^l}^{(l,s)}
p_{j|\alpha,i^l}^{(l,s)}
\ket{\alpha,j}\!\bra{\alpha,j}_{\widetilde R_l}
\otimes
\ket{i^l}\!\bra{i^l}_{R^l}
\otimes
\ket{\xi_{\alpha,i^l,j}^{(l,s)}}\!
\bra{\xi_{\alpha,i^l,j}^{(l,s)}}_{A^{lm+s}B^{lm+s}} .
\ee
This is a pure-state cq-extension of \(\rho_{A^{lm+s}B^{lm+s}}\). Indeed,
tracing over \(\widetilde R_lR^l\) gives
\bb
\Tr_{\widetilde R_lR^l}
\widehat\varrho^{(l,s)}
=
\Tr_{R^lE^s}\zeta^{(l,s)}
=
\rho_{A^{lm+s}B^{lm+s}},
\ee
where the last equality follows because dephasing the \(R^l\)-system does
not change the marginal on \(A^{lm+s}B^{lm+s}E^s\), and
\(\omega\) is an extension of \(\rho_{A^{lm+s}B^{lm+s}}\).

Furthermore, tracing out \(\widetilde R_l\) and the final
\(A^sB^s\)-systems leaves the same marginal on \(R^l(A^mB^m)^l\) as the
one obtained from \(\zeta^{(l,s)}\). Hence this marginal sequence is MSR
almost i.i.d.\ along \(\varrho^{(m)}_{RA^mB^m}\), with defect sizes
\(t_{l,s}=o(l)\). This proves the corollary.

\end{proof}
\end{document}